\lstdefinelanguage{ABS}{keywords={
    local, view, call, specifies,
    assert,this,dyndelta,new,data,type,def,case,of,cog, class,interface,extends,implements,if,then,else,await,get, Fut,return,skip,while,module, import, export, from, suspend, 
    delta,adds,modifies,removes,original,productline,features,
    core,corefeatures,optionalfeatures,after,when,product,hasAttribute,
    hasMethod,root,extension,group,allof,oneof,require,exclude,original,
    ifin,ifout,opt,null,critical,port,rebind,duration,deadline,now,
    %% Communication view
    global,  view, result, local, specifies, grammar, send, receive},%
  sensitive=true, comment=[l]{//},
  morecomment=[s]{/*}{*/},
  morestring=[b]"}
\lstdefinestyle{absstyle}{
language=ABS,columns=fullflexible,
 		   mathescape=true,%
 		   showstringspaces=false,%
keywordstyle=\tt, %\sl\sffamily,  %\bf\sffamily,
commentstyle=\sl\sffamily,%
basicstyle=\footnotesize\tttfamily,
inputencoding=latin1, % i would prefer utf8
extendedchars,xleftmargin=2em
}
\newcommand{\Abs}[1]{\lstinline[language=ABS,columns=fullflexible,mathescape=true,keywordstyle=\tt,basicstyle=\footnotesize\ttfamily]|#1|} %before \sl\sffamily
\newcommand{\bigfract}[2]{\frac{^{\textstyle #1}}{_{\textstyle #2}}}
\newcommand{\rulenamex}[1]{\mbox{\scriptsize\sc(#1)}}
\def \mathrule #1#2#3{\begin{array}{l} 
                       {\rulenamex{#1}}
                       \\ \bigfract{#2}{#3}
                      \end{array}}
\newcommand{\subst}[2]{[\raisebox{.5ex}{$#1$}  /
                        \raisebox{-.5ex}{$#2$} ]}
\newcommand{\repl}[2]{(\raisebox{.5ex}{$#1$}  \curvearrowright
                        \raisebox{-.5ex}{$#2$} )}
\newcommand{\extr}[1]{\lceil#1\rceil}
\newcommand{\DFfABS}{{\tt DF4ABS}}
\renewcommand{\bar}[1]{\overline{#1}}
\newcommand{\D}{{\tt D}}
\newcommand{\n}{{\tt n}}
\newcommand{\x}{{x}}
\newcommand{\xbar}{\bar{\x}}
\newcommand{\class}{{\tt class}}
\newcommand{\return}{{\tt return}}
\newcommand{\this}{{\tt this}}
\newcommand{\destiny}{{\tt destiny}}
\newcommand{\get}{{\tt get}}
\newcommand{\await}{{\tt await}}
\newcommand{\seqpoint}{\!{\large \mbox{\tt .}}\!}
\newcommand{\X}{X}
\newcommand{\Y}{Y}
\newcommand{\Z}{Z}
\newcommand{\lred}[1]{\stackrel{#1}{\longrightarrow}}
\renewcommand{\aa}{{\tt w}} 
\newcommand{\async}{\mbox{\tt !}}
\newcommand{\parop}{{\parallel}}
\newcommand{\addition}{\binampersand}
\newcommand{\T}{T}
\newcommand{\Tbar}{\bar{\T}}
\newcommand{\fut}{\textit{fut}}
\newcommand{\fields}[1]{\fieldsN(#1)}
\newcommand{\fieldsN}{{\it fields}}
\newcommand{\parameters}[1]{\parametersN(#1)}
\newcommand{\parametersN}{{\it param}}
\newcommand{\types}[1]{\typesN(#1)}
\newcommand{\typesN}{{\it types}}
\newcommand{\fclass}[1]{\fclassN(#1)}
\newcommand{\fclassN}{{\it class}}
\newcommand{\dom}{\text{dom}}
\newcommand{\gr}[1]{{\it cog\_names}(#1)}
\newcommand{\cog}{{\it cog}}
\newcommand{\invoc}{{\it invoc}}
\newcommand{\objA}{\it{ c}}
\newcommand{\obj}{{\it c}}
\newcommand{\objb}{{\it c}'}
\newcommand{\wt}[1]{\widetilde{#1}}
\newcommand{\fRec}[2]{#1 \leadsto #2}
\newcommand{\interface}[3]{#1(#2) \rightarrow #3}
\newcommand{\sem}[1]{[\![ #1 ]\!]}
\newcommand{\slam}[1]{\raisebox{-.2ex}{\mbox{\large {\tt [}}} #1 \raisebox{-.2ex}{\mbox{\large {\tt ]}}}}
\newcommand{\contract}[1]{{\it unsync}(#1)}
\newcommand{\rtcontract}[1]{{\it rt\_unsync}(#1)}
\newcommand{\ct}{\mbox{\sc ct}}
\newcommand{\cct}{\mbox{\sc cct}}
\newcommand{\eqdef}{\stackrel{\it def}{=}}
\newcommand{\frr}{\mathbbm{r}}
\newcommand{\frs}{\mathbbm{s}}
\newcommand{\frx}{\mathbbm{x}}
\newcommand{\frz}{\mathbbm{z}}
\newcommand{\cntcp}{\mathbbm{cp}}
\newcommand{\concntc}{\mathbbm{k}}
\newcommand{\mcntc}{\mathbbm{C}}
\newcommand{\abstractsemantics}[1]{[\![ #1 ]\!]_{[n]}}
\newcommand{\act}{\mbox{\sc act}}
\newcommand{\names}[1]{\nt{names}(#1)}
\renewcommand{\rfloor}{\parallel}
\newcommand{\lafsa}{{\cal L}}
\newcommand{\fatcomma}{\mbox{\large {\bf ,}}\,}
\newcommand{\pairl}[2]{\langle #1 \fatcomma #2 \rangle}
\renewcommand{\emptyset}{\varnothing}
\newcommand{\lt}{\mathfrak{L}}
\newcommand{\true}{{\tt true}}
\newcommand{\false}{{\tt false}}
\newcommand{\Bool}{{\tt Bool}}
\newcommand{\coreABS}{\mbox{\tt core\hspace{2pt}ABS}}
\newcommand{\ABS}{\mbox{{\tt ABS}}\xspace}
\newcommand{\key}[1]{\mbox{\ttfamily\bfseries #1}}
\newcommand{\many}[1]{\overline{#1}}
\newcommand{\rulenamepre}[1]{{\footnotesize\scshape (#1)}}
\newcommand{\rulename}[1]{\ifmmode \mbox{\rulenamepre{#1}}\else\rulenamepre{#1}\fi}
\newcommand{\feval}[2]{[\![#1]\!]_{#2}}
\newcommand{\nt}[1]{\textit{#1}} 
\newcommand{\Empty}{\varepsilon}
\newcommand{\sep}{\mid}
\newcommand{\ntyperule}[3]{ 
  \begin{array}{c} 
    \textsc{\scriptsize ({#1})} \\ 
    #2 \\ 
    \hline
    #3 
  \end{array} }
\newcommand{\nredrule}[2]{ 
  \begin{array}{c} 
    \textsc{\scriptsize ({#1})} \\ 
    #2  
  \end{array}} 
\def\myttsize{\fontsize{8}{9}}
\newcommand{\ordermut}[1]{\mathfrak{o}_{#1}}
\newcommand{\CP}{\mathfrak{C}}
\newcommand{\CD}{\mathfrak{D}}
\newcommand{\redmut}{\rightarrow_{\tt mut}}
\newcommand{\fpar}{\parop}
\newcommand{\fsimpl}[1]{\llparenthesis#1\rrparenthesis}
\newcommand{\undefined}{\bot}
\newif \ifIFM
\begin{document}

\title{A Framework for Deadlock Detection in {\coreABS}\thanks{Partly funded by the
  EU project FP7-610582 ENVISAGE:
    Engineering Virtualized Services.}}
%\subtitle{Do you have a subtitle?\\ If so, write it here}

%\titlerunning{Short form of title}        % if too long for running head

\author{Elena Giachino \and  Cosimo Laneve \and  Michael Lienhardt }

%\authorrunning{Short form of author list} % if too long for running head

\institute{{Department of Computer Science and Engineering, University of Bologna -- INRIA Focus Team, Italy}}
% \institute{F. Author \at
%               first address \\
%               Tel.: +123-45-678910\\
%               Fax: +123-45-678910\\
%               \email{fauthor@example.com}           %  \\
% %             \emph{Present address:} of F. Author  %  if needed
%            \and
%            S. Author \at
%               second address
% }

\date{Received: date / Accepted: date}
% The correct dates will be entered by the editor

\maketitle

 \begin{abstract}
We present a framework for statically detecting deadlocks in 
a concurrent object-oriented language with asynchronous method calls and cooperative scheduling of method activations.
Since this language features recursion and dynamic resource creation,
deadlock detection is extremely complex and state-of-the-art solutions either 
give imprecise answers or do not scale.

In order to augment precision and scalability we propose a modular framework
that allows several techniques to be combined. The basic component of the framework is
a front-end inference algorithm that extracts abstract behavioural descriptions of methods, 
called contracts, which retain resource dependency information.
This component is integrated with a number of possible different back-ends that analyse contracts and derive deadlock information. 
As a proof-of-concept, we discuss two such back-ends: 
(i) an evaluator that computes a fixpoint semantics and (ii) an evaluator using
abstract model checking.
%
%We expose the theory and the prototype implementation of our framework. The prototype
%is validated on an industrial case study based on the Fredhopper Access Server (FAS)
%developed by SDL Fredhoppper. %, which is
%verified to be deadlock-free.
\end{abstract}

 \section{Introduction}\label{sec:intro}
 Modern systems are designed to support a high degree of parallelism by letting
as many system components as possible operate concurrently. When such systems 
also
exhibit a high degree of resource and data sharing then
deadlocks represent an insidious and recurring threat. 
In particular, deadlocks arise as a consequence of exclusive resource access and circular wait for accessing resources.
A standard example is when two processes are exclusively holding a
different resource and are requesting access to the resource held by the other.
That is, the correct termination of each of the two process activities 
\emph{depends} on the termination of the other. The presence of a \emph{circular
dependency} makes termination impossible.

Deadlocks may be particularly hard to detect in systems with unbounded (mutual) 
recursion and dynamic resource creation. A paradigm case is 
an adaptive system that creates an unbounded number of processes such as server applications. 
In these systems, the interaction protocols are extremely complex and state-of-the-art solutions either give imprecise answers or do not 
scale -- see Section~\ref{sec:relatedworks} and, for instance,~\cite{Naik2009} and the references therein.

In order to augment precision and scalability we propose a modular framework that 
allows several techniques to be combined. 
We meet scalability requirement by designing a front-end inference system that 
automatically extracts abstract behavioural descriptions 
pertinent to deadlock analysis, called \emph{contracts}, from code. The inference
system is \emph{modular} because it (partially) supports separate inference of 
modules. 
To meet precision of contracts' analysis, as a proof-of-concept we
define and implement two different techniques:
(i) an evaluator that computes a fixpoint semantics and (ii) an evaluator using 
abstract model checking.

Our framework targets {\coreABS}~\cite{johnsen10fmco}, which is an abstract, executable, object-oriented modelling language
with a formal semantics, targeting distributed
systems. In {\coreABS}, method invocations are asynchronous:
the caller continues
after the invocation and the called code runs on a different 
task. Tasks are cooperatively scheduled, that is 
there is a notion of group of objects, called \emph{cog}, and
there is at most one active task 
at each time per cog. The active task explicitly returns the
control in order to let other tasks progress.
The synchronisation between
the caller and the called methods is performed
when the result is strictly necessary~\cite{Creol,ABCL1,Eiffel}. 
Technically, the decoupling of method invocation and the returned 
value is realised using \emph{future variables} (see~\cite{futures} and the references in there), which are pointers to values that may be not available yet.
Clearly, the access to values of future variables may require waiting for
the value to be returned. We discuss the syntax and the semantics of
{\coreABS}, in Section~\ref{sec:formal}.

Because of the presence of
explicit synchronisation operations,  the analysis of deadlocks in {\coreABS} 
is more fine-grained than in thread-based languages (such as {\tt Java}). 
However, as usual with (concurrent) programming languages,  
analyses are hard and time-consuming because most part of the code is irrelevant 
for the properties one intends to derive. 
%For instance, in case of deadlock and synchronisation behaviours, local data 
%and computations do not play any substantial role.
%
For this reason, in Section~\ref{sec.FJg-contracts}, we  design an inference system that
\emph{automatically extracts contracts} from {\coreABS} code. 
These contracts are similar to those
ranging from languages for session types~\cite{GH05} to process contracts~\cite{LP}
and to calculi of processes as Milner's CCS or pi-calculus~\cite{CCS,PIC}.
The inference system mostly collects method behaviours and uses constraints to 
enforce consistencies among behaviours. Then a 
standard semiunification technique is used for solving the set of generated constraints.

Since our inference system addresses a language with asynchronous method invocations,
it is possible that a method triggers behaviours that will last \emph{after} its
lifetime (and therefore will contribute to \emph{future} deadlocks). In order to 
support a more precise analysis,
we  split contracts of methods in \emph{synchronised} and
\emph{unsynchronised contracts}, with the intended meaning that the formers collect 
the invocations that are explicitly synchronised in the method body and the
latter ones collect the other invocations. 
%
%behaviours of the formers
%are not mixed with the behaviours of the latter ones.

The current release of the inference system does not cover the full range of features
of {\coreABS}.  In 
Section~\ref{sec.restrictions} we discuss the restrictions of {\coreABS} and
the techniques that may be used to remove these restrictions.

Our contracts feature recursion and resource creation; therefore their
underlying models are infinite
sta\-tes and their analysis cannot be exhaustive. We propose two techniques for
analysing contracts (and to show the modularity of our framework). 
The first one, which is discussed in 
Section~\ref{sec.contractanalysis}, is a fixpoint technique on models with a limited capacity of 
name creation. 
This entails fixpoint existence and finiteness of models. While we lose  
precision, our technique is sound (in some cases, this technique may signal false
positives). 
The second technique, which is detailed in Section~\ref{sec.mutanalysis}, is an abstract model checking that evaluates the contract
program up-to some point, which  is possible to determine by analysing 
the recursive patterns of the program. This technique is precise when the 
recursions are linear, while it is over-approximating in general. 

We have prototyped an implementation of our framework, called the {\DFfABS}
tool and, in Section~\ref{sec:sdatool}, we assess the
precision and performance  of the prototype. In particular, we have applied it to
an industrial case study that is based on
the Fredhopper Access Server (FAS) developed by SDL Fredhopper\footnote{\url{http://sdl.com/products/fredhopper/}}. 
It is worth to recall that, because of the
modularity of {\DFfABS}, the current analyses techniques may be
integrated and/or replaced by other ones. We discuss this idea in 
Section~\ref{sec:conclusion}.

\medskip

\paragraph{Origin of the material.} The basic ideas of this article have appeared in 
conference proceedings. In particular, the contract language and (a simplified form of) 
the inference system have been introduced 
in~\cite{GiachinoL11,GL2013a}, while the fixpoint analysis technique  has been explored in~\cite{GL2013a} and an introduction to the abstract model checking technique can be found in~\cite{GiachinoL13}, while the details are in~\cite{GL2014}. This article is a thoroughly revised and enhanced version of~\cite{GL2013a} that presents the whole framework in a uniform setting and includes the full proofs of all the results. A more detailed comparison with other related work is postponed to 
Section~\ref{sec:relatedworks}.

\section{The language {\coreABS}}\label{sec:formal}
%!TEX root = SoSyM.tex

The syntax and the semantics (of the concurrent object level) of
{\coreABS} are defined in the following two subsections; the
third subsection is devoted to the discussion of examples, and the 
last one to the definition of deadlock. 
In this contribution we overlook the functional level of {\coreABS} that defines
data types and functions because their analysis can be performed with techniques that
may (easily) complement those discussed in this paper (such as data-flow analysis).
Details of {\coreABS}, its semantics and its standard type system can be also found 
in~\cite{johnsen10fmco}.

\subsection{\em Syntax}
Figure~\ref{fig:core:lan} displays  {\coreABS} syntax,
\begin{figure*}[t]
{\footnotesize
%\begin{minipage}[t]{.65\textwidth}
\mysyntax{\mathmode}{
\simpleentry P  [] %\vect{D}\ \vect{F}\ \vect{I}\ 
	\vect{I}\ \vect{C}\ \wrap{\vect{T\ x\ssemi}\ s} [program]
%\\                                           
%\simpleentry Dl [] D \bnfor F\bnfor I\bnfor C  [declaration]
\\                                                              
\simpleentry T  [] \name D \bnfor  \name{Fut} \arglang{T} 
\bnfor \name I                   
                           [type]
\\
%\simpleentry D  [] \ddata{\name D \arglang{\vect V}}{\vect{\name{Co}~[~(\vect T)~]}}
%%[|\vect{\name{Co}[(\vect T)}]}               
%[data type]\\
%\simpleentry F  [] \dfun{T}{\name{f}~[\arglang{\vect T}]}{\vect{T\ x}}{e}                                           [function]\\
\simpleentry I  [] \dsinterface{I}{\vect{S \ssemi}}                                                                 [interface]\\
\simpleentry S  [] T\ \name m(\vect{T\ x })                                                                   [method signature]\\
\simpleentry C  [] \dsclass{C}(\vect{T\ x}){\ [\nimplements\ \vect{\name I}]\ 
\wrap{\vect{T\ x\ssemi}\ \vect{M}}}    [class]\\
%\simpleentry Fl [] T\ x                                                                                  [field declaration]\\
\simpleentry M  [] S\wrap{\vect{T\ x\ssemi}\ s}                                                               [method definition]\\
\simpleentry s []
      \eskip \bnfor x=z \bnfor \eif ess\bnfor \ereturn e \bnfor s\semi s \bnfor\await\ e \mbox{\tt ?}\qquad \qquad [statement]
%\oris  %\ewhile es\bnfor
% \quad~ [{}]
\\
\simpleentry z [] e\bnfor e \seqpoint \name m(\vect e)\bnfor e \async \name m(\vect e)\bnfor \efnew{}C(\vect e)  \bnfor \efnew{\ncog}C(\vect e) \bnfor \eget e  
\qquad \qquad [expression with side effects]
%\oris  \efnew{\ncog}C(\vect e)\quad~ [{}]
\\
\simpleentry e [] v\bnfor x\bnfor\ethis
\bnfor \text{\it arithmetic-and-bool-exp} %\bnfor\name{fun}(\vect e)
%\bnfor \ecase{e}{\vect{p\Rightarrow e}} 
 [expression]
 \\
\simpleentry v [] \enull\bnfor \text{\it primitive values}%\name{Co}{[(\vect v)]} 
[value]\\
%\simpleentry p [] \_\bnfor x\bnfor \enull\bnfor\name{Co}{[(\vect p)]} [pattern]\\
%\simpleentry g [] e \bnfor x \mbox{\tt ?}\iffalse \bnfor g \land g\fi [guard]
}
}
\caption{The language {\coreABS} }\label{fig:core:lan}
\end{figure*}
where an overlined element corresponds to any finite sequence of such element.
The elements of the sequence are separated by commas, except for $\vect{C}$, which has no separator.
For example $\vect{T}$ means a (possibly empty) sequence 
$T_1 \mbox{\tt ,} \cdots \mbox{\tt ,} T_n$. When we write $\vect{T \ x \ssemi}$ we mean
a sequence $T_1\ x_1 \semi \cdots \semi$ $T_n\ x_n \semi$ when the sequence is not
empty; we mean the empty sequence otherwise.

A program $P$ is a list of interface and class declarations (resp. $I$ and $C$)
followed by a \emph{main function} $ \wrap{\vect{T\ x\ssemi}\ s}$.
A type $T$ is the name of 
 either a primitive type  $\name D$ such as $\name{Int}$, $\name{Bool}$, 
$\name{String}$,  or
a \emph{future type} $\name{Fut}\arglang{T}$, or an interface name $\name I$.

A class declaration $\dsclass{C}(\vect{T\ x}){\ 
\wrap{\vect{T'\ x'\ssemi}\ \vect{M}}}$ has a name $\name{C}$ %, may implement several interfaces,
and declares its fields $\vect{T \ x}, \vect{T'\ x'}$ and its methods $M$.
The fields $\vect{T \ x}$ will be initialised when the object is created; the 
fields $\vect{T'\ x'}$ will be initialised by the main function of the class (or by
the other methods).

A statement $s$ may be either one of the standard operations of an
imperative language or one of the operations for scheduling.
This operation is  $\await \, x \mbox{\tt ?}$ (the other one is \Abs{get}, see below), 
which suspends
method's execution until the argument $x$,
is resolved. % When the guard is
%a future lookup $x \mbox{\tt ?}$,  
This means that $\await$ requires the value of $x$
to be resolved before resuming method's execution. 

An expression $z$ may have side effects (may change the state of the
system) and is either an object creation {\tt new C}$(\bar{e})$ in the same group of the
creator or an object creation {\tt new cog C}$(\bar{e})$ in a new group.
In {\coreABS}, (runtime) objects are partitioned in groups, called \emph{cogs}, which 
own a lock for regulating the executions of threads. Every threads acquires 
its own cog lock
in order to be evaluated and releases it upon termination or suspension. Clearly,
threads running on different cogs may be evaluated in parallel, while threads running 
on the same cog do compete for the lock and interleave their evaluation.
The two operations {\tt new C}$(\bar{e})$ and {\tt new cog C}$(\bar{e})$ allow one to add an object to 
a previously created cog or to create new singleton cogs, respectively.  
 
An expression $z$ may also be either a (synchronous) method call 
$e \seqpoint \name m(\vect e)$ or 
an \emph{asynchronous} method call $e \async \name m(\vect e)$.
Synchronous method invocations suspend the execution of the caller, 
without releasing the lock of the corresponding cog; asynchronous method invocations
do not suspend caller's execution.
Expressions $z$ also include the operation $\eget e$  that suspends method's
execution until the value of $e$ is computed. The type of $e$ is a future type that
is associated with a method invocation. The difference between $\await\ x \mbox{\tt ?}$ and 
$\eget e$ is that the former releases cog's lock when the value of $x$ is still 
unavailable; the latter does not release cog's lock (thus being the potential 
cause of a deadlock). 

A {\em pure} expression $e$ is either a value, or a variable $x$, or the reserved identifier \Abs{this}.
%, a function application $\name{fun}(\vect e)$, or a pattern matching $\ecase{e}{\vect{p\Rightarrow e}}$.
Values include the $\enull$ object, and 
primitive type values, such as \Abs{true} and \Abs{1}.

In the whole paper, we assume that sequences of field declarations $\vect{T \ x}$, method declarations $\vect{M}$, and 
parameter declarations $\vect{T \  x}$ do not contain duplicate names. 
It is  
also assumed that every class and interface name in a program has a unique definition.

\subsection{\em Semantics}

\begin{figure*}[t]
\centering \myttsize 
$\begin{array}{rcl@{\qquad\qquad}rcl}
\nt{cn} & ::= & \epsilon \sep \nt{fut}(f,\nt{val}) \sep  \nt{ob}(o,a,p,q) 
              \sep \nt{invoc}(o,f,\m,\many{v}) \sep  \nt{cog}(c,\nt{act})\sep \nt{cn}~\nt{cn}
&
\nt{act}&::=& o \sep
\varepsilon
\\
\nt{p} & ::= & \{l \mid s\} \sep \key{idle} 
	%\sep \key{error} 
	& \nt{val}&::=& \nt{v}\sep \bot
\\
q & ::= & \epsilon \sep \{l \mid s\} 
	%\sep \key{error} 
                \sep q~q 
&a & ::= & [ \cdots , x\mapsto v, \cdots ] %T\ x\  \nt{v} \sep a,a
\\
s & ::=& \key{cont}(f)\sep \ldots& \nt{v} & ::= & o \sep f \sep \ldots\\
\\

\end{array}$
\caption{Runtime syntax of {\coreABS}.}
\label{fig:rtsyn}
\end{figure*}
{\coreABS} semantics is defined as a transition relation
between \emph{configurations}, noted \nt{cn} and defined in Figure~\ref{fig:rtsyn}. 
Configurations are sets of elements -- therefore 
we identify configurations that 
are equal up-to associativity and commutativity --
and are denoted by the juxtaposition of the
elements \nt{cn}~\nt{cn}; the empty configuration is denoted by $\varepsilon$.
The transition relation uses three infinite sets of
names: \emph{object names}, ranged over by 
$o$, $o'$, $\cdots$, \emph{cog names}, ranged over by $c$, 
$c'$, $\cdots$, and \emph{future names}, ranged over by $f$, $f'$, $\cdots$.
Object names  are partitioned according to the class and the cog
they belongs. We assume there are infinitely many
object names per class and the function $\fresh(\name{C})$ returns a
new object name of class \name{C}. Given an object name $o$, the function 
$\text{class}(o)$ returns its class. The function $\fresh(\,)$ returns either a fresh
cog name or a fresh future name; the context will disambiguate between the twos.

\emph{Runtime values} are either values $v$ in Figure~\ref{fig:core:lan} or
 object and future names or an undefined value, which is denoted by $\undefined$.
  
\emph{Runtime statements} extend normal statements with $\key{cont}(f)$ that is used
to model explicit continuations in synchronous invocations. 
With an abuse of
notation, we range over runtime values with $v$, $v'$, $\cdots$ and over 
runtime statements with $s$, $s'$, $\cdots$.
We finally use 
$a$ and $l$, possibly indexed, to range over maps from fields to runtime values 
and local 
variables to runtime values, respectively. The map $l$ also binds the special
name $\text{destiny}$ to a future value.

The elements of configurations are 
\begin{itemize}
\item[--]
\emph{objects} $ob(o,a,p,q)$ where $o$ is an object name; $a$ returns the values
of object's fields, $p$ is either $\text{idle}$, representing inactivity, or
is the  \emph{active process} $\{ l \mid s \}$, where $l$ returns the values of local identifiers and $s$ is the 
statement to evaluate; $q$ is a set of processes to evaluate. 
\item[--]
\emph{future binders} $\fut(f,v)$ where $v$, called \emph{the reply value} may be also $\bot$ meaning that the value has still not computed.
\item[--]
\emph{cog binders} $\cog(c,o)$ where $o$ is the active object; it may be $\varepsilon$
meaning that the cog $c$ has no active object.
\item[--]
\emph{method invocations} $\invoc(o,f,\m,\vect{v})$.
\end{itemize}

\noindent
The following auxiliary functions are used in the semantic rules (we assume a fixed 
{\coreABS} 
program): 
%
% \Elena{in realt\`a queste auxiliary functions dovrebbero riferirsi alla contract class table, in cui i metodi e i campi sono etichettati coi loro record/ contratti. Il problema \`e che a questo punto record e contratti non sono ancora stati introdotti}
\begin{itemize}
\item[--]
$\dom(l)$ and $\dom(a)$ return the domain of $l$ and $a$, respectively.

\item[--] $l[x \mapsto v]$ is the function such that $(l[x \mapsto v])(x) = v$ and
$(l[x \mapsto v])(y) = l(y)$, when $y \neq x$. Similarly for $a[x \mapsto v]$.

\item[--]
$\feval{e}{(a+ l)}$ returns the value of $e$ by computing the arithmetic and boolean 
expressions and retrieving the value of the identifiers that is stored either in $a$ or in $l$.
Since $a$ and $l$ are assumed to have disjoint domains, we denote the union map with 
$a+l$. $\feval{\vect{e}}{(a+ l)}$ returns the tuple of values of $\vect{e}$.
When $e$ is a future name, the function $\feval{\cdot}{(a+ l)}$ is the identity. Namely
$\feval{f}{(a+ l)} = f$. 

\item[--]
$\name{C}.\m$ returns the term $(\vect{T \ x}) \{ \vect{T' \ z} ; s\}$ that contains
the arguments and the body of the method $\m$ in the class $\name{C}$. 

\item[--]
$\text{bind}(o,f,\m,\many{v},\name{C}) = \{ [\text{destiny} \mapsto f, \bar{x} \mapsto \bar{v}, \bar{z} \mapsto \undefined] \mid s
\subst{o}{\this} \}$, where 
 $\name{C}.\m = (\vect{T \ x}) \{ \vect{T' \ z} ; s\}$. 
 %Binding may not succeed, in this case we get the \key{error} process.
%\Elena{il quinto argomento di $bind$ qui \`e un cog, mentre nella semantica \`e una classe}

\item[--]
$\text{init}(\name{C},o)$ returns the process 
\[
\{ \emptyset [\text{destiny} \mapsto f_\undefined] 
	\mid s \subst{o}{\name{this}} \}
\]
 where
$\{ \vect{T \ x} ; s \}$ is the main function of the class $\name{C}$. The special 
name $\text{destiny}$ is %useless, therefore the function $l$ is empty).
initialised to a fresh (future) name $f_\undefined$.

\item[--]
$\text{atts}(\name{C},\vect{v},c)$ returns the map $[\cog \mapsto c, 
\vect{x} \mapsto \vect{v}, \bar{x'} \mapsto \undefined]$, where the class $\name{C}$ is defined as
\[
\dsclass{C}(\vect{T\ x}){\wrap{\vect{T'\ x'\semi}\ \vect{M}}}
\]
and where $\cog$ is a special field storing the cog name of the object.
\end{itemize}

The transition relation rules are collected in 
Figures~\ref{fig:sem1} and~\ref{fig:sem2}. They define transitions of 
objects $ob(o,a,p,q)$ according to 
the shape of the statement in $p$.
\begin{figure*}[t]
\centering \myttsize 
\renewcommand{\arraystretch}{0.9} 
$\begin{array}{l}
\qquad 
\nredrule{Skip}{
	ob(o,a,\{l \mid \key{skip};s\},q) 
	\\
	\to {ob(o,a,\{l \mid s\},q)}}
\hfill
\ntyperule{Assign-Local}{
	x \in \dom(l)  \quad  v=\feval{e}{(a+ l)}
	}{
	ob(o,a,\{l \mid x=e;s\},q) \\
	\to {ob(o,a,\{l[x\mapsto v] \mid s\},q)}}
\hfill
\ntyperule{Assign-Field}{ 
	x \in \dom(a)\setminus\dom(l) \quad v=\feval{e}{(a+l)}
	}{ob(o,a,\{l \mid x=e;s\},q)\\
	\to {ob(o,a [x\mapsto v],\{l \mid s\},q) }}
\qquad
\\
\\
\qquad
\ntyperule{Cond-True}{	
	 \true = \feval{e}{(a+l)}
	}{ ob(o,a,\{l \mid \key{if}\ e\ \key{then}\:\{s_1\}\:\key{else}\:\{s_2\};s\},q)\\
	\to ob(o,a,\{l \mid s_1;s\},q)}
\hfill\qquad
\ntyperule{Cond-False}{ \false = \feval{e}{(a+l)} 	
	}{ ob(o,a,\{l \mid \key{if}\ e\ \key{then}\:\{s_1\}\:\key{else}\:\{s_2\};s\},q)\\
	\to ob(o,a,\{l \mid s_2;s\},q)}
\qquad
\\
\\
\qquad \quad
\ntyperule{Await-True}{
	f = \feval{e}{(a+l)}  \quad v \neq \bot
	}{
	%\{
	ob(o,a,\{l \mid \await \, e \, \mbox{\tt ?} ; s\},q)\ \fut(f,v) %\}
	\\
	\to 
	ob(o,a,\{l \mid s\},q)\ \fut(f,v) 
	}
\hfill
\ntyperule{Await-False}{f = \feval{e}{(a+l)} 
	}{
	ob(o,a,\{l \mid \await \, e \, \mbox{\tt ?} ;s \},q)\  \fut(f, \bot)
	\\
	\to 
	ob(o,a,\text{idle} ,q \cup \{ l \mid
	\await\, e \, \mbox{\tt ?} ;s \})\ \fut(f, \bot)
	}
\qquad \quad
\\
\\
\nredrule{Release-Cog}{
	ob(o,a,\text{idle},q)\ cog(c, o)\\\to
    ob(o,a,\text{idle}, q)\ cog(c, \epsilon)
    }
\quad
\ntyperule{Activate}{
	c = a(\text{cog})
	}{
	ob(o,a,\text{idle},q \cup \{ l \mid s\})\ \cog(c, \epsilon) %\ cn\}
	\\ \to
	ob(o,a,\{ l \mid s\},q)\ \cog(c, o) %\ cn\}
	}
\quad
\ntyperule{Read-Fut}{
	f = \feval{e}{(a+l)}  \quad v \neq \bot
	}{ob(o,a,\{l \mid x=\eget e;s\},q) \ \textit{fut}(f,v)\\
\to ob(o,a,\{l \mid x=v;s\},q)\ \textit{fut}(f,v)}
\\
\\
\qquad
\ntyperule{New-Object}{
	o' = \text{fresh}(\name{C}) \quad
 	p=\text{init}(\name{C},o')
	\\ 
	a'=\text{atts}(\name{C},\feval{\many{e}}{(a+l)}, c)
	}{
	ob(o,a,\{l \mid x=\key{new}\ \name{C}(\many{e});s\},q)\ cog(c, o)
	\\
	\to ob(o,a,\{l \mid x=o';s\},q)\ cog(c, o) \\
  	ob(o',a',\text{idle},\{p\})}
\hfill
\ntyperule{New-Cog-Object}{
	c' = \text{fresh}(\,) \quad o' = \text{fresh}(\name{C}) \quad  
	p=\text{init}(\name{C},o') 
	\\  a'=\text{atts}(\name{C},\feval{\many{e}}{(a+l)},c')
	}{
	ob(o,a,\{l \mid x=\key{new}\ \key{cog}\ \name{C}(\many{e});s\},q)
	\\
	\to ob(o,a,\{l \mid x=o';s\},q)\\ ob(o',a',p,\emptyset)\quad cog(c', o')
	}
\qquad
\end{array}$
\caption{\label{fig:sem1} Semantics of \coreABS (1).}
\end{figure*}
\begin{figure*}[t]
\centering \myttsize 
\renewcommand{\arraystretch}{0.9} 
$\begin{array}{c}
\ntyperule{Async-Call}{
	o'=\feval{e}{(a+l)}
	\quad \many{v}=\feval{\many{e}}{(a+l)}\quad f = \text{fresh}(~)}
	{ob(o,a,\{l \mid x=e \async \m(\many{e});s\},q) \\\to  ob(o,a,\{l \mid x=f;s\},q) \
  	\invoc(o',f,\m,\many{v})\ \fut(f,\bot)}
\qquad
\ntyperule{Bind-Mtd}{
	\{ l \mid s \} =\text{bind}(o,f,\m,\many{v},\text{class}(o))
	}{
	ob(o,a,p,q)\  invoc(o,f,\m,\many{v})\\\to ob(o,a,p,q \cup \{ l \mid s \} )
	}
\\
\\
\qquad \ntyperule{Cog-Sync-Call}{
	o'=\feval{e}{(a+l)} \quad \many{v}=\feval{\many{e}}{(a+l)} \quad 
	f = \text{fresh}(\,)
	\\
  	c= a'(\text{cog}) \quad f'=l(\text{destiny}) 
	\\ 
	\{l' \mid s'\} = \text{bind}(o', f, \m, \many{v}, class(o'))
	}{
	ob(o,a,\{l \mid x=e \seqpoint \m(\many{e});s\},q)\\ ob(o',a',\text{idle},q')\ cog(c, o)\\
	\to ob(o,a,\text{idle},q \cup \{l \mid \key{await} \; f \key{?} ; x=f.\key{get};s\})\ \textit{fut}(f,\bot)
	\\ ob(o',a',\{l' \mid s';\key{cont} \ f'\},q')\ \cog(c, o')
	}
\hfill
\ntyperule{Cog-Sync-Return-Sched}{
	c = a'(\text{cog})  \quad f = l'(\text{destiny})
	}{
	ob(o,a,\{l \mid \key{cont} \ f \},q)\ cog(c, o)\\
  	ob(o',a',\text{idle},q' \cup \{l' \mid s\})\\
	\to ob(o,a,\text{idle},q)\ cog(c, o')\\
	ob(o',a',\{l' \mid s\},q')
	} \qquad
\\
\\
\ntyperule{Self-Sync-Call}{
	f'=l(\text{destiny}) \quad o=\feval{e}{(a+l)} \quad 
	\many{v}=\feval{\many{e}}{(a+l)}
	\\
	f= \text{fresh}(\,)\quad \{l' \mid s'\} = \text{bind}(o, f, \m, \many{v}, class(o))
	}{
	ob(o,a,\{l \mid x=e \seqpoint \m(\many{e});s\},q)\\
	\to ob(o,a,\{l' \mid s';\key{cont}(f')\},q \cup \{ l \mid \key{await} \; f \key{?} ;
	 x=f.\key{get};s\})\
	\textit{fut}(f,\bot)
	}
\qquad
%\ntyperule{Return}{
%	v = \feval{e}{(a+l)}  \quad f = l(\text{destiny}) }
%	{ob(o,a,\{l \mid \key{return}\ e\},q) \ \textit{fut}(f,\bot)\\
%	\to ob(o,a,\nidle,q) \ \textit{fut}(f,v)
%	}
\ntyperule{Return}{
	v = \feval{e}{(a+l)}  \quad f = l(\text{destiny}) }
	{ob(o,a,\{l \mid \key{return}\ e;s\},q) \ \textit{fut}(f,\bot)\\
	\to ob(o,a,\{l \mid s\},q) \ \textit{fut}(f,v)
	}
\\
\\
\qquad
\ntyperule{Rem-Sync-Call}{
	o'=\feval{e}{(a+l)}\quad f= \text{fresh}(\,)
	\quad  a(\text{cog}) \neq a'(\text{cog})
	}{
	ob(o,a,\{l \mid x=e \seqpoint \m(\many{e});s\},q)\ ob(o',a',p,q')\\
\to ob(o,a,\{l \mid f=e \async \m(\many{e}); x=f.\key{get};s\},q)\\ ob(o',a',p,q')}
\qquad
\ntyperule{Self-Sync-Return-Sched}{
	f = l'(\text{destiny})
	}{
	ob(o,a,\{l \mid \key{cont}(f)\},q \cup \{l' \mid s\})\\
	\to ob(o,a,\{l' \mid s\},q)
	}
\qquad
\ntyperule{Context}{
	\nt{cn} \to \nt{cn}'
	}{
	\nt{cn} \; \nt{cn}''
	\to \nt{cn}' \; \nt{cn}''
	}
\end{array}$
\caption{\label{fig:sem2} Semantics of \coreABS (2).} 
\end{figure*}
We focus on rules concerning the concurrent part of {\coreABS}, since
the other ones are standard.
Rules \rulename{Await-True} and \rulename{Await-False} model
the $\await\ e \mbox{\tt ?}$ operation: if the (future) value of $e$ has been computed then 
 $\name{await}$ terminates; otherwise the active process
becomes $\text{idle}$. In this case, if the object owns the control of the cog then
it may release such control -- rule \rulename{Release-Cog}. Otherwise, when the
cog has no active process, the object gets the control of the cog and activates 
one of its processes -- rule \rulename{Activate}.
Rule \rulename{Read-Fut} permits the retrieval of the value returned 
by a method; the object does not release the control of the cog until this value
has been computed.

The two types of object creation are modeled by 
\rulename{New-Object} and \rulename{New-Cog-Object}. The first one creates the new
object in the same cog. The new object is idle because the  cog has already an
active object. The second one creates the object in a new cog and makes it 
active by scheduling the process corresponding to the main function of the class.
The special field $\cog$ is initialized accordingly; the other object's fields
are initialized by evaluating the arguments of the operation -- see definition of 
\text{atts}.

Rule  \rulename{Async-Call} defines asynchronous method invocation
$x=e \async \m(\many{e})$. This 
rule creates a fresh future name that is assigned to the identifier $x$. The 
evaluation of the called method is transferred to a different process -- see rule
\rulename{Bind-Mtd}. Therefore the caller can progress without waiting for callee's
termination.
Rule  \rulename{Cog-Sync-Call} defines synchronous method invocation on an object in
the \emph{same} cog (because of 
the premise $a'(\text{cog}) = c$ and the element $\cog(c,o)$ 
in the configuration). The control is passed to the called object that executes
the body of the called method followed by a special statement $\name{cont}(f')$, where $f'$ is 
a fresh future name. When the evaluation of the body terminates, the caller process
is scheduled again using the name $f'$ -- see rule \rulename{Cog-Sync-Return-Sched}.
Rules \rulename{Self-Sync-Call} and \rulename{Rem-Sync-Call} deal with synchronous
method invocations of the same object and of objects in different cogs, respectively.
The former is similar to \rulename{Cog-Sync-Call} except that there is no control
on cogs. The latter one implements the synchronous invocation through an asynchronous
one followed by an explicit synchronisation operation.

It is worth to observe that the rules \rulename{Activate}, \rulename{Cog-Sync-Call} and
\rulename{Self-Sync-Call} are different from the corresponding ones in~\cite{johnsen10fmco}.
In fact, in~\cite{johnsen10fmco} rule \rulename{Activate} uses an unspecified \emph{select}
predicate that activates  one task from the queue of processes to evaluate. According to
the rules  \rulename{Cog-Sync-Call} and
\rulename{Self-Sync-Call} in that paper, the activated process might be a caller of a synchronous invocation, which has a \Abs{get} operation. To avoid potential deadlock of
a wrong \emph{select} implementation, we have prefixed the \Abs{get}s in 
\rulename{Cog-Sync-Call} and \rulename{Self-Sync-Call} with \Abs{await} operations.

The initial configuration of a {\coreABS} program with main function 
$\{ \bar{T}:\bar{x} \, ; \, s\}$ is  
\[
\begin{array}{l}
{\it ob}({\it start}, \varepsilon, \{ [ \text{destiny} \mapsto f_{\it start}, 
\bar{x} \mapsto \undefined] \, | \, s \}, \varnothing) \\ \cog(\text{start},{\it start})
\end{array}
\]
 where
$\text{start}$ and ${\it start}$ are special cog and object names, respectively, 
and $f_{\it start}$ is a fresh future name.
As usual, let $\lred{}^*$ be the reflexive and transitive closure of $\lred{}$. 

A configuration $\nt{cn}$ is \emph{sound} if\vspace{-.45em}
\begin{itemize}
\item[(i)] different elements $\cog(c,o)$ and
$\cog(c',o')$ in $\nt{cn}$ are such that $c \neq c'$ and $o \neq \varepsilon$ implies $o \neq o'$,
\item[(ii)] if $ob(o,a,p,q)$ and $ob(o',a',p',q')$ are different objects in $\nt{cn}$
such that $a(\text{cog}) = a'(\text{cog})$ then either $p = \text{idle}$ or
$p' = \text{idle}$.
\end{itemize}\vspace{-.4em}
We notice that the initial configurations of {\coreABS} programs are sound.
 The following statement guarantees that the property
``there is at most one active object per cog'' is an 
invariance of the transition relation.

\begin{proposition}
If $\nt{cn}$ is \emph{sound} and $\nt{cn} \lred{} \nt{cn}'$ then $\nt{cn}'$ is sound as well.
\end{proposition}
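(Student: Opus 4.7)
The plan is to proceed by case analysis on the rule applied at the root of the derivation of $\nt{cn} \to \nt{cn}'$, invoking the \rulename{Context} rule to reduce the general case to the per-rule cases, and then verify both soundness conditions locally for each rule. Most rules (Skip, Assign-Local, Assign-Field, Cond-True, Cond-False, Read-Fut, Async-Call, Bind-Mtd, Self-Sync-Call, Rem-Sync-Call, Self-Sync-Return-Sched, Return, Await-True) only rewrite the triple $(a,p,q)$ inside a single $ob(o,\ldots)$; they neither add nor delete any $\cog(\cdot,\cdot)$ binder and do not change the $\cog$ field of any object, nor do they transform an idle process into an active one or vice versa. Hence conditions (i) and (ii) are trivially inherited from $\nt{cn}$.

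The non-trivial cases fall into three groups. First, the object/cog creation rules: \rulename{New-Cog-Object} introduces $\cog(c',o')$ with $c'$ and $o'$ both fresh, so (i) holds by freshness against every pre-existing binder, and the newly born cog has a unique active object, so (ii) holds; \rulename{New-Object} only adds an idle object in an already-active cog, which touches neither invariant. Second, the cog-scheduling rules: \rulename{Release-Cog} replaces $\cog(c,o)$ by $\cog(c,\varepsilon)$ and leaves an already-idle process idle, preserving (i) (cog names unchanged, the new second component is excluded by the side condition $o\neq\varepsilon$ of (i)) and (ii) (one active object strictly removed); \rulename{Activate} is the dual, whose premise $\cog(c,\varepsilon)$ together with (ii) in $\nt{cn}$ guarantees that no object with $\cog$-field equal to $c$ is currently active, so promoting $o$ (whose $\cog$-field is $c$ by premise) to active yields a unique active object on $c$; \rulename{Await-False} only demotes a process to idle. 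Third, the synchronous call rules on the same cog: \rulename{Cog-Sync-Call} changes $\cog(c,o)$ into $\cog(c,o')$ while simultaneously setting $o$ to idle and taking $o'$ (which was idle by the premise $ob(o',a',\text{idle},q')$) to active; (ii) is preserved because $c$ still has exactly one active object, and (i) still holds because only the second component of one binder has changed and $o'$'s $\cog$-field equals $c$ by premise; \rulename{Cog-Sync-Return-Sched} is entirely symmetric.

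The one delicate point is condition (i)'s clause $o\neq o'$ when the rules \rulename{Activate}, \rulename{Cog-Sync-Call}, or \rulename{Cog-Sync-Return-Sched} install a new active object into a cog binder: we need that the installed object does not already appear as active in \emph{another} cog binder. This relies on the auxiliary invariant that the $\cog$ field of an object is set at creation time and never mutated, so each object $o$ can only ever be the active component of $\cog(a(\text{cog}),\cdot)$. A quick inspection of the rules confirms that no rule rewrites $a(\text{cog})$, and the only rules that ever write $o$ into the second slot of a cog binder (namely \rulename{New-Cog-Object}, \rulename{Activate}, \rulename{Cog-Sync-Call}, \rulename{Cog-Sync-Return-Sched}) all carry a premise of the form $a(\text{cog})=c$ for the cog $c$ in question. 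Combined with the inductive hypothesis on $\nt{cn}$, this forbids $o$ from being simultaneously active in two distinct binders, closing the argument.

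The main obstacle, and the place where one must be most careful, is precisely this last point: establishing and using the side invariant that $a(\text{cog})$ is constant and that any active placement of $o$ in a cog binder respects $a(\text{cog})$. Once this is observed, the remaining verification for each rule is mechanical.
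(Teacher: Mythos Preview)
The paper does not supply a proof for this proposition; it is stated and left implicit, presumably as a routine invariance check. Your case analysis on the reduction rule is exactly the expected argument, and your grouping of the rules into trivial, creation, scheduling, and same-cog synchronous-call cases is clean and correct.

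Your identification of the one genuinely delicate point---that when \rulename{Activate}, \rulename{Cog-Sync-Call}, or \rulename{Cog-Sync-Return-Sched} install an object $o$ into a cog binder $\cog(c,o)$, one must rule out $o$ already being active in a different binder---is apt, and your resolution via the auxiliary invariant that $a(\text{cog})$ is immutable and that every rule writing $o$ into a binder has a premise $a(\text{cog})=c$ is the right way to close it. Strictly speaking, this auxiliary invariant (together with the companion fact that whenever $\cog(c,o)$ with $o\neq\varepsilon$ is in the configuration there is a matching $ob(o,a,\ldots)$ with $a(\text{cog})=c$) is not part of the stated definition of soundness, so a fully rigorous proof would strengthen the induction hypothesis to carry it along; you have essentially done this informally, which is appropriate for a proposition the paper itself leaves without proof.
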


As an example of {\coreABS} semantics, in Figure~\ref{fig:red_cpx_sched} we have 
detailed the transitions of the program in Example~\ref{ex.cpx_sched}. 
The non-interested reader may safely skip it.

\subsection{\em Samples of concurrent programs in {\coreABS}}
The {\coreABS} code of two concurrent programs are discussed. These codes will be
analysed in the following sections. 

\begin{example}
Figure~\ref{fig.Math} collects three different implementations of the factorial 
function in a class \name{Math}.
  \begin{figure}[t]
    \begin{absexamplen}
class Math { 
   Int fact_g(Int n){ 
      Fut<Int> x ;
      Int m ;
      if (n==0) { return 1; } 
      else { x = this!fact_g(n-1); m = x.get; 
              return n*m; } 
   } 
   Int fact_ag(Int n){ 
      Fut<Int> x ;
      Int m ;
      if (n==0) { return 1; }
      else { x = this!fact_ag(n-1); 
              await x?; m = x.get;
              return n*m; }
   } 
   Int fact_nc(Int n){ 
      Fut<Int> x ;
      Int m ;
      Math z ;
      if (n==0) { return 1 ; } 
      else {  z = new cog Math();
              x = z!fact_nc(n-1); m = x.get; 
              return n*m; } 
   } 
}
    \end{absexamplen}
    \caption{The class \name{Math}}\label{fig.Math}
  \end{figure}
  The function \Abs{fact\_g} is the standard definition of factorial:
  the recursive invocation \Abs{this\!fact\_g(n-1)} is followed by a
  \Abs{get} operation that retrieves the value returned by the
  invocation.  Yet, \Abs{get} does not allow the task to release the
  cog lock; therefore the task evaluating \Abs{this\!fact_g(n-1)} is
  fated to be delayed forever because its object (and, therefore, the
  corresponding cog) is the same as that of the caller.  The function
  \Abs{fact\_ag} solves this problem by permitting the caller to
  release the lock with an explicit \Abs{await} operation, before
  getting the actual value with \Abs{x.get}. An alternative solution
  is defined by the function \Abs{fact\_nc}, whose code is similar to
  that of \Abs{fact\_g}, except for that \name{fact\_nc} invokes
  \Abs{z\!fact\_nc(n-1)} recursively, where \name{z} is an object in a
  new cog. This means the task of \Abs{z\!fact\_nc(n-1)} may start
  without waiting for the termination of the caller.
\end{example}

Programs that are particularly hard to verify are those that may manifest misbehaviours
according to the schedulers choices. The following example discusses one case.
%The example is also interesting because it highlights the relevance of collecting
%the cogs of objects fields together with the cog of the objects.

\begin{example}\label{ex.cpx_sched}
The class \Abs{CpxSched} of Figure~\ref{fig:exampleC} defines three  methods.
\begin{figure}
  \centering
  
  \begin{absexamplen}
interface I  { Fut<Unit> m1(I y); Unit m2(I z); 
                Unit m3() ; }

class CpxSched (I u) implements I { 
    Fut<Unit> m1(I y) { 
        Fut<Unit> h;
        Fut<Unit> g ;
        h = y!m2(u);
        g = u!m2(y); 
        return g;
    }
    Unit m2(I z) { 
        Fut<Unit> h ;
        h = z!m3(); 
        h.get; 
     }
    Unit m3(){ 
    }
}
  \end{absexamplen}
  \caption{The class \Abs{CpxSched}}
\label{fig:exampleC}
\end{figure}
Method \Abs{m1} asynchronously invokes \Abs{m2} on its own argument \Abs{y}, passing to it the field \Abs{x} as argument . Then it asynchronously invokes \Abs{m2} on the field \Abs{x}, passing its same argument \Abs{y}.
Method \Abs{m2} invokes \Abs{m3} on the argument \Abs{z} and blocks waiting for the result.
Method \Abs{m3} simply returns.
% %
% Method \Abs{m4} invokes \Abs{m2} on the argument \Abs{w} passing the field \Abs{x} as parameter and returns the future reference of that invocation.

Next, consider the following main function:

\begin{absexamplen} 
  { I x; I y; I z;
    Fut<Fut<Unit>> w ;
    x = new CpxSched(null);
    y = new CpxSched(x);
    z = new cog CpxSched(null);
    w = y!m1(z); }
\end{absexamplen}

The initial configuration is 
\[
ob(start,\Empty,\{l \; | \; s\},\emptyset) \cog(\text{start},{\it start})
\]
where $l = [\text{destiny} \mapsto f_{\it start}, 
x \mapsto \undefined,  y \mapsto \undefined, z \mapsto \undefined, w \mapsto \undefined ]$ and $s$ is the statement of the main function.
The sequence of transitions of this configuration is illustrated in 
Figure~\ref{fig:red_cpx_sched}, 
where 
\begin{center}
\begin{tabular}{rl} 
$s'$, $s''$, & $s'''$ are the obvious sub-statements of the 
\\
& main function
\\
$l_{o} =$ & $[\text{destiny} \mapsto f'', z \mapsto o'',  u \mapsto \undefined, h \mapsto \undefined]$ 
\\
$l_{o'} =$ & $[\text{destiny} \mapsto f, y \mapsto o'',  g \mapsto \undefined, h \mapsto \undefined]$
\\ 
$l_{o''} =$ & $[\text{destiny} \mapsto f', z \mapsto o,  u \mapsto \undefined, h \mapsto \undefined]$
\\
$l'_{o} = $ & $[\text{destiny} \mapsto f'''']$
\\
$l'_{o''} =$ & $[\text{destiny} \mapsto f''']$
\\ 
$a_{\tt null} = $ & $[\nt{cog} \mapsto \text{start}, {\tt x} \mapsto {\tt null}]$
\\
$a_o = $ & $[\nt{cog} \mapsto \text{start},{\tt x} \mapsto o]$
\\
$s_{o'}=$ & \text{\small {\tt h= y!m2(this.x); g= this.x!m2(y); return g;}}
\\
$s_{o}= $ & $s_{o''}={}$ \text{\small {\tt h= z!m3(); h.get;}}
\end{tabular}
\end{center}
\begin{figure*}
\newcommand{\msp}{\;}
  \centering
  $$
\begin{array}{l}
  \nt{ob}(\nt{start},\Empty,\{l \; | \; s\},\emptyset) \msp \cog(\text{start},{\it start}) \\
\qquad \lred{}^2 \quad \rulename{New-Object} \mbox{ and } \rulename{Assign-Local} \\
 \nt{ob}(\nt{start},\Empty,\{l[{\tt x} \mapsto o] \,| \, \mbox{\Abs{y = new C(x)}} ; s''\},\emptyset) \msp \nt{cog}(\text{start},\nt{start})
   \msp \nt{ob}(o,a_{\tt null},\text{idle},\emptyset) \\
\qquad  \lred{}^2 \quad \rulename{New-Object} \mbox{ and } \rulename{Assign-Local}  \\
  \nt{ob}(\nt{start},\Empty,\{l[{\tt x} \mapsto o, {\tt y} \mapsto o'] \,| \, \mbox{\Abs{z = new cog C(null)}}; s'''\},\emptyset) \msp \nt{cog}(\text{start},\nt{start}) 
   \msp \nt{ob}(o,a_{\tt null}, \text{idle},\emptyset) \msp \nt{ob}(o',a_o,\text{idle},\emptyset)  \\
\qquad  \lred{}^2 \quad \rulename{New-Cog-Object} \mbox{ and }  \rulename{Assign-Local} \\
  \nt{ob}(\nt{start},\Empty,\{l[{\tt x} \mapsto o, {\tt y} \mapsto o', {\tt z} \mapsto o''] \,| \, \mbox{\Abs{w = y}!\Abs{m1(z);}} \},\emptyset) \msp \nt{cog}(\text{start},\nt{start}) 
   \msp \nt{ob}(o,a_{\tt null}, \text{idle},\emptyset) \msp \nt{ob}(o',a_o,\text{idle},\emptyset)\\
  \qquad\quad \nt{ob}(o'',[\nt{cog} \mapsto c,{\tt x} \mapsto {\tt null}],\text{idle},\emptyset) \msp \nt{cog}(c,o'') \\
\qquad \lred{} \quad \rulename{Async-Call} \\
  \nt{ob}(\nt{start},\Empty,\{l[{\tt x} \mapsto o, {\tt y} \mapsto o', {\tt z} \mapsto o''] \,| \, \mbox{\Abs{w =} f;} \},\emptyset) \msp \nt{cog}(\text{start},\nt{start}) 
   \msp \nt{ob}(o,a_{\tt null},\text{idle},\emptyset) \msp  \nt{ob}(o',a_o,\text{idle},\emptyset) \\
  \qquad\quad\nt{ob}(o'',[\nt{cog} \mapsto c,{\tt x} \mapsto {\tt null}],\text{idle},\emptyset) \msp \nt{cog}(c,o'') \msp\nt{invoc}(o',f,\m1,o'') \msp \nt{fut}(f,\bot) \\
\qquad \lred{} \quad \rulename{Bind-Mtd} \\
  \nt{ob}(\nt{start},\Empty,\{l[{\tt x} \mapsto o, {\tt y} \mapsto o', {\tt z} \mapsto o''] \,| \, \mbox{\Abs{w =} f;} \},\emptyset) \msp \nt{cog}(\text{start},\nt{start}) 
   \msp \nt{ob}(o,a_{\tt null},\text{idle},\emptyset) \msp \nt{ob}(o',a_o,\text{idle},\{l_{o'}\,|\, s_{o'}\}) \\
   \nt{ob}(o'',[\nt{cog} \mapsto c,{\tt x} \mapsto {\tt null}],\text{idle},\emptyset)
  \qquad\quad \nt{cog}(c,o'')\msp \nt{fut}(f,\bot) \\
\qquad  \lred{}^+  \rulename{Activate} \mbox{ and twice }\rulename{Async-Call}\mbox{ and } \rulename{Return}  \\
  \nt{ob}(\nt{start},\Empty,\{l[{\tt x} \mapsto o, {\tt y} \mapsto o', {\tt z} \mapsto o''] \,| \, \mbox{\Abs{w =} f;} \},\emptyset) \msp \nt{cog}(\text{start},\nt{start}) 
   \msp \nt{ob}(o,a_{\tt null},\text{idle},\emptyset) \msp \nt{ob}(o',a_o,\text{idle},\emptyset) \\
  \qquad\quad\nt{ob}(o'',[\nt{cog} \mapsto c,{\tt x} \mapsto {\tt null}],\text{idle},\emptyset) \msp \nt{cog}(c,o'')\msp \nt{fut}(f,f'') \msp \nt{fut}(f',\bot)
  \msp \nt{fut}(f'',\bot) \hfill (\star)
  \\
  \qquad  \quad \nt{invoc}(o'',f',\m2,o)  \msp \nt{invoc}(o,f'',\m2,o'') 
    \\
\qquad \lred{}^2 \quad \mbox{ twice }\rulename{Bind-Mtd}  \\
  \nt{ob}(\nt{start},\Empty,\{l[{\tt x} \mapsto o, {\tt y} \mapsto o', {\tt z} \mapsto o''] \,| \, \mbox{\Abs{w =} f;} \},\emptyset)  \msp  \nt{cog}(\text{start},\nt{start}) 
   \msp \nt{ob}(o,a_{\tt null},\text{idle},\{l_{o}\,|\,s_{o}\}) \msp  \nt{ob}(o',a_o,\text{idle},\emptyset) \\
  \qquad\quad\nt{ob}(o'',[\nt{cog} \mapsto c,{\tt x} \mapsto {\tt null}],\text{idle}, \{l_{o''}\,|\,s_{o''}\}) \msp \nt{cog}(c,o'') \msp \nt{fut}(f,f'')
   \msp \nt{fut}(f',\bot) \msp \nt{fut}(f'',\bot) \\
\qquad \lred{}^+ \quad \mbox{ twice }\rulename{Activate} \mbox{ and twice }\rulename{Async-Call}  \\
  \nt{ob}(\nt{start},\Empty,\{l[{\tt x} \mapsto o, {\tt y} \mapsto o', {\tt z} \mapsto o'', {\tt w} \mapsto f] \,| \, \text{idle} \},\emptyset) \msp \nt{cog}(\text{start},o) 
   \msp\nt{ob}(o,a_{\tt null},\{l_{o}[\mbox{\Abs{h}}\mapsto f''']\,|\,\mbox{\Abs{h.get;}}s'_{o}\},\emptyset) \\
  \qquad\quad\nt{ob}(o',a_o,\text{idle},\emptyset)
   \msp \nt{ob}(o'',[\nt{cog} \mapsto c,{\tt x} \mapsto {\tt null}],\{l_{o''}[\mbox{\Abs{h}}\mapsto f'''']\,|\,\mbox{\Abs{h.get;}}s'_{o''}\},\emptyset) \msp \nt{cog}(c,o'')
   \msp \nt{fut}(f,f'') \msp \nt{fut}(f',\bot) \\ 
  \qquad\quad \nt{fut}(f'',\bot) \msp\nt{invoc}(o'',f''',\m3,\varepsilon) \msp \nt{fut}(f',\bot) \msp \nt{fut}(f''',\bot) \msp \nt{invoc}(o,f'''',\m3,\varepsilon)
   \msp \nt{fut}(f'',\bot) \msp \nt{fut}(f'''',\bot)  \\
\qquad \lred{}^2 \quad \mbox{ twice }\rulename{Bind-Mtd}  \\
  \nt{ob}(\nt{start},\Empty,\{l[{\tt x} \mapsto o, {\tt y} \mapsto o', {\tt z} \mapsto o'', {\tt w} \mapsto f] \,| \, \text{idle} \},\emptyset) \msp \nt{cog}(\text{start},o) 
   \msp \nt{ob}(o,a_{\tt null},\{l_{o}[\mbox{\Abs{h}}\mapsto f''']\,|\,\mbox{\Abs{h.get;}}s'_{o}\},\{l'_{o}\,|\mbox{\Abs{skip;}}\})  \\
  \qquad\quad\nt{ob}(o',a_o,\text{idle},\emptyset) \msp
   \nt{ob}(o'',[\nt{cog} \mapsto c,{\tt x} \mapsto {\tt null}],\{l_{o''}[\mbox{\Abs{h}}\mapsto f'''']\,|\,\mbox{\Abs{h.get;}}s'_{o''}\},\{l'_{o''}\,|\,\mbox{\Abs{skip;}}\}) \msp \nt{cog}(c,o'') \\
  \qquad\quad \nt{fut}(f,f'') \msp \nt{fut}(f',\bot) \msp \nt{fut}(f'',\bot) \msp \nt{fut}(f',\bot)\msp \nt{fut}(f''',\bot) \msp \nt{fut}(f'',\bot)\msp \nt{fut}(f'''',\bot)
\end{array}
$$
\caption{Reduction of Example~\ref{ex.cpx_sched}}
\label{fig:red_cpx_sched}
\end{figure*}
We notice that the last configuration of Figure~\ref{fig:red_cpx_sched} is stuck, 
\emph{i.e.}~it cannot progress anymore. In fact, it is a deadlock according the
forthcoming Definition~\ref{def.locks}.
\end{example}

\subsection{\em Deadlocks}
\label{sec.deadlockanalysis}

The definition below identifies deadlocked configurations by detecting chains of dependencies 
between tasks that cannot progress. To ease the reading, we write
\begin{itemize}
\item[--]
 $p[f.\get]^a$ 
whenever $p = \{l | s\}$ and $s$ is $x = y.\get ; s'$ and $\feval{y}{(a+l)}=f$;
\item[--]
$p[\await\ f]^a$ 
whenever $p = \{l | s\}$ and $s$ is $\await\ e\mbox{\tt ?} ; s'$ and $\feval{e}{(a+l)}=f$;
\item[--]
$p.f$ whenever $p=\{l|s\}$ and $l(\text{destiny}) = f$.
\end{itemize}

\begin{definition}
\label{def.locks}
A configuration $\nt{cn}$ is \emph{deadlocked}  if there are
$$
\begin{array}{l}
\quad \nt{ob}(o_0,a_0,p_0,q_0), \cdots, \nt{ob}(o_{n-1},a_{n-1},p_{n-1},q_{n-1}) \in \nt{cn}
\\
\text{and}
\\
\quad
p_i' \in p_i \cup q_i, \qquad \text{with} \; 0 \leq i \leq n-1
\end{array}
$$ 
such that (let $+$ be computed modulo $n$ in the following) 
\begin{enumerate}
\item
$p_0' = p_0[f_0.\get]^{a_0}$ and if $p_i'[f_i.\get]^{a_i}$ then $p_{i}' = p_{i}$;

\item  
if $p_i'[f_i.\get]^{a_i}$ or $p_{i}'[\await \ f_{i}]^{a_{i}}$ 
then $\nt{fut}(f_i,\bot) \in \nt{cn}$ and
\begin{itemize}
\item[--]
either $p_{i+1}'[f_{i+1}.\get]^{a_{i+1}}$ 
and 
$p_{i+1}' = \{ l_{i+1} | s_{i+1}\}$ and $f_i = l_{i+1}(\text{destiny})$;
\item[--]
or $p_{i+1}'[\await \; f_{i+1}]^{a_{i+1}}$ and
$p_{i+1}' = \{ l_{i+1} | s_{i+1}\}$ and 
$f_i = l_{i+1}(\text{destiny})$;
\item[--]
or 
$p_{i+1}' = p_{i+1} = \text{idle}$ and $a_{i+1}(\nt{cog}) = a_{i+2}(\nt{cog})$ 
and $p_{i+2}'[f_{i+2}.\get]^{a_{i+2}}$
(in this case $p_{i+1}$ is idle, by soundness).
\end{itemize}
\end{enumerate}

A configuration $\nt{cn}$ is \emph{deadlock-free} if, for every 
$\nt{cn} \lred{}^* \nt{cn}'$, $\nt{cn}'$ is not deadlocked. 
A {\coreABS} program is \emph{deadlock-free} if its initial configuration
is \emph{deadlock-free}.
\end{definition}

According to Definition~\ref{def.locks}, a configuration is deadlocked when there
is a circular dependency between processes. The processes involved
in such circularities are performing a \Abs{get} or \Abs{await} synchronisation or 
they are idle and will never grab the lock because another active process in the
same cog will not
return. We notice that, by Definition~\ref{def.locks}, at least one active process is blocked
on a \Abs{get} synchronisation. We also notice that the objects in Definition~\ref{def.locks} may be not pairwise different (see example 1 below). 
The following examples should make the definition clearer; the reader is recommended to instantiate the definition every time. 

\begin{enumerate}
\item 
(self deadlock)
$$\begin{array}{l}
  \nt{ob}(o_1,a_1,\{l_1 | x_1=e_1.\get;s_1\},q_1) \\ 
  \nt{ob}(o_2,a_2, \text{idle},q_2\cup\{l_2|s_2\}) \\ 
  \nt{fut}(f_2,\bot),
\end{array}$$ 

where $\feval{e_1}{(a_1+l_1)}=l_2(destiny)=f_2$ and
$a_1(\nt{cog})=a_2(\nt{cog})$. In this case, the object $o_1$ keeps the control of
its own cog while waiting for the result of a process in $o_2$. This process cannot
be scheduled because the corresponding cog is not released. A similar situation
can be obtained with one object:
$$\begin{array}{l}
  \nt{ob}(o_1,a_1,\{l_1 | x_1=e_1.\get;s_1\},q_1 \cup \{l_2|s_2\}) \\ 
  \nt{fut}(f_2,\bot),
\end{array}$$ 
where $\feval{e_1}{(a_1+l_1)}=l_2(destiny)=f_2$.
In this case, the objects of the Definition~\ref{def.locks} are
\[
\nt{ob}(o_1,a_1,p_1,q_1)
\quad
\nt{ob}(o_1,a_1,p_2,q_2 \cup \{ l_2|s_2\})
\]
where $p_1' = p_1 = \{l_1 | x_1=e_1.\get;s_1\}$, $p_2' = \{ l_2|s_2\}$ and
$q_1 = q_2 \cup \{ l_2|s_2\}$.

\item ($\get$-$\await$ deadlock)
$$\begin{array}{l}
\nt{ob}(o_1,a_1,\{l_1 | x_1=e_1.\get;s_1\},q_1) \\
\nt{ob}(o_2,a_2,\{l_2 | \await\ e_2\mbox{\tt ?};s_2\},q_2) \\ 
\nt{ob}(o_3,a_3,\text{idle}, q_3 \cup \{l_3 | s_3\})
\end{array}$$
where $l_3(destiny)=\feval{e_2}{a_2+l_2}$, $l_2(destiny)=\feval{e_1}{a_1+l_1}$,
$a_1(cog)=a_3(cog)$ and $a_1(cog) \neq a_2(cog)$. In this case, the objects $o_1$ and
$o_2$ have different cogs. However $o_2$ cannot progress because it is waiting 
for a result of a process that cannot be scheduled (because it has the same cog of 
$o_1$).

\item ($\get$-\text{idle} deadlock)
$$
\begin{array}{l}
   \nt{ob}(o_1,a_1,\{l_1 | x_1=e_1.\get;s_1\},q_1) \\ 
  \nt{ob}(o_2,a_2, \text{idle},q_1\cup\{l_2|s_2\}) \\ 
   \nt{ob}(o_3,a_3,\{l_3 | x_3=e_3.\get;s_3\},q_3) \\ 
  \nt{ob}(o_4,a_4, \text{idle},q_4\cup\{l_4|s_4\}) \\ 
   \nt{ob}(o_5,a_5,\{l_5 | x_5=e_5.\get;s_5\},q_5)\\ 
  \nt{fut}(f_1,\bot),\ \nt{fut}(f_2,\bot),\ \nt{fut}(f_4,\bot)
\end{array}
$$
where $f_2=l_2(destiny)=\feval{e_1}{a_1+l_1}$, $f_4=l_4(destiny)=\feval{e_3}{a_3+l_3}$, $f_1=l_1(destiny)=\feval{e_5}{a_5+l_5}$ and $a_2(\nt{cog})=a_3(\nt{cog})$ and  $a_4(\nt{cog})=a_5(\nt{cog})$.

 % $$\ntask{\task_1}{\obj_1}{\locked}{\task_2.\get},\ntask{\task_2}{\obj_2}{\unlocked}{\e_2},\ntask{\task_3}{\obj_2}{\locked}{\task_4.\get},\ntask{\task_4}{\obj_4}{\unlocked}{\e_4},\ntask{\task_5}{\obj_4}{\locked}{\task_1.\get}$$.
\end{enumerate}

A deadlocked configuration has at least one object that is stuck (the one performing
the \Abs{get} instruction). This means that the configuration may progress, but future
configurations will still have one object stuck.

\begin{proposition}
If $\nt{cn}$ is deadlocked and
$\nt{cn} \lred{} \nt{cn}'$ then $\nt{cn}'$ is deadlocked as well.
\end{proposition}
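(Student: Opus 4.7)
My plan is to show that the deadlock witness of $\nt{cn}$ transfers, essentially verbatim, to $\nt{cn}'$. Suppose $\nt{cn}$ is deadlocked with witness $\{\nt{ob}(o_i, a_i, p_i, q_i)\}_{0 \le i < n}$ together with $\{p_i'\}_{0 \le i < n}$ satisfying Definition~\ref{def.locks}, and let $\nt{cn} \lred{} \nt{cn}'$ via some reduction rule $R$. I would exhibit a witness for $\nt{cn}'$ formed by the same $o_i$'s and $p_i'$'s, where each $p_i'$ may have been promoted from the queue to the active slot but is otherwise textually unchanged.

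The technical heart of the proof is to establish three invariants preserved by any single reduction step. First (I1), each $\fut(f_i, \bot)$ remains in $\nt{cn}'$: the only rule writing a future binding is \rulename{Return}, which must be applied to the unique process whose $\text{destiny}$ field equals $f_i$; by Definition~\ref{def.locks} that process is $p_{i+1}'$, which is blocked on \Abs{get} or \Abs{await}, or is idle, and so never has \Abs{return} as its next statement. Second (I2), whenever the third bullet of Definition~\ref{def.locks}(2) applies (so $p_{i+1}' = p_{i+1} = \text{idle}$ and the cog is held by an active $o_{i+2}$ stuck on \Abs{get} of $f_{i+2}$), the cog element $\cog(a_{i+1}(\cog), o_{i+2})$ persists: the only rule that could change it is \rulename{Release-Cog}, which requires $o_{i+2}$ to be idle, but $o_{i+2}$'s active process is $p_{i+2}$ blocked on \Abs{get}, and by (I1) the awaited future is still $\bot$, so \rulename{Read-Fut} cannot fire either. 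Third (I3), each $p_i'$ persists in $o_i$ as either the active process or a member of the queue: processes leave a queue only via \rulename{Activate}, \rulename{Cog-Sync-Return-Sched} or \rulename{Self-Sync-Return-Sched}, each of which merely promotes the removed process to the active slot, and \rulename{Bind-Mtd} only adds; moreover if $p_i' = p_i$ is active and blocked on \Abs{get}, (I1) forbids any rule from consuming it.

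Together, (I1)--(I3) entail that every side-condition in Definition~\ref{def.locks} — the cog identities $a_i(\cog)$, the destiny pointers $l_i(\text{destiny})$, and the presence of $\fut(f_i, \bot)$ — transfers to $\nt{cn}'$, so the same witness shows that $\nt{cn}'$ is deadlocked. The main obstacle I expect is the tedious case analysis underlying (I1)--(I3): scanning every rule in Figures~\ref{fig:sem1} and~\ref{fig:sem2} and checking that none of them fires on a chain element in a way that would violate an invariant, with particular care around the interplay between \rulename{Cog-Sync-Call}, \rulename{Self-Sync-Call} and the matching cont-scheduling rules, and around the case where the witness reuses the same object (as in example~1 of Section~\ref{sec.deadlockanalysis}).
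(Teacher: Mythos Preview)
The paper states this proposition without proof; it appears as a one-line sanity remark immediately after Definition~\ref{def.locks}, so there is no paper-side argument to compare against. Your strategy --- carry the witness across the step and verify a small set of invariants by rule inspection --- is the natural one and is what a full proof would look like.

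There is, however, a genuine gap in your invariant (I1). You write that the process whose $\text{destiny}$ is $f_i$ ``is $p_{i+1}'$, which is blocked on \Abs{get} or \Abs{await}, or is idle''. This is fine for the first two bullets of Definition~\ref{def.locks}(2), where $f_i = l_{i+1}(\text{destiny})$ is explicit. But in the third bullet $p_{i+1}' = \key{idle}$, which carries no $\text{destiny}$ field at all, and the clause as written imposes \emph{no} constraint on where the process computing $f_i$ lives --- it only requires $a_{i+1}(\cog)=a_{i+2}(\cog)$ and that $p_{i+2}'$ be \Abs{get}-blocked. Taken literally, nothing prevents the unique process with destiny $f_i$ from being active in some object outside the witness chain and executing \rulename{Return}; after that step $\fut(f_i,\bot)\notin\nt{cn}'$ and the original witness fails. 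The paper's own examples (notably example~3 after the definition) suggest the intended reading additionally places the destiny-$f_i$ process in the queue of some object sharing the cog of $o_{i+2}$, so that the active \Abs{get}-blocked $o_{i+2}$ prevents it from being scheduled. Under that reading your (I1) goes through, but you should state the extra hypothesis explicitly rather than claim it follows from the definition as printed.
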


Definition~\ref{def.locks} is about runtime entities that have no
static counterpart. Therefore we consider a  notion weaker than  deadlocked configuration. 
This last
notion will be used in the Appendices to demonstrate the correctness of 
the inference system  in Section~\ref{sec.FJg-contracts}.

\begin{definition}
\label{def.obj-circularity}
A configuration $\nt{cn}$ has
\begin{itemize}
\item[(i)]
a \emph{dependency} $(c,c')$ if
$$
\nt{ob}(o,a,\{l | x=e.\get;s\},q), \nt{ob}(o',a',p',q') \in \nt{cn}
$$
with $\feval{e}{(a+l)} = f$ and $a(\nt{cog})=c$ and $a'(\nt{cog})=c'$  
and 
\begin{itemize}
\item[(a)] either $\nt{fut}(f,\bot) \in \nt{cn}$,
$l'(\text{destiny}) = f$ and $\{l' | s'\} \in 
p' \cup q'$; 

\item[(b)] 
or $\nt{invoc}(o',f,\m,\vect{v}) \in \nt{cn}$.
\end{itemize}

\item[(ii)]
 a   \emph{dependency} $(c,c')^{\aa}$ if 
$$
\nt{ob}(o,a,p,q), \nt{ob}(o',a',p',q') \in \nt{cn}
$$
and $ \{l | \await\ e\mbox{\tt ?};s\} \in p \cup q$
and $\feval{e}{(a+l)} = f$ and
\begin{itemize}
\item[(a)] either $\nt{fut}(f,\bot) \in \nt{cn}$,
$l'(\text{destiny}) = f$ and $\{l' | s'\} \in 
p' \cup q'$; 

\item[(b)] 
or $\nt{invoc}(o',f,\m,\vect{v}) \in \nt{cn}$.
\end{itemize}
\end{itemize}

 Given a set $A$ of dependencies, let the {\get}\emph{-closure} of
 $A$, noted $A^\get$, be the least set such that
 \begin{enumerate}
 \item[1. ] $A \subseteq A^\get$;
 \item[2. ] if $(c,c') \in A^\get$ and $(c',c'')^{[\aa]} \in A^\get$ then
 $(c,c'') \in A^\get$, where $(c',c'')^{[\aa]}$ denotes either the pair 
 $(c',c'')$ or the pair $(c',c'')^{\aa}$.
 \end{enumerate}

 A configuration contains a \emph{circularity} if the {\get}-closure of its 
set of dependencies has a pair $(c,c)$.
 \end{definition}

\begin{proposition}
If a configuration is deadlocked then it has a circularity. The converse is
false.
\end{proposition}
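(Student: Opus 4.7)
The plan is to prove the two assertions separately, the forward implication by a chain-tracing argument and the converse by an explicit counterexample.

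For the forward direction, I would start from a deadlocked configuration and unfold the witness from Definition~\ref{def.locks}: objects $\nt{ob}(o_i,a_i,p_i,q_i)$ with processes $p_i' \in p_i \cup q_i$ for $i = 0,\ldots,n-1$, and set $c_i = a_i(\nt{cog})$. For each index $i$ I would produce one dependency in $A$ from the corresponding clause of condition~(2). When $p_i'$ performs a $\get$ on $f_i$ and $p_{i+1}'$ is a process with $l_{i+1}(\destiny) = f_i$, clause~(i) of Definition~\ref{def.obj-circularity} directly yields a get-dependency $(c_i,c_{i+1}) \in A$; the $\await$ case symmetrically gives $(c_i,c_{i+1})^{\aa}$ via clause~(ii). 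For the idle case, the equality $a_{i+1}(\nt{cog}) = a_{i+2}(\nt{cog})$ collapses two steps into one: the process on $o_{i+1}$ with destiny $f_i$ (which must sit in $q_{i+1}$ since $p_{i+1}$ is idle) still provides a get- or await-dependency from $c_i$ to $c_{i+1} = c_{i+2}$, so the cycle is preserved after relabelling.

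Next I would close the chain under the get-closure rule. Condition~(1) guarantees that the first dependency $(c_0,c_1)$ is a pure get-dependency, so composing it with the (possibly await-marked) $(c_1,c_2)^{[\aa]}$ gives the get-dependency $(c_0,c_2)$, and iterating along the cycle yields $(c_0,c_{n-1}) \in A^{\get}$; one more application with the wrap-around dependency $(c_{n-1},c_0)^{[\aa]}$ produces $(c_0,c_0) \in A^{\get}$, which is precisely a circularity. The main obstacle is the bookkeeping around the idle case, where the witnessing process with destiny $f_i$ does not appear explicitly in Definition~\ref{def.locks} and must be recovered from the queue of $o_{i+1}$, together with checking that the resulting relabelled chain still covers every cog of the witness without leaving a gap that the closure rule cannot bridge.

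For the converse, I would exhibit a configuration with a circularity that nevertheless admits a reduction sequence eliminating the blocking. Take two cogs $c_1 \neq c_2$ with objects $o_1$ and $o_2$ respectively, where $o_1$'s active process has just invoked some method $o_2.\m$ producing future $f_2$ and is about to execute $\await\ f_2 \mbox{\tt ?}$, while $o_2$'s active process has invoked $o_1.\n$ producing future $f_1$ and is blocked on $x = f_1.\get$. The set $A$ contains the get-dependency $(c_2,c_1)$ and the await-dependency $(c_1,c_2)^{\aa}$, so by transitivity $(c_2,c_2) \in A^{\get}$: a circularity is present. Yet the configuration is not deadlocked, because rule \rulename{Await-False} allows $o_1$'s active process to release cog $c_1$; the queued invocation of $\n$ on $o_1$ can then be scheduled by \rulename{Activate}, resolve $f_1$ via \rulename{Return}, unblock $o_2$'s get, and ultimately let $f_2$ be resolved so that $o_1$'s await succeeds. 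Since no chain satisfying Definition~\ref{def.locks} can be extracted (the await step breaks condition~(1)/(2) as soon as we try to close the loop at $c_1$), this witnesses the strict inclusion of deadlocked configurations in configurations with a circularity.
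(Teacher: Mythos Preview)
Your proposal is correct. The forward direction is exactly the elaboration the paper omits when it says the implication is ``a straightforward consequence of the definition of deadlocked configuration''; your chain-tracing via Definition~\ref{def.obj-circularity}, including the collapse of the idle step using $c_{i+1}=c_{i+2}$, is the right unpacking.

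Your counterexample differs from the paper's but is equally valid. The paper uses three objects: $o_1$ (cog $c_1$) executing a \Abs{get} on a future whose destiny belongs to $o_3$ (cog $c_2$), while $o_2$ (cog $c_2$) is idle with a queued \Abs{await} on $o_1$'s destiny, and $o_3$ is about to \Abs{return}. This yields dependencies $(c_1,c_2)$ and $(c_2,c_1)^{\aa}$, hence $(c_1,c_1)\in A^{\get}$, yet no deadlock chain exists because the process holding the relevant destiny ($o_3$'s active \Abs{return}) is neither a \Abs{get}, nor an \Abs{await}, nor idle. Your two-object example instead exploits that the \Abs{await} on $o_1$ is active and can release its cog, so the queued body of $\n$ (which is not performing any synchronisation) blocks the formation of a chain at index~1. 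Both constructions break Definition~\ref{def.locks} at the same point---the process with destiny $f_0$ fails the get/await/idle test---but the paper's version makes this failure more explicit by putting an innocuous \Abs{return} there, whereas yours relies on the (unstated) assumption that $\m$ and $\n$ have trivial bodies. You may want to state that assumption explicitly.
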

\begin{proof}
The statement is a straightforward consequence of the definition of
deadlocked configuration. To show that the converse is false, consider the configuration
$$\begin{array}{l}
\nt{ob}(o_1,a_1,\{l_1 | x_1=e_1.\get;s_1\},q_1) \\ 
\nt{ob}(o_2,a_2,\text{idle}, q_2 \cup \{l_2 | \await\ e_2\mbox{\tt ?};s_2\}) \\
\nt{ob}(o_3,a_3,\{l_3 | \return\ e_3\},q_3)\quad\nt{cn}
\end{array}$$
where $l_3(destiny)=\feval{e_1}{a_1+l_1}$, $l_1(destiny)=\feval{e_2}{a_2+l_2}$,
$c_2=a_2(cog)=a_3(cog)$ and $c_1=a_1(cog)\neq c_2$. 
This configuration has the dependencies 
$$\{(c_1,c_2), (c_2,c_1)^\aa \}$$ whose {\get}-closure 
contains the circularity $(c_1,c_1)$. However
the configuration is not deadlocked.\qed

\end{proof}

\begin{example}\label{ex.cpx_sched-cont}
The final configuration of Figure~\ref{fig:red_cpx_sched} is \emph{deadlocked}
according to Definition~\ref{def.locks}.
In particular, there are two objects $o$ and $o''$ running on different cogs 
whose active processes have a $\get$-synchronisation on the result of process
in the other object: $o$ is performing a $\get$ on a future $f'''$ which is $l'_{o''}(\text{destiny})$, and $o''$ is performing a $\get$ on a future $f''''$ which is $l'_{o}(\text{destiny})$ and $\nt{fut}(f''',\bot)$ and 
$\nt{fut}(f'''',\bot)$. We notice that, if in the configuration $(\star)$ we choose
to evaluate $\nt{invoc}(o'',f',\m2,o)$ when the evaluation of 
$\nt{invoc}(o,f'',\m2,o'')$ has been completed (or conversely) then no deadlock is
manifested.
\end{example}
%%% Local Variables: 
%%% mode: latex
%%% TeX-master: "subm-SoSyM"
%%% End: 

\section{Restrictions of  {\coreABS} in the current release of the contract 
inference system}
\label{sec.restrictions}
%!TEX root = SoSyM.tex

The contract inference system we describe in the next section has been prototyped. 
To verify its feasibility, the current release of the prototype addresses a subset of 
{\coreABS} features. 
These restrictions permit to ease the initial
development of the inference system and do not jeopardise its
extension to the full language. Below we discuss the restrictions and, for each
of them, either we explain the reasons why they will be retained in the next
release(s) or we detail the techniques that will be used to remove them.
(It is also worth to notice that, notwithstanding the following restrictions, it is
still possible  to 
verify large commercial cases, 
such as a core component of FAS discussed in this paper.)

% FJf is a pure object oriented language.
% On the other hand, \ABS include a large variety of values different from object, called data-types that abstract classic data structures, like lists, sets or maps.
% what is the problem? Addition of an object to a set for instance.
% dealt with as if it were only one object...
% they are usually used to construct lists or maps or sets of objects, and to go over elements of such construct.
% In that case, we can abstract sets by one of their element, and while loop by only one execution.

\paragraph{Returns.}
{\coreABS} syntax admits \Abs{return} statements with continuations -- see Figure~\ref{fig:core:lan}
-- that, according to the semantics, are executed \emph{after the return value has been
delivered to the caller}. These continuations can be hardly controlled by programmers and
usually cause unpredictable behaviours, in particular as regards deadlocks. To increase the
precision of our analysis we assume that {\coreABS} programs have empty continuations of 
\Abs{return} statements. We observe that this constraint has an exception at run-time: in order
to define the semantics of synchronous method invocation, rules \rulename{Cog-Sync-Call}
and \rulename{Self-Sync-Call} append a \Abs{cont} $f$ continuation to statements in order to 
let the right caller be scheduled. Clearly this is the {\coreABS} definition of synchronous
invocation and it does not cause any misbehaviour. 

\paragraph{Fields assignments.}
Assignments in {\coreABS} (as usual in object-oriented languages) may update the 
fields of objects that are accessed concurrently by other threads, thus
could lead to indeterminate behaviour. 
In order to simplify the analysis, we constrain field assignments as follows.
If the field is \emph{not of future type} then 
we keep field's record structure unchanging. 
For instance, if a field contains an object
of cog $a$, then that field may be only updated with objects belonging to $a$
(and this correspondence must hold recursively with respect to the fields of
objects referenced by $a$).
When the field is of a primitive
type (\Abs{Int}, \Abs{Bool}, etc.)
this constraint is equivalent to the standard type-correctness.
It is possible to be more liberal as regards fields assignments.
In~\cite{GiachinoLascu} an initial study for covering full-fledged field
assignments was undertaken using so-called union types (that is, by extending
the syntax of future records with a $+$ operator, as for contracts, see below) and
collecting all records in the inference rule of the field assignment (and the conditional).
When the field is \emph{of future type} then we disallow assignments. In fact, 
assignments to such fields allow a programmer to define unexpected behaviours.
Consider for example the following class  \Abs{Foo} implementing \Abs{I_Foo}:
\begin{absexamplen}[numbers=right]
class Foo(){
	Fut<T> x ;
	Unit foo_m () {
		Fut<T> local ;
		I_Foo y = new cog Foo() ;
		I_Foo z = new cog Foo() ;
		local = y!foo_n(this) ;
		x = z!foo_n(this) ;
		await local? ;
		await x?
	}
	T foo_n(I_Foo x){ . . . }
}
\end{absexamplen}
If the main function is
\begin{absexamplen}[firstnumber=14,numbers=right]
	{    I_Foo x ;
	   Fut<Unit> u ; 
	   Fut<Unit> v ;
	   x = new cog Foo() ;
	   u = x!foo_m() ;
	   v = x!foo_m() ; 	}
\end{absexamplen}
then the two invocations in lines 18 and 19 run in parallel.
Each invocation of \Abs{foo_m} invokes \Abs{foo_n} twice that apparently terminate when \Abs{foo_m} returns (with the two final \Abs{await} statements). 
However this may be not the case because the invocation of \Abs{foo_n} line 8 is stored in a field:
 consider that the first invocation of \Abs{foo_m} (line 18) starts executing, sets the field \Abs{x} with its own future $f$,
 and then, with the \Abs{await} statement in line 9, the second invocation of \Abs{foo_m} (line 19) starts.
That second invocation {\em replaces} the content of the field \Abs{x} with its own future $f'$:
 at that point, the second invocation (line 19) will synchronise with $f'$ before terminating,
 then the first invocation (line 18) will resume and also synchronised with $f'$ before terminating.
Hence, even after both invocations (line 18 and 19) are finished, the invocation of \Abs{foo_n} in line 8 may still be running.
It is not too difficult to trace such residual behaviours 
in the inference system (for instance, by grabbing them using a function
like $\contract{\Gamma}$). However, this extension will  
entangle the inference system and for this reason we decided to deal with generic
field assignments in a next release.
%
%without requiring any sophisticated technique to be
%solved. 

It is worth to recall that these restrictions
does not apply to local variables of methods, as they can only be 
accessed by the method in which they are declared. 
Actually, the foregoing inference algorithm tracks changes of local variables.

% FJg cannot handle assignement in general, as it associate to each field and variable only one record.
% we do the following extension of FJg and restriction on ABS:
% it is actually sound to change the record of a local variable, because there are no concurrent task that can access it. <- I should find a better explanation, but it's its core.
% all the objects assigned to the same field should be equal.
%
%

%THIS EXTENSION HAS BEEN DONE!
%In the current prototype we disallow assignments between future variables. These 
%kind of assignments create \emph{aliases} and their correct management would entangle
%the inference system (which is already complex).
%The extension of the inference system with a standard 
%alias analysis will permit the removal of this constraint.

\paragraph{Interfaces.}
In {\coreABS} objects are typed with interfaces, which may 
have several implementations. As a consequence,
when a method is invoked, it is in general not possible to statically determine 
which method will be executed at runtime (dynamic dispatch). This is
problematic for our technique because it breaks the association of a 
unique abstract behaviour with a method invocation.
In the current inference system this issue is avoided by constraining codes
to have interfaces implemented by at 
most one class.
%%
%As a consequence, we can construct a partial function from interfaces to class and each time we have a method call in the code,
% we can precisely know, with that function, which method of which class will be executed.
%
This restriction will be relaxed by admitting that methods
have multiple contracts, one for every possible implementation. In turn, method invocations are defined as the \emph{union} of the possible contracts a method has.  

\paragraph{Synchronisation on booleans.}
In addition to synchronisation on method invocations, {\coreABS} permits 
synchronisations on Booleans, with the statement {\Abs{await} $e$}.
When {$e$} is \Abs{False}, the execution of the method is suspended, and when 
it becomes \Abs{True}, the \Abs{await} terminates and the execution of the method may proceed.
It is possible that the expression $e$  refers to a field of an object that can be modified 
by another method. In this case,
the \Abs{await} becomes synchronised with any method that
may set the field to \name{true}.
This subtle synchronisation pattern is difficult to infer and, for this reason, we
have restricted the current release of {\DFfABS}.

Nevertheless, the current release of {\DFfABS} adopts a naive solution for 
 \Abs{await} statements on booleans, namely let programmers annotate them with the dependencies they create.
For example, consider the annotated code:

\begin{absexamplen}
 class ClientJob(...) {
  Schedules schedules = EmptySet; 
  ConnectionThread thread; 
  ...
  Unit executeJob() {
    thread = ...; 
    thread!command(ListSchedule);
    [thread] await schedules != EmptySet; 
    ... 
  }
}
\end{absexamplen}

The statement \Abs{await} compels the task to wait for \Abs{schedules} 
to be set to something different from the empty set. 
Since \Abs{schedules} is a field of the object, any concurrent thread (on that object) may update it.
%It is not evident how to extract this implicit dependency relation from the guard
%of \Abs{await}. 
%Therefore we constrain the programmer to  provide an annotation making 
%explicit the dependency. 
%
In the above case,
the object that will modify the boolean guard is stored in the variable \Abs{thread}. 
Thus the annotation
%Exactly this information is written in the annotation 
\Abs{[thread]} in the \Abs{await} statement. The current inference system of {\DFfABS}
is extended with rules dealing with \Abs{await} on boolean guard and, of course, the
correctness of the result depends on the correctness of the \Abs{await} annotations.
A data-flow analysis of boolean guards in \Abs{await} statements may produce
a set of cog dependencies that can be used in the inference rule of the
corresponding statement. While this is an interesting issue, it will not be
our primary concern in the near future.
%This way, it will be possible to define a dependency for that \Abs{await} statement in our inference system.

%This means that the programmer is obliged to write a set of dependencies between the \Abs{this} object and the objects whose
%methods may modify the field. [By default, the inference system adds dependencies
%between \Abs{this} and \emph{any} object name either in the formal parameters or 
%created by previous statements of the method.]

\paragraph{Recursive object structures.}
In \coreABS, like in any other object-oriented language, it is possible to
define circular object structures, such as an object storing a pointer to itself
in one of its fields.
%%%%%%%%%%%%%%%%%%%%%%%
%%%%%%%%%%%%%%%%%%%%%%% TODO
%%%%%%%%%%%%%%%%%%%%%%%
Currently, the contract inference system cannot deal with recursive structures, 
because the
semi-unification process associates each object with a finite tree structure.
In this way, it is not possible to capture circular definitions, such as the
recursive ones. 
Note that this restriction still allows recursive definition of classes.
We will investigate whether it is possible to extend the semi-unification 
process by associating 
\emph{regular terms}~\cite{tipoW-Coppo:TTPK-98}
to objects in the semi-unification process. These regular terms might be
derived during the inference by extending the {\coreABS}
code with annotations, as done for letting syntonisations on booleans.

\paragraph{Discussion.}
% three things to say
% 1. not that restrictive (we can check a industrial size case study)
% 2. many limitations are also present in other tools (DECO with local futures and one implementation)
% 3. many of our limitations are easy to fix, it just takes time to implement.
%   The two scientific challenge we really have are precise assignement and recursive structures.
The above restrictions do not severely restrict both programming and the precision of our analysis.
As we said, despite these  limitations, we were able to apply our tool to the 
industrial-sized case study FAS from SDL Fredhopper
and detect that it was dead-lock-free. It is also worth to observe that
most of our restrictions can be removed with a simple extension of the current implementation. The restriction that may be challenging to remove is the 
one about recursive object structures, which requires the extension of 
semi-unification to such structures.
We finally observe that other deadlock analysis tools have
restrictions similar to those discussed in this section. 
For instance, {\tt DECO} doesn't allow
futures to be passed around (i.e.~futures cannot be returned or put in an object's field)
and constraints interfaces to be implemented by at most one class~\cite{Antonio2013}.
Therefore, while {\tt DECO} supports the analysis in presence of field updates, 
our tool supports futures to be returned.

\section{Contracts and the contract inference system}
\label{sec.FJg-contracts}
%!TEX root = SoSyM.tex

%
The deadlock detection framework we present in this paper relies on 
abstract descriptions, 
called \emph{contracts}, that are extracted from programs by an inference system.
The syntax of these descriptions, which is defined in Figure~\ref{fig:synCon}, 
uses \emph{record names} $\X$, $\Y$, $\Z$, $\cdots$, and \emph{future names} $f$, $f'$,
$\cdots$.
\begin{figure*}[t]
  \centering
\mysyntax{\mathmode}{ 
% Simple records, used for communication between methods
  \simpleentry \frr [] \unit \bnfor \X \bnfor
    \mbox{$[\cog {:} \obj, \vect{x} {:} \vect{\frr}]$} \bnfor \mbox{$\fRec{\objA}{\frr}$} [future record]
\\
\\
% Contracts. Only references simple records
  \simpleentry \cntc [] \pinull %\bnfor \nu f\bnfor \nu c
    \bnfor \pinull.(\obj,\objA') 
    \bnfor \mbox{$\pinull$}.(\obj,\objA')^\aa
    \bnfor \C.\m \; \frr(\bar{\frr})\rightarrow \frr'
    \bnfor \C!\m \; \frr(\bar{\frr})\rightarrow\frr'
    \bnfor \mbox{$\C!\m \;  \frr(\bar{\frr})\rightarrow\frr' \seqpoint (\obj,\objA')$}   
    \qquad \qquad~ [contract]
    \oris  \mbox{$\C!\m \; \frr(\bar{\frr})\rightarrow\frr' \seqpoint (\obj,\objA')^{\aa}$} 
     \bnfor \cntc\fatsemi \cntc 
     \bnfor  \cntc + \cntc 
     %\bnfor \cntc \parop \cntc
     \bnfor \cntc \rfloor \cntc [{}]
\\
\\
% Records with pointers. Used inside a method. This syntax is more permissive than necessary, as we need pointers only at the first layer (not in depth)
%\simpleentry \frx []  \unit \bnfor \X \bnfor
%    \mbox{$[\cog {:} \obj, \vect{x} {:} \vect{\frx}]$} \bnfor
%    \mbox{$\fRec{\objA}{\frr}$} \bnfor f 
%    [extended future record]
\simpleentry \frx []  \frr\bnfor f [extended future record]
\\
\\
% record and contract in the future
	\simpleentry \frz []
    (\frr, \cntc) \bnfor (\frr,\pinull)^\checkmark
    [future reference values]
}
  \caption{Syntax of future records and contracts.}
\label{fig:synCon}
\end{figure*}
Future records $\frr$, which
encode the values of expressions in contracts, may be one of the
following: 
\begin{itemize}
\item[--]
a dummy value $\unit$ that models 
primitive types,
\item[--]
a record name $X$ that 
 represents a place-holder for a value and can be
instantiated by substitutions, 
\item[--]
$[\cog {:} \objA, \bar{x} {:} \bar{\frr}]$
that defines an object with its cog name $\objA$ and the values for fields and parameters of the object, 
\item[--]
and  
$\fRec{\objA}{\frr}$, which specifies that accessing
$\frr$ requires control of the cog $\objA$
(and that the control is to be released once the method has been evaluated).
The future record $\fRec{\objA}{\frr}$ is associated with method
invocations: $\objA$ is the cog of the object on which the method is invoked.
%the object set $\objA$ contains the possible receiver objects of the invoked method.
%
%A future record $\frr + \frs$ is associated to the conditional \name{if} 
%statement, and defines a future record that can be either of the form $\frr$ or of the 
%form $\frs$.
%
%\Cosimo{THIS STUFF IS NOT RELEVANT?
The name $\obj$ in $[\cog {:} \obj, \xbar {:} \bar{\frr}]$ and $\fRec{\obj}{\frr}$ 
will be called \emph{root} of the future record. % and is returned by the (partial)
% function $\rt{\cdot}$.
%A future record is \emph{linear} if the object names and the record names
%occur linearly.
%The function $\gr{ \cdot }$ returns the object and record names in the argument.
%% (we usually apply it
%%to $\Fr$ and $\frr$; let also $\rt{\objA[\fbar {:} \bar{\frr}]}=\objA$.
%%
%}
\end{itemize}

Contracts $\cntc$ collect the method 
invocations and the dependencies inside statements. 
In addition to $\pinull$, $\pinull.(\obj,\objA')$, and $\pinull.(\obj,\objA')^{\aa}$ that
respectively represent the empty behaviour, the dependencies due to a
\Abs{get} and an \Abs{await} operation, we have basic contracts that deal
with method invocations.
There are several possibilities:
\begin{itemize}
\item 
$\C.\m \, \frr(\bar{\frr})\rightarrow\frr'$
(resp.~$\C!\m \, \frr(\bar{\frr})\rightarrow\frr'$) 
specifies that 
the method {\m} of class {\C} is going to be invoked \emph{synchronously}
(resp.~\emph{asynchronously})
on an object $\frr$, with arguments
$\bar{\frr}$, and an object $\frr'$ will be returned;
\item
$\C!\m \, \frr(\bar{\frr})\rightarrow\frr'\seqpoint (\obj,\objA')$ indicates that 
the current method execution requires the termination of method 
$\C!\m$ running on an object of cog $\objA'$ in order to release the object of the cog $\obj$. 
This is the contract of an asynchronous method invocation followed by a \Abs{get} operation on 
the same future name.
\item
$\C!\m \, \frr(\bar{\frr})\rightarrow\frr'\seqpoint (\obj,\objA')^{\aa}$, 
indicating that the current method execution requires
the termination of method $\C.\m$ running on an object of cog $\objA'$ in order to 
progress. This is the contract of an asynchronous method invocation followed by
an \Abs{await} operation, and, possibly but not necessarily, by a \Abs{get}
operation.
In fact, a \Abs{get} operation on the same future name does not add any
dependency, since it is guaranteed to succeed because of the preceding
\Abs{await}.
\end{itemize}
The composite contracts $\cntc\fatsemi \cntc'$ and $\cntc + \cntc'$
define the abstract behaviour of sequential compositions and
conditionals, respectively.
The contract $\cntc \rfloor \cntc'$ require a 
separate discussion because it models parallelism, which is not explicit
in {\coreABS} syntax. We will discuss this issue later on.

%defines  of contracts.
% the abstract behaviour of (expressions with side-effects in) statements. 
%The contract  
%, and parallel $\cntc \parop \cntc'$.
% defines the abstract behaviour of conditionals.
%branching and is  associated to conditionals.

\begin{example}
As an example of contracts, let us discuss the terms:
\begin{enumerate}
\item[(a)] ${\tt C.m} [\cog {:} \obj_1, \x {:} [ \cog {:} \obj_2]]() \rightarrow[\cog {:} \obj_1', \x {:} [ \cog{:}\obj_2]] \, \fatsemi$\\
 ${\tt C.m} [\cog {:} \obj_3 , \x {:} [\cog {:} \obj_4 ]]() \rightarrow [\cog {:} \obj_3', \x  {:} [\cog {:} \obj_4]]$;
 
 \medskip
 
\item[(b)] 
${\tt C!m} [\cog {:} \obj_1, \x  {:} [ \cog {:} \obj_2]]() \rightarrow [\cog {:} \obj_1',
\x  {:} [\cog {:} \obj_2 ]] \seqpoint (\obj,\obj_1) \, 
\fatsemi \, {\tt C!m} [\cog {:} \obj_3, \x  {:} [ \cog {:} \obj_4 ]]() \rightarrow
[\cog {:} \obj_3', \x  {:} [ \cog {:} \obj_4 ]] \seqpoint (\obj,\obj_3)^\aa$. 
\end{enumerate}
The contract (a) defines a sequence of two synchronous invocations of method {\m} of class {\C}. We notice that the cog names $c_1'$ and $c_3'$ are free: this indicates
that {\C.\m} returns an object of a new cog.
%
% the future record of the first one is 
%$ [\cog{:} \obj_1, \x{:}[\cog {:}\obj_2] ]$, the future record of the second one is 
%$ [\cog {:} \obj_3, \x  {:} [ \cog {:} \obj_4]]$. This contract is enforcing 
%the constraint $\obj_1 = \obj_3 = c$ on cogs, where $c$ is callee's cog (this means 
%that the callee is accessing synchronously to objects belonging to its cog).
%
As we will see below, a {\coreABS} expression with  
this contract is {\tt x.m() ; y.m() ;}.
%, with {\tt x} and {\tt y} variables
%of class {\C}. 
%In turn, the analysis of (a) will add to its abstract model
%two disjoint sets of cog dependencies corresponding to those  of 
%${\tt C.m} [\cog {:} \obj_1, \x {:} [ \cog {:} \obj_2]]() \rightarrow[\cog {:} \obj_1', \x {:} [ \cog{:}\obj_2]]$ and of 
% ${\tt C.m} [\cog {:} \obj_3 , \x {:} [\cog {:} \obj_4 ]]() \rightarrow [\cog {:} \obj_3', \x  {:} [\cog {:} \obj_4]]$.

The contract (b) defines an asynchronous invocation of {\C.\m} followed by a \Abs{get} 
statement and an asynchronous one followed by an \Abs{await}.
% ,
%additionally,
%expresses that the value of the first invocation is required as well
%as the termination of the second invocation. 
The cog $c$ is the one of the caller.
A {\coreABS} expression retaining 
this contract is {\tt u = x!m() ; w = u.get ; v = y!m() ; await v? ;}.
%, with {\tt x} and {\tt y} variables
%of class {\C}.
\end{example}

\medskip

\ifIFM
\else

\medskip

The inference of contracts uses two additional syntactic categories: $\frx$ of future
record values and $\frz$ of typing values. The former one extends future records with
\emph{future names}, which are used to carry out the \emph{alias analysis}. 
In particular, every
local variable  of methods and every object field and parameter of future type is 
associated to a future name. Assignments between these terms, such as $x = y$, amounts 
to copying 
future names instead of the corresponding values ($x$ and $y$ become aliases). 
The category $\frz$ collects the typing values of future names, which are 
either $(\frr,\cntc)$, for \emph{unsynchronised futures}, or $(\frr,\pinull)^\checkmark$,
for \emph{synchronised ones}
(see the comments below).

\smallskip

The abstract behaviour of methods is defined by \emph{me\-thod contracts} 
$\mcontract{}{\frr(\bar{\frs} %_1, \cdots, \frs_n
)}{\pairl{\cntc}{\cntc'}}{\frr'}$,
where 
$\frr$ is the future record of the receiver of the method, $\bar{\frs}$ are the future records of the arguments, $\pairl{\cntc}{\cntc'}$ is the abstract behaviour
 of the body, where $\cntc$ is called 
\emph{synchronised contract} and  $\cntc'$ is called \emph{unsynchronised contract}, 
 and 
$\frr'$ is the future record of the returned object. 

Let us explain why method contracts  use pairs of contracts. In {\coreABS}, 
invocations in method bodies are of two types: (\emph{i})  \emph{synchronised}, that is
the asynchronous invocation has a subsequent \Abs{await} or \Abs{get} operation in the
method body and  (\emph{ii}) \emph{unsynchronised}, the asynchronous invocation 
has no corresponding  \Abs{await} or \Abs{get} in the same method body. (Synchronous
invocations can be regarded as asynchronous invocations followed by a \Abs{get}.)
For example, let
\begin{absexamplen}
x = u!m() ;
await x? ;
y = v!m() ;
\end{absexamplen}
be the main function of a program (the declarations are omitted).
In this statement, the invocation {\tt u!m()} is synchronised before the execution
of {\tt v!m()}, which is unsynchronised. {\coreABS} semantics  
tells us that the body
of {\tt u!m()} is \emph{performed before} the body of {\tt v!m()}. However, while
this ordering holds for the synchronised part of {\tt m}, it may not hold
for the unsynchronised part. In particular, the unsynchronised part of {\tt u!m()}
may run \emph{in parallel} with the body of {\tt v!m()}. For this reason, in this case,
our inference system returns the pair
\[
\pairl{\C\mbox{\tt !m}\ [\cog {:} c'](~) \rightarrow \unit \seqpoint (c,c')^\aa}{
\C\mbox{\tt !m}\ [\cog {:} c''](~) \rightarrow \unit}
\]
where $c$, $c'$ and $c''$ being the 
cog of the caller, of {\tt u} and {\tt v}, respectively. Letting 
$\C\mbox{\tt !m}\ [\cog {:} c'](~) \rightarrow \unit = \pairl{\cntc_u}{\cntc_u'}$ and
$\C\mbox{\tt !m}\ [\cog {:} c''](~) \rightarrow \unit = \pairl{\cntc_v}{\cntc_v'}$, one
has (see Sections~\ref{sec.contractanalysis} and~\ref{sec.mutanalysis})
\[
\begin{array}{l}
\pairl{\C\mbox{\tt !m}\ [\cog {:} c'](~) \rightarrow \unit \seqpoint (c,c')^\aa}{
\C\mbox{\tt !m}\ [\cog {:} c''](~) \rightarrow \unit}
\\
\quad = \quad \pairl{\cntc_u \seqpoint (c,c')^\aa}{\cntc_u' \rfloor (\cntc_v \fatsemi \cntc_v')}
\end{array}
\]
that adds the dependency $(c,c')^\aa$ to the synchronised contract of {\tt u!m()}
and makes the parallel (the $\rfloor$ operator) of  the unsynchronised contract of 
{\tt u!m()} and the contract of {\tt v!m()}.
Of course, in alternative to separating \emph{synchronised} and \emph{unsynchronised contracts},
one might collect all the dependencies in a unique contract.
This will imply that the dependencies in different configurations will be
gathered in the same set, thus significantly reducing the precision of the analyses
in Sections~\ref{sec.contractanalysis} and~\ref{sec.mutanalysis}.

The above discussion also highlights the need of contracts 
$\cntc \rfloor \cntc'$. In particular, this operator models \emph{ parallel
behaviours}, which is not a first class operator in {\coreABS}, while it is 
central in the semantics (the objects in the configurations). 
We illustrate the point with a statement similar to the above one, where we have
swapped the second and third instruction
\begin{absexamplen}
x = u!m() ;
y = v!m() ;
await x? ;
\end{absexamplen}
According to {\coreABS} semantics, it is possible that the bodies of
{\tt u!m()} and of {\tt v!m()} run in parallel by interleaving their executions. 
In fact, in this case, our inference system returns the pair of contracts (we keep
the same notations as before) 
\[
\begin{array}{rll}
\langle & \C\mbox{\tt !m}\ [\cog {:} c'](~) \rightarrow \unit \seqpoint (c,c')^\aa
\rfloor \C\mbox{\tt !m}\ [\cog {:} c''](~) \rightarrow \unit \; \fatcomma &
\\ 
& \C\mbox{\tt !m}\ [\cog {:} c''](~) \rightarrow \unit & \rangle
\end{array}
\]
which turns out to be equivalent to 
$$
\C\mbox{\tt !m1}\ [\cog {:} c'](~) \rightarrow \unit.(c,c')^\aa \rfloor 
\C\mbox{\tt !m2}\ [\cog {:} c'](~) \rightarrow \unit
$$
(see Sections~\ref{sec.contractanalysis}
and~\ref{sec.mutanalysis}).

The subterm $\frr(\bar{\frs})$ of the method contract 
%$\mcontract{\C.\m}{\frr(\bar{\frs} %_1, \cdots, \frs_n
%)}{\cntc}{\frr'}$
is called 
\emph{header};
$\frr'$ is called \emph{returned future record}. We assume that cog and record
names in the header occur linearly. Cog and record names in the header \emph{bind} the 
cog and record names in $\cntc$ and in $\frr'$.
The header and the returned
future record, written $\interface{\frr}{\bar{\frs}}{\frr'}$, are called \emph{contract signature}. In a method contract $\mcontract{}{\frr(\bar{\frs} %_1, \cdots, \frs_n
)}{\pairl{\cntc}{\cntc'}}{\frr'}$, cog and record names occurring in $\cntc$ or $\cntc'$ or 
$\frr'$ may be \emph{not bound} by header. These \emph{free names} correspond to 
\Abs{new cog} instructions and will be replaced by fresh cog names during the analysis.

\fi

\medskip

\begin{figure*}[t]
expressions and addresses
\myrules{\mathmode\compactmode}{
% Pure Expression. We leave out all the rules about data types, it is a little bit technical and annoying
\entry[{\scriptsize (T-Var)}]{\Gamma(x) = \frx}{\inferpe{\Gamma}{\obj}{x}{\frx}} \and
\entry[{\scriptsize (T-Fut)}]{\Gamma(f) = \frz}{\inferpe{\Gamma}{\obj}{f}{\frz}} 
\and
\entry[{\scriptsize (T-Field)}]{x\not\in\dom(\Gamma)\\\Gamma(\ethis.x)=\frr}{\inferpe{\Gamma}{\obj}{\x}{\frr}} \and
\entry[{\scriptsize (T-Value)}]{\inferpe{\Gamma}{\obj}{e}{f}\\\inferpe{\Gamma}{\obj}{f}{(\frr,\cntc)^{[\checkmark]}}}
  {\inferpe{\Gamma}{\obj}{e}{\frr}} 
\and
\entry[{\scriptsize (T-Val)}]{e \quad  \textit{primitive value or arithmetic-and-bool-exp}}{\inferpe{\Gamma}{\obj}{e}{\unit}}
\and
\entry[{\scriptsize (T-Pure)}]{\inferpe{\Gamma}{\obj}{e}{\frr}}{\inferee{\Gamma}{\obj}{e}{\frr}{\pinull}{\etrue}{\Gamma}}
}	
expressions with side effects
\myrules{\mathmode\compactmode}{
\entry[{\scriptsize (T-Get)}]{\inferpe{\Gamma}{\obj}{x}{f}\\
\inferpe{\Gamma}{\obj}{f}{(\frr, \cntc)} \\\X,\objb \fresh
\\
\Gamma'=\Gamma[f \mapsto (\frr, \pinull)^\checkmark]}
  {\inferee{\Gamma}{\obj}{x.\nget}{\X}{\cntc\seqpoint(\obj,\objb) \rfloor\contract{\Gamma'}}{\frr=\fRec{\objb}{\X}}{\Gamma'}} 
 \and
\entry[{\scriptsize (T-Get-tick)}]{\inferpe{\Gamma}{\obj}{x}{f}\\
\inferpe{\Gamma}{\obj}{f}{(\frr, \cntc)^\checkmark}\\\X,\objb \fresh}
  {\inferee{\Gamma}{\obj}{x.\nget}{\X}{\pinull}{\frr=\fRec{\objb}{\X}}{\Gamma}} \and
\entry[{\scriptsize (T-NewCog)}]{\inferpe{\Gamma}{\obj}{\bar{e}}{\bar{\frr}}\\\\
    \fields{\C} = \bar{T\ x} \\ \parameters{\C}= \bar{T'\ x'} \\  \bar{\X}, \obj' \fresh}
  {\inferee{\Gamma}{\obj}{\nnew\ \ncog\ \C(\bar{e})}{[\cog {:} \obj', \bar{x{:}\X}, \bar{x'{:}\frr}]}{\pinull}{\etrue}{\Gamma}} \and
\entry[{\scriptsize (T-New)}]{\inferpe{\Gamma}{\obj}{\bar{e}}{\bar{\frr}}\\\\
    \fields{\C} = \bar{T\ x} \\ \parameters{\C}= \bar{T'\ x'}\\ \bar{\X} \fresh}
  {\inferee{\Gamma}{\obj}{\nnew\ \C(\bar{e})}{[\cog {:} \obj, \bar{x{:}\X}, \bar{x'{:}\frr}]}{\pinull}{\etrue}{\Gamma}} \and
\entry[{\scriptsize (T-AInvk)}]{\inferpe{\Gamma}{\obj}{e}{\frr}\\\inferpe{\Gamma}{\obj}{\bar{e}}{\bar{\frs}}\\\\
    \fclass{\types{e}}=\C \\\fields{\C}\cup \parameters{\C} = \bar{T\ x} \\ \X, \overline{\X}, \objb, f \fresh}
  {\inferee{\Gamma}{\obj}{e!\m(\bar{e})}{f}{\pinull}{[\cog {:} \objb , \bar{x{:} \X}] = \frr\land\C.\m \preceq\frr(\bar{\frs}) \rightarrow X}
    {\Gamma[f \mapsto (\fRec{\objb}{\X},\,\C!\m~\frr(\bar{\frs}) \rightarrow \X)]}}
\and
\entry[{\scriptsize (T-SInvk)}]{\inferpe{\Gamma}{\obj}{e}{\frr}\\\inferpe{\Gamma}{\obj}{\bar{e}}{\bar{\frs}}\\\\
     \fclass{\types{e}}=\C\\ \fields{\C} \cup\parameters{\C} = \bar{T\ x}\\  \X,\bar{\X} \fresh}
  {\inferee{\Gamma}{\obj}{e.\m(\bar{e})}{\X}{\C.\m~\frr(\bar{\frs}) \rightarrow X \rfloor\contract{\Gamma}}
    {[\cog {:} \objb , \bar{x {:} X}] = \frr \land\C.\m \preceq\frr (\bar{\frs}) \rightarrow X}{\Gamma}}
}
\caption{Contract inference for expressions and expressions with side effects.}
\label{fig:inf:exp}
\end{figure*}

\begin{figure*}[t]
{%\footnotesize
statements
\myrules{\mathmode\compactmode}{
\simpleentry[T-Skip]{\inferst{\Gamma}{\obj}{\eskip}{\pinull}{\etrue}{\Gamma}} \and
\entry[{\scriptsize (T-Field-Record)}]{x\not\in\dom(\Gamma)\\\Gamma(\ethis.x) = \frr\\\\\inferee{\Gamma}{\obj}{z}{\frr'}{\cntc}{\mathcal U}{\Gamma'}}
  {\inferst{\Gamma}{\obj}{x=z}{\cntc}{\mathcal U\land\frr=\frr'}{\Gamma'}}
\and
\entry[{\scriptsize (T-Var-Record)}]{\Gamma(x) = \frr\\\\\inferee{\Gamma}{\obj}{z}{\frr'}{\cntc}{\mathcal U}{\Gamma'}}
  {\inferst{\Gamma}{\obj}{x=z}{\cntc}{\mathcal U}{\Gamma'[x\mapsto\frr']}}
\and
\entry[{\scriptsize (T-Var-Future)}]{\Gamma(x)= f\\\inferee{\Gamma}{\obj}{z}{f'}{\cntc}{\mathcal U}{\Gamma'}}
  {\inferst{\Gamma}{\obj}{x=z}{\cntc}{\mathcal U}{\Gamma'[x \mapsto f']}} 
\and
\entry[{\scriptsize (T-Var-FutRecord)}]{\Gamma(x)= f\\\inferee{\Gamma}{\obj}{z}{\frr}{\cntc}{\mathcal U}{\Gamma'}}
  {\inferst{\Gamma}{\obj}{x=z}{\cntc}{\mathcal U}{\Gamma'[f \mapsto (\frr,\pinull)]}} 
  \and
  \entry[{\scriptsize (T-Await)}]{\inferpe{\Gamma}{\obj}{e}{f} \\
\inferpe{\Gamma}{\obj}{f}{(\frr, \cntc)}
\\\X,\objb\fresh\\\Gamma'=\Gamma [f \mapsto (\frr,\pinull)^\checkmark]}
  {\inferst{\Gamma}{\obj}{\nwait\ e?}{\cntc \seqpoint (\obj,\objb)^{\aa} \rfloor\contract{\Gamma'}}{\frr=\fRec{\objb}{\X}}{\Gamma'}}
\and
\entry[{\scriptsize (T-Await-Tick)}]{\inferpe{\Gamma}{\obj}{e}{f}\\
\inferpe{\Gamma}{\obj}{f}{(\frr, \cntc)^\checkmark}
\\\X,\objb\fresh}
  {\inferst{\Gamma}{\obj}{\nwait\ e?}{\pinull}{\frr=\fRec{\objb}{X}}{\Gamma }}
\and
\entry[{\scriptsize (T-If)}]{\inferpe{\Gamma}{\obj}{e}{\Bool}\\
    \inferst{\Gamma}{\obj}{s_1}{\cntc_1}{\mathcal U_1}{\Gamma_1}\\\inferst{\Gamma}{\obj}{s_2}{\cntc_2}{\mathcal U_2}{\Gamma_2}\\
        \mathcal{U}=\Bigl(\bigland_{x\in\dom(\Gamma)}\hspace*{-1em}\Gamma_1(x)=\Gamma_2(x) \Bigr) \land
                    \Bigl(\bigland_{x\in{\tt Fut}(\Gamma)}\hspace*{-1em}\Gamma_1(\Gamma_1(x)) = \Gamma_2(\Gamma_2(x)) \Bigr)\\
        \Gamma' = \Gamma_1 + \Gamma_2 |_{ \{ f \; | \; f \notin \Gamma_2({\tt Fut}(\Gamma) )\}} 
        }
  {\inferst{\Gamma}{\obj}{\eif{e}{s_1}{s_2}}{\cntc_1+\cntc_2}{\mathcal U_1\land \mathcal U_2 \land \mathcal U}{\Gamma'}} \and
\entry[{\scriptsize (T-Seq)}]{\inferst{\Gamma}{\obj}{s_1}{\cntc_1}{\mathcal U_1}{\Gamma_1}\\\inferst{\Gamma_1}{\obj}{s_2}{\cntc_2}{\mathcal U_2}{\Gamma_2}}
  {\inferst{\Gamma}{\obj}{s_1;s_2}{\cntc_1\fatsemi\cntc_2}{\mathcal U_1\land \mathcal U_2}{\Gamma_2}}\and
\entry[{\scriptsize (T-Return)}]{\inferpe{\Gamma}{\obj}{e}{\frr}\\\Gamma(\destiny)=\frr'}
  {\inferst{\Gamma}{\obj}{\nreturn\ e}{\pinull}{\frr=\frr'}{\Gamma}}
}}
\caption{Contract inference for statements.}
\label{fig:inf:stmt}
\end{figure*}

\subsection{\em Inference of contracts}

Contracts are extracted from {\coreABS} programs by means of an inference algorithm.
Figures~\ref{fig:inf:exp}
and~\ref{fig.meth-class} illustrate the set of  rules. 
The following auxiliary operators are used:
\begin{itemize}
\item $\fields{\C}$ and $\parameters{\C}$ respectively
return the sequence of fields and parameters and their types of a class $\C$.
Sometime we write {\small$\fields{\C} = \bar{T} \, \bar{x}, \;  \bar{ \name{Fut} \arglang{T'} } \, \bar{x'}$} to separate fields with a non-future type by those with 
future types. Similarly for parameters;

\item $\types{e}$ returns the type of an expression $e$, which is either an interface (when $e$ is an object) or a data type;
\item $\fclass{I}$ returns the unique (see the restriction \emph{Interfaces} in 
Section~\ref{sec.restrictions}) class implementing $I$; and 
\item ${\it mname}(\bar{M})$ returns the sequence of method names in the sequence 
$\bar{M}$ of method declarations.
\end{itemize}

The inference algorithm  %of {\coreABS} 
uses constraints $\mathcal U$, which 
are defined by the following syntax
{\mysyntax{\mathmode}{
\simpleentry \mathcal U [] \etrue 
\bnfor c =c' \bnfor \frr=\frr'
\bnfor \frr(\bar{\frr})\rightarrow\frs \preceq \frr'(\bar{\frr}')\rightarrow\frs' 
[ ]
\oris
 \mathcal U \land \mathcal U
[ ]%[constraint]
}}
where $\etrue$ is the constraint that is always true;
 $\frr=\frr'$ is a classic unification constraint between terms;
 $\frr(\bar{\frr})\rightarrow\frs\preceq\frr'(\bar{\frr}')\rightarrow\frs'$ is a {\em semi-unification} constraint; the constraint  $\mathcal U\land\mathcal U'$ is the conjunction of 
$\mathcal U$ and $\mathcal U'$.
We use {\em semi-unification} constraints~\cite{Henglein:1993} to deal with method invocations: basically, in $\frr(\bar{\frr})\rightarrow\frs\preceq\frr'(\bar{\frr}')\rightarrow\frs'$,
 the left hand side of the constraint corresponds to the method's formal parameter, $\frr$ being the record of $\ethis$, $\vect{\frr}$ being the records of the parameters and $\frr'$ being the record of the returned value,
 while the right hand side corresponds to the actual parameters of the call, and the actual returned value.
The meaning of this constraint is that the actual parameters and returned value must match the specification given by the formal parameters, like in a standard unification:
 the necessity of semi-unification appears when we call several times the same method.
Indeed, there, unification would require that the actual parameters of the different calls must all have the same records, while with semi-unification all method calls are managed independently.

The judgments of the inference algorithm have a typing context $\Gamma$ 
mapping variables to extended future records, future names to future name
values and methods to 
their signatures. 
%\Cosimo{secondo me i metodi non sono mappati!}
They have the following form:
\begin{itemize}
\item[--] $\inferpe{\Gamma}{\obj}{e}{}{\frx}$
for pure expressions $e$ and $\inferpe{\Gamma}{\obj}{f}{}{\frz}$ for
 future names $f$, where 
$\obj$ is the cog name of the object executing the expression and $\frx$ and 
$\frz$ are their inferred values.
%Constraints and contracts are not generated at this stage.
\item[--]
$\inferee{\Gamma}{\obj}{z}{\frr}{\cntc}{\mathcal U}{\Gamma'}$ 
for expressions with side effects $z$, where $\obj$, and $\frx$ are as for pure expressions $e$, $\cntc$ is the contract for $z$ created by the inference rules,
$\mathcal U$ is the generated constraint, and $\Gamma'$ is the environment $\Gamma$ \emph{with updates} of variables and future names. We use the same 
judgment for pure expressions; in this case $\cntc = \pinull$, ${\mathcal U}
= {\tt true}$ and $\Gamma' = \Gamma$.
\item[--]
for statements $s$:
$\inferst{\Gamma}{\obj}{s}{\cntc}{\mathcal U}{\Gamma'}$ where
$\obj$, $\cntc$ and $\mathcal U$ are as before, and
$\Gamma'$ is the environment obtained after the execution of the statement.
The environment may change because of 
variable updates.
\end{itemize}

Since $\Gamma$ is a function, we use the standard predicates $x \in \dom(\Gamma)$ or 
$x \not \in \dom(\Gamma)$. Moreover, given a function $\Gamma$, we define $\Gamma[x \mapsto \frx]$
% and $\Gamma[\ethis.x \mapsto \frx]$
 to be the following function%s
\[
\Gamma[x \mapsto \frx](y) = \left\{ 
	\begin{array}{l@{\quad}l}
	\frx & {\rm if} \; y=x
	\\
	\Gamma(y) & {\rm otherwise}
	\end{array} \right.
\]
We also let $\Gamma |_{\{ x_1, \cdots , x_n \}}$ be the function
\[
{\footnotesize
\Gamma |_{\{ x_1, \cdots , x_n \}}(y) = \left\{ 
	\begin{array}{l@{\quad}l}
	\Gamma(y)  & {\rm if} \; y \in \{x_1, \cdots, x_n\}
	\\
	{\rm undefined} & {\rm otherwise}
	\end{array} \right.}
\]
Moreover, provided that % $\Gamma(x) = \Gamma'(x)$, for every 
$\dom(\Gamma)\cap \dom(\Gamma')=\emptyset$,
the environment
 $\Gamma + \Gamma'$ be defined as follows 
\[
(\Gamma+\Gamma')(x)  \; \eqdef \;
\left\{ \begin{array}{ll}
		\Gamma(x) & {\rm if} \; x \in \dom(\Gamma)
		\\
		\Gamma'(x) & {\rm if} \; x \in \dom(\Gamma')
		\end{array}
		\right.
\]

Finally, we write $\Gamma(\ethis.x) = \frx$ whenever $\Gamma(\ethis) = 
[\cog {:} \obj, x: \frx, \bar{x} : \bar{\frx'}]$ and we let 
\[
\begin{array}{rl}
{\tt Fut}(\Gamma) \eqdef & \{ x
\; | \; \Gamma(x) \; \text{is a future name} \}
\\
\\
\contract{\Gamma} \eqdef & \cntc_1 \rfloor \cdots 
\rfloor \cntc_n
\end{array}
\]
where $\{ \cntc_1, \cdots , \cntc_n\} =
\{ \cntc' \; | \; \mbox{there are} \; f,
\frr : \; \Gamma(f) = (\frr,\cntc') \}$.
%It is worth to observe that the semantics 
%of $\rfloor$ guarantees that $(\cntc \rfloor \cntc_1) \rfloor \cntc_2 =
%(\cntc \rfloor \cntc_2) \rfloor \cntc_1$, see Sections~\ref{sec.contractanalysis}
%and~\ref{sec.mutanalysis}.)

\medskip

The inference rules for expressions and future names are reported in 
Figure~\ref{fig:inf:exp}. They are straightforward, except for \rulename{T-Value} that performs the dereference of variables and return 
the future record stored in the future name of the variable. \rulename{T-Pure}
lifts the judgment of a pure expression to a judgment similar to those for
expressions with side-effects. This expedient allows us to simplify rules for
statements.

\smallskip

Figure~\ref{fig:inf:exp} also reports inference rules for expressions with 
side effects. Rule \rulename{T-Get} 
deals with the $x.$\Abs{get} synchronisation primitive and returns the contract
$\cntc \seqpoint (\obj,\objb)$ $\rfloor \contract{\Gamma}$, where $\cntc$ is 
 stored in the 
future name of $x$ and $(\obj,\objb)$ represents a dependency between the cog of the  object executing 
the expression and the root of the expression. The constraint $\frr=\fRec{\obj'}{X}$ 
is used to extract the root $c'$ of $\frr$. The contract $\cntc$ may have two shapes: either 
(i) $\cntc = \C!\m \, \frr (\bar{\frs}) \rightarrow \frr'$ or (ii) $\cntc = \pinull$.
%In case (ii), 
%$\cntc \seqpoint (\obj,\objb)$ is equal to $(\obj,\objb)$. 
The subterm $\contract{\Gamma}$ lets us collect 
all the contracts in $\Gamma$ 
\emph{that are 
stored in future names that are not check-marked}. In fact, these contracts
correspond to previous asynchronous invocations without any corresponding 
synchronisation 
(\Abs{get} or \Abs{await} operation) in the body. The evaluations of these invocations
may interleave with the evaluation of the expression $x.\get$. For this reason,
the intended meaning of $\contract{\Gamma}$ is that the
dependencies generated by the invocations must be collected \emph{together} with
those generated by $\cntc \seqpoint (\obj,\objb)$.
We also observe that the rule updates the 
environment by check-marking the value of the future name of $x$ and by replacing the
contract with $\pinull$ (because the synchronisation has been already performed). 
This allows subsequent
\Abs{get} (and \Abs{await}) operations on the same future name not to modify the contract
(in fact, in this case they are operationally equivalent to the \Abs{skip} statement)
-- see \rulename{T-Get-Tick}. 

Rule \rulename{T-NewCog} returns a record with a new cog name. This is 
in contrast with \rulename{T-New}, where the cog of the returned record is the
\emph{same} of the  object executing 
the expression~\footnote{\label{footnote.new}It is worth to recall that, in {\coreABS}, 
the creation of an object, either with a 
\Abs{new} or with a \Abs{new cog}, amounts to executing the method \name{init} of 
the corresponding 
class, whenever defined (the \Abs{new} performs a synchronous invocation, 
the \Abs{new cog} performs an asynchronous one). 
In turn, the termination of
 \name{init} triggers the execution of the method
\name{run}, if 
present. The method \name{run} is asynchronously invoked when \name{init}
is absent.
Since \name{init} may be regarded as a method in {\coreABS}, the inference
system in our tool explicitly introduces a synchronous invocation to \name{init}
in case of \Abs{new} and an asynchronous one in case of \Abs{new cog}. However,
for simplicity, we overlook this (simple) issue in the rules \rulename{T-New} and 
\rulename{T-NewCog},
acting as if \name{init} and \name{run} are always absent.}.

Rule \rulename{T-AInvk} derives contracts for asynchronous invocations. Since the
dependencies created by these invocations influence the 
dependencies of the synchronised contract only if a subsequent 
\Abs{get} or \Abs{await} operation is performed, the rule stores the invocation into
a fresh future name of the environment and returns the contract $\pinull$. 
This models {\coreABS} semantics that lets asynchronous invocations be synchronised 
by explicitly getting or awaiting on the corresponding future variable,
see rules \rulename{T-Get} and \rulename{T-Await}. The future name storing the invocation
is returned by the judgment. 
On the contrary, in rule \rulename{T-SInvk}, which deals with synchronous invocations,
the judgement returns a contract that is the invocation (because the corresponding 
dependencies must be added to the current ones) in parallel with the unsynchronised
asynchronous invocations stored in $\Gamma$. %$\contract{\Gamma}$. 

\smallskip

The inference rules for statements are collected in Figure~\ref{fig:inf:stmt}.
The first three rules define the inference of contracts for assignment. There are
two types of assignments: those updating fields and parameters of the \Abs{this}
object and the other ones. For every type, we need to address the cases of updates with values
that are expressions (with side effects) (rules \rulename{T-Field-Record} and 
\rulename{T-Var-Record}),
or future names (rule \rulename{T-Var-Future}). 
Rules for fields and parameters updates enforce that their future records are unchanging, as
discussed in Section~\ref{sec.restrictions}. 
Rule % \rulename{T-Field-Future} and 
\rulename{T-Var-Future}, 
 define the management 
of aliases: future variables are always updated with future names and never with 
future names' values.

Rule \rulename{T-Await} and \rulename{T-AwaitTick} deal with the \Abs{await}
synchronisation when applied to a simple future lookup $x${\tt ?}. They are
similar to the rules  \rulename{T-Get} and \rulename{T-Get-Tick}. 
%\Cosimo{Da rimuovere: In order to correctly associate dependencies
%with each synchronisation, we assume statements of the form
%$\await{x_1 \wedge x_2}$ to be decomposed into $\await{x_1} ; \await{x_2}$.}
% 
%Rule \rulename{T-Await-Bool} deals with \Abs{await} statements on booleans. As
%discussed in Section~\ref{sec.restrictions}, the current release of the 
%inference algorithm
% requires a programmer's notation $[x]$ for these statements. 

Rule \rulename{T-If} defines contracts for conditionals. In this case we collect the
contracts $\cntc_1$ and $\cntc_2$ of the two bran\-ches, with the intended meaning that 
the dependencies defined by $\cntc_1$ and $\cntc_2$ are always kept separated.
As regards the environments, the rule constraints the two environments $\Gamma_1$ and
$\Gamma_2$ produced by typing of the two branches to \emph{be the same} on variables in
$\dom(\Gamma)$ \emph{and} on the values of future names bound to variables
in ${\tt Fut}(\Gamma)$. However, the two branches may have different unsynchronised
invocations that are not bound to any variable. The environment
$
\Gamma_1 + \Gamma_2 |_{ \{ f \; | \; f \notin \Gamma_2({\tt Fut}(\Gamma) )\}}
$
allows us to collect all them.

Rule \rulename{T-Seq} defines the sequential composition of contracts. 
Rule \rulename{Return} constrains the record of {\tt destiny}, which is an identifier
introduced by \rulename{T-Method}, shown in Figure~\ref{fig.meth-class}, for 
storing the return record.

\smallskip

The rules for method and class declarations are defined in Figure~\ref{fig.meth-class}.
Rule \rulename{T-Method}  derives the method contract of 
$\T\;\m\;(\Tbar\; \xbar)\{ \bar{T'} \, \bar{u} ; s\}$ by typing $s$ in an environment 
extended with \name{this}, \Abs{destiny} (that will be set by \Abs{return} statements,
see \rulename{T-Return}),
the arguments $\xbar$, and the local variables $\bar{u}$. In order to deal with alias 
analysis of future variables, we separate fields, parameters, arguments and local
variables with future types from the other ones. In particular, we associate 
future names to the former ones and bind future names to record variables. As discussed
above, the abstract behaviour of the method body is a pair of contracts, which is
$\pairl{\cntc}{ \contract{\Gamma''}}$ for \rulename{T-Method}. 
This
term $\contract{\Gamma''}$ collects all the contracts in $\Gamma''$ 
\emph{that are 
stored in future names that are not check-marked}. In fact, these contracts
correspond to asynchronous invocations without any synchronisation 
(\Abs{get} or \Abs{await} operation) in the body. These invocations will be 
evaluated \emph{after} the termination of the body -- they are the \emph{unsynchronised  contract}.

The rule \rulename{T-Class} yields an \emph{abstract class table} that associates a 
method contract with every method name. It is this abstract class table that is 
used by our analysers in Sections~\ref{sec.contractanalysis} and~\ref{sec.mutanalysis}. 
The rule \rulename{T-Program} derives the contract of a {\coreABS} program by 
typing the main function in the same way as it was a body of a method.

\medskip

\begin{figure*}[t]
\myrules{\mathmode\compactmode}{
\entry[{\scriptsize (T-Method)}]{
  \fields{\C}\cup\parameters{\C} = \bar{T_f\ x}\ \bar{\name{Fut}\arglang{T_f'}\ x'}\\
  \obj, \bar{X}, \bar{X'}, \bar{Y}, \bar{Y'}, \bar{W}, \bar{W'}, \bar{f}, \bar{f'}, \bar{f''}, Z \ {\it fresh}\\\\
  \Gamma'=\bar{y {:} Y} + \bar{y' {:} f'} + \bar{w {:} W} + \bar{w' {:} f''}
    + \bar{f {:} (\bar{X'} , \pinull)} + \bar{f' {:} (\bar{Y'} , \pinull)} + \bar{f'' {:} (\bar{W'} , \pinull)}\\
  \inferst{\Gamma + \this: [\cog {:}\obj , \; \bar{x{:}X}, \bar{x{:}f}] + \Gamma' + \destiny : Z}{\obj}{s}{\cntc}{\mathcal U}{\Gamma''}\\
  \text{$\bar{T},\ \bar{T_f},\ \bar{T_l}$ are not future types}}
{\C,\Gamma \vdash {\T\;\m\;(\bar{T\ y}, \bar{\name{Fut} \arglang{T'}\ y'})\{\bar{T_l\ w}; \, \bar{\name{Fut} \arglang{T_l'}\ w'}; \; s\} } \;:\quad 
  \inferrule{}{[\cog : \obj , \bar{x{:}X}, \, \bar{x'{:}X'}](\bar{Y}, \bar{Y'})\{ \pairl{\cntc}{ \contract{\Gamma''}} \} \;  Z\\\\
       \rhd\; {\mathcal U} \land[\cog : \obj , \bar{x{:}X}, \, \bar{x'{:}X'}](\bar{Y}, \bar{Y'}) \rightarrow Z = \C.\m }
}\and
\entry[{\scriptsize (T-Class)}]{\inferpp{\C,\Gamma}{\bar{M}}{\bar{\mcntc}}{\bar{\mathcal{U}}}}
  {\inferpp{\Gamma}{\class\; \C (\bar{T\ x}) \; \{\bar{T'\ x'};\quad\bar{M}\}}{\bar{\C.{\it mname}(M)\mapsto\mcntc}}{\bar{\mathcal{U}}}}\and
\entry[{\scriptsize (T-Program)}]{\inferpp{\Gamma}{\bar C}{\bar S}{\bar{\mathcal U}}\\\bar{X}, \bar{X'}, \bar{f} \ {\it fresh}\\
    \inferst{\Gamma  + \bar{x{:}X} + \bar{x'{:}f} + \bar{f{:}(X', \pinull)}}{\rm start}{s}{\cntc}{\mathcal U}{\Gamma'}\\
    \text{$\bar T$ are not future types}}
  {\inferpp{\Gamma}{\bar I\ \bar C\ \{\bar{T\ x};\;\bar{\name{Fut}\arglang{T'}\ x'} ;\; s\}}{\bar S,\pairl{\cntc}{\contract{\Gamma'}}}{\mathcal{U} \land \bar{\mathcal U}}}
}
 \caption{Contract rules of method and class declarations and programs.}\label{fig.meth-class}
\end{figure*}

The contract class tables of the classes in a program derived by the rule 
\rulename{Class}, will be
noted $\cct$. We will address the contract of {\tt m} of class {\tt C} by $\cct(\C.\m)$.
 In the following, we assume that every {\coreABS} program is a triple
$(\ct , \{ \vect{T \ x \sseq} s \} , \cct)$, where $\ct$ is the class table, 
$\{ \vect{T \ x \sseq} s \}$ is the main function, and $\cct$ is its contract class 
table. By rule \rulename{Program}, analysing (the deadlock freedom of) a program, 
amounts to verifying the contract of the main function with a record for $\ethis$
which associates a special cog name called ${\rm start}$ to the $\cog$ field 
(${\rm start}$ is intended to be the cog name of the object ${\it start}$).

\begin{example}
The methods of \Abs{Math} in Figure~\ref{fig.Math}
  have the following contracts, once the constraints are solved (we
  always simplify $\cntc \fatsemi \pinull$ into $\cntc$):
  \begin{itemize}
  \item \name{fact\_g} has contract 
  {\footnotesize\[
      \mcontract{}{[\cog {:} \obj](\unit)}{ 
      \pairl{\pinull +
        \mbox{\Abs{Math\!}}\name{fact\_g}\; [\cog {:}
        \obj](\unit)\rightarrow \unit \seqpoint
        (\obj,\obj)}{\pinull}} {\unit}.\]}
    The name $\obj$ in the header refers to the cog name associated
    with \this{} in the code, and binds the occurrences of $\obj$ in
    the body.  The contract body has a recursive invocation to
    \name{fact\_g}, which is performed on an object in the same cog
    $\obj$ and followed by a \Abs{get} operation.  This operation
    introduces a dependency  $(\obj,\obj)$. We observe that, if we
    replace the statement \Abs{Fut<Int> x = this\!fact\_g(n-1)} in
    \name{fact\_g} with \Abs{Math z = new Math() ;} \Abs{Fut<Int> x =
      z\!fact\_g(n-1)}, we obtain the same contract as above because
    the new object is in the same cog as \Abs{this}.

\medskip

  \item \name{fact\_ag} has contract
    {\footnotesize$$\mcontract{}{[\cog {:} \obj](\unit)}{
    \pairl{\pinull +
        \mbox{\Abs{Math\!}}\name{fact\_ag}\; [\cog {:}
        \obj](\unit)\rightarrow \unit \seqpoint
        (\obj,\obj)^\aa}{\pinull}}{\unit}.$$} In this case, the presence of an
    \Abs{await} statement in the method body produces a dependency
   $(\obj,\obj)^\aa$. The subsequent \Abs{get} operation does
    not introduce any dependency  because the future name has a 
    check-marked value in the environment. In fact, in
    this case, the success of \Abs{get} is guaranteed, provided the
    success of the \Abs{await} synchronisation.
    
\medskip

  \item \name{fact\_nc} has contract
    {\footnotesize$$\mcontract{}{[\cog {:} \obj](\unit)}{
    \pairl{\pinull +
        \mbox{\Abs{Math\!}}\name{fact\_nc}\; [\cog {:}
        \obj'](\unit)\rightarrow \unit \seqpoint
        (\obj,\obj')}{\pinull}}{\unit}.$$} This method contract differs from the
    previous ones in that the receiver of the recursive invocation is
    a free name (i.e., it is not bound by $\obj$ in the header).  This
    because the recursive invocation is performed on an object of a
    new cog (which is therefore different from $\obj$). As a
    consequence, the dependency  added by the \Abs{get} relates
    the cog $\obj$ of \Abs{this} with the new cog $\obj'$.
  \end{itemize}
\end{example}

\begin{example}%[The method contracts of class {\rm \Abs{CpxSched}}]
Figure~\ref{fig.cpxsched} displays the contracts of the methods of class 
{\tt CpxSched} in Figure~\ref{fig:exampleC}.
\begin{figure*}
  \begin{itemize}
  \item method \Abs{m1} has contract
    {\footnotesize$$
\begin{array}{l}
      \mcontract{}{[\cog {:} c, \mbox{\Abs{x}}:[\cog {:} c', \mbox{\Abs{x}}:\X]]([\cog {:} c'', \mbox{\Abs{x}}:\Y])}{
    \pairl{\pinull}{\cntc}}{\fRec{c'}{\unit}}.
\\
\\
\; {\rm where} \quad 
\cntc = \mbox{\Abs{CpxSched\!}}\name{m2}\; [\cog {:}c'', \mbox{\Abs{x}}:\Y]([\cog {:} c', \mbox{\Abs{x}}:\X])\rightarrow \fRec{c''}{\unit}\rfloor \mbox{\Abs{CpxSched\!}}\name{m2}\; [\cog {:}c', \mbox{\Abs{x}}:\X]([\cog {:} c'', \mbox{\Abs{x}}:\Y])\rightarrow \fRec{c'}{\unit}
\end{array}
$$}
  \item method \Abs{m2} has contract
    {\footnotesize$$
      \begin{array}{l}
      \mcontract{}{[\cog {:} c, \mbox{\Abs{x}}:\X]([\cog {:} c', \mbox{\Abs{x}}:\Y])}{
    \pairl{ \mbox{\Abs{CpxSched\!}}\name{m3}\; [\cog {:}c', \mbox{\Abs{x}}:\Y](\unit)\rightarrow \unit \seqpoint (c,c')}{\pinull}}{\unit}.
\end{array}$$}
  \item method \Abs{m3} has contract
        {\footnotesize$$\mcontract{}{[\cog {:} c, \mbox{\Abs{x}}:\X](\,)}{\pairl{\pinull}{\pinull}}{\unit}.$$}
  \end{itemize}
\caption{\label{fig.cpxsched} Contracts of {\tt CpxSched}.}
\end{figure*}

According to the contract of the main function, the two invocations of \Abs{m2} are
second arguments of $\rfloor$ operators. This will give rise, in the analysis 
of contracts, to the union of the corresponding cog relations.  
\end{example}

We notice that the inference system of contracts discussed in this section is
modular because, when programs are organised in different modules, it partially supports the separate contract inference of modules with a 
well-founded ordering relation (for example, if there are two modules, classes in the 
second module use definitions or methods in the first one, but not conversely).
In this case, if a module {\tt B} includes a module
{\tt A} then a patch to a class of {\tt B} amounts to inferring contracts for {\tt B} only. 
On the contrary, a patch to a class of {\tt A} may also require a new contract inference
of {\tt B}.

\subsection{Correctness results}

In our system, the ill-typed programs are those manifesting a failure of
the semiunification process, which does not address misbehaviours. In particular,
a program may be well-typed and still manifest a deadlock. In fact, in systems
with \emph{behavioural types}, one usually demonstrates that
\begin{enumerate}
\item
in a well-typed program, 
every configuration $\nt{cn}$ has a behavioural type, let us call it ${\tt bt}(\nt{cn})$;
\item
if  $\nt{cn}\rightarrow \nt{cn}'$ then there is a relationship between
${\tt bt}(\nt{cn})$ and ${\tt bt}(\nt{cn}')$;
\item
the relationship in 2 preserves a given property (in our case, deadlock-freedom).
\end{enumerate}

Item 1, in the context of the inference system of this section, means that the 
program has a contract class table. Its proof needs a contract system for configurations, which we have defined in Appendix~\ref{sec:subjred}. The theorem
corresponding to this item is Theorem~\ref{th:wt_init}.

Item 2 requires the definition of a relation between contracts, called \emph{later stage
relation} in Appendix~\ref{sec:subjred}.
This later stage relation is a syntactic relationship between contracts
whose basic law is that a method invocation is larger than the 
instantiation of its method contract (the other laws, except $\pinull \trianglelefteq
\cntc$ and $\cntc_i \trianglelefteq \cntc_1 + \cntc_2$, are congruence laws). 
%We observe that the later stage relation uses a substitution process that \emph{also performs a 
%pattern matching operation} -- therefore it is partial because the pattern matching may fail. In particular, $\subst{\frs}{\frr}$ (i) extracts the cog names
%and terms $\frs'$ in $\frs$ that corresponds to occurrences of cog names and record
%variables in $\frr$ and (ii) returns the corresponding substitution.

The statement
that relates the later stage relationship to {\coreABS} reduction is Theorem~\ref{th_subjred1}. It is worth to observe that all the theoretical development up-to this point are useless if the 
later stage relation conveyed no relevant property. 
This is the purpose of item 3, which requires the definition of \emph{contract models} and the proof that deadlock-freedom is preserved by the models of contracts in 
later stage relation. The reader can find the proofs of these statements in the
Appendices~\ref{sec:analysis-proof} and~\ref{sec:mutanalysis-proof} (they correspond
to the two analysis techniques that we study).

%%% Local Variables: 
%%% mode: latex
%%% TeX-master: "subm-SoSyM"
%%% End: 

\section{The fix-point analysis of contracts}
\label{sec.contractanalysis}
%!TEX root = SoSyM.tex

The first algorithm we define to analyse contracts uses the standard Knaster-Tarski fixpoint technique. 
We first give an informal introduction of the notion used in the analysis, and start to formally define our algorithm in Subsection~\ref{sec:lam-lamop}
(a simplified version of the algorithm may be found in~\cite{GiachinoL11}, see also Section~\ref{sec:relatedworks}). 

\medskip

Based on a contract class table and a main contract (both produced by the inference system in Section~\ref{sec.FJg-contracts}),
 our fixpoint algorithm generates models that encode the dependencies between cogs that may occur during the program's execution.
These models, called {\em lams} (an acronym for deadLock Analysis Models~\cite{GiachinoL13,GL2014} are sets of relations between cog names, each relation representing a possible configuration of  program's execution. Consider for instance the  main function:
\begin{absexamplen}
{
  I x ; I y ; Fut<Unit> f ;
  x = new cog C() ;
  y = new cog C() ;
  f = x!m() ;
  await f? ;
  f = y!m() ;
  await f? ;  
}
\end{absexamplen}
In this case, the configurations of the program may be represented by two relations: one containing a dependency between the cog name ${\it start}$ and the cog name of $x$
and the other containing a dependency between ${\it start}$ and the cog name of $y$.
This would be represented by the following lam (where $c_x$ and $c_y$ respectively being the cog names of $x$ and $y$):
$$ \slam{(c,c_x)^\aa}, \; \slam{(c,c_y)^\aa} $$
(in order to ease the parsing of the formula, we are representing relations with 
the notation $\slam{\cdot }$ and we have removed the outermost curly brackets).
Our algorithm, being a fixpoint analysis, returns {\em the} lam of a program 
by computing a sequence of approximants. In particular, the algorithm 
performs the following steps:
\begin{enumerate}
\item
compute a new approximant of the lam of every method 
using the abstract class table and the previously computed lams;
\item
reiterate step 1 till a fixed approximant -- say $n$ (if a fixpoint is found before,
go to 4);
\item
when the $n$-th approximant is computed then \emph{saturate}, i.e.~compute the next
approximants by reusing the same cog names (then a fixpoint is eventually found);
\item
replace the method invocations in the main contract with the corresponding values;
  and 
\item
analyse the result of 4 by looking for a circular dependency in one of the relations of the computed lam
   (in such case, a possible deadlock is detected).
\end{enumerate}
The critical issue of our algorithm is the creation of fresh cog names at each step 1,
 because of free names in method contracts (that correspond to new cogs created during 
  method's execution). 
%
\iffalse
The fixpoint algorithm takes as input an abstract class table
and a main contract (both produced by the inference system in 
Section~\ref{sec.FJg-contracts}.
%  These (abstract)
% descriptions are the output of our inference algorithm.
%
Then it (1) computes the fixpoint of the abstract class table using 
the standard  technique,  (2) replaces
the method invocations in the main contract with the corresponding value, and
(3) analyses the result of (2).

The fixpoint algorithm uses models, called \emph{lam} (an acronym for deadLock Analysis Models~\cite{GiachinoL13,GL2014}), where elements are 
relations on cog names.
%
For example, denoting sets $\{b_1, \cdots , b_\ell\}$ as $\slam{b_1, \cdots , b_\ell}$,
we have that:
\begin{itemize}
\item
$\slam{ (c_1,c_2) }, \slam{(c_1,c_2), (c_2,c_3)}$ is a two-elements lam where one 
element contains 
the relation $\{ (c_1,c_2) \}$ and the other one contains $\{ (c_1,c_2), (c_2,c_3)
\}$;
\item
$\slam{(c_1,c_2)^\aa}$ is a singleton lam containing the relation 
$\{ (c_1,c_2)^\aa \}$.
\end{itemize}

The critical issue of our algorithm is the creation of fresh names at each step, 
technically speaking, at \emph{every approximant}, because of free names in method
contracts (that correspond to \Abs{new cog}s). 
%
\fi
%
\begin{figure*}[t]
\[
\begin{array}{l@{\qquad}rcl}
  \mbox{[extension]}  & \lafsa \addition (\obj,\obj')^{[\aa]} &\eqdef &\{ L \cup
    \{(\obj,\obj')^{[\aa]}\} \; | \; L \in \lafsa\}. \\
      \mbox{[parallel]} & \lafsa \parop \lafsa' &\eqdef& \{ L \cup L' \; | \;
     L \in \lafsa \; {\rm and} \; L' \in \lafsa' \}
\\
\\
\mbox{[extension  (on pairs of lams)]}&%\mbox{ Let also } 
    \pairl{\lafsa}{ \lafsa'} \addition (\obj,\obj')^{[\aa]} &\eqdef&
    \pairl{\lafsa\addition (\obj,\obj')^{[\aa]}}{ \lafsa'}
    \\ 
%   \mbox{[parallel (on pairs of lams)]} & \pairl{\lafsa_1 }{\lafsa_1'} \parop \pairl{\lafsa_2}{\lafsa_2' } &\eqdef& \pairl{\pinull}{(\lafsa_1 \cup \lafsa_1') \parop (\lafsa_2 \cup \lafsa_2')}
   \\
   %\pairl{\lafsa_1 \parop \lafsa_2}{\lafsa_1' \parop \lafsa_2'} \\
     \mbox{[parallel (on pairs of lams)]} & \pairl{\lafsa_1 }{\lafsa_1'} \rfloor \pairl{\lafsa_2}{\lafsa_2' } &\eqdef& \pairl{(\lafsa_1 \cup \lafsa_1') \parop (\lafsa_2 \cup \lafsa_2')}{\pinull} 
     \\
     \\ 
   \mbox{[sequence (on pairs of lams)]} &
     \pairl{\lafsa_1 }{\lafsa_1'} \fatsemi \pairl{\lafsa_2}{\lafsa_2' }
     & \eqdef &\left\{ \begin{array}{l@{\qquad}l} \pairl{\lafsa_1 }{
           \lafsa_1' \parop \lafsa_2'} & \mbox{if} \; \lafsa_2 =
        \pinull
         \\
         \\
         \pairl{\lafsa_1 \cup (\lafsa_2 \parop \lafsa_1')}{\lafsa_1'
           \parop \lafsa_2'}	& \mbox{otherwise}
       \end{array} \right.
       \\
       \\
   \mbox{[plus (on pairs of lams)]} & \pairl{\lafsa_1 }{\lafsa_1'} +
     \pairl{\lafsa_2}{\lafsa_2' } & \eqdef & \pairl{\lafsa_1 \cup
       \lafsa_2}{\lafsa_1' \cup \lafsa_2'}.
  \end{array}
\]
  \caption{Lam operations.}
\label{fig:lamOp}
\end{figure*}
For example, consider the contract of \name{Math.fact\_nc}
that has been derived in Section~\ref{sec.FJg-contracts}
$$
\begin{array}{l}
{\tt Math.fact\_nc} \, [\cog {:} \obj](\unit) \{
\\
\qquad 
\pairl{\pinull + \mbox{\Abs{Math\!}}\name{fact\_nc}\; [\cog {:} \obj'](\unit)\rightarrow \unit \seqpoint (\obj,\obj')}{\pinull} \quad \} \unit
\end{array}
$$
According to our definitions, the cog name $c'$ is free. In this case, our fixpoint algorithm
will produce the 
following sequence of lams when computing the
model of ${\tt Math.fact\_nc}[\cog {:} c_0](\unit)$: 
\[
\begin{array}{l@{\qquad}l}
{\it approximant \; 0}: & \pairl{\slam{\varnothing}}{\pinull}
\\
{\it approximant \; 1}: & \pairl{\slam{(c_0,c_1)}}{\pinull}
\\
{\it approximant \; 2}: & \pairl{\slam{(c_0,c_1), (c_1,c_2)}}{\pinull}
\\
\cdots
\\
{\it approximant \; n}: & \pairl{\slam{(c_0,c_1), (c_1,c_2), \cdots (c_{n-1},c_{n})}}{\pinull}
\\
\cdots
\end{array}
\]
While every lam in the above sequence is strictly larger than the previous one, an upper
bound element cannot be obtained by iterating the process.
Technically, the lam model \emph{is not a complete partial 
order} (the ascending chains of lams may have infinite length and no upper bound).
%and it may be the case that the Knaster-Tarski technique does not yield a 
%fixpoint in finite time.  For example, this is the case of 

In order to circumvent this issue and to get a decision on deadlock-freedom in a
finite number of steps, we use 
%another usual method: 
%\begin{enumerate}
%\item[(1)] run the Knaster-Tarski technique
%up-to a \emph{fixed approximant}, let us say $n$, and 
%\item[(2)] resort to 
a 
\emph{saturation argument}. If the $n$-th approximant is not a fixpoint, then
the {$(n+1)$-th} approximant is computed by
\emph{reusing the same cog names used by the 
$n$-th approximant} (no additional cog name is created anymore). Similarly for
the {$(n+2)$-th} approximant till a fixpoint is reached (by straightforward cardinality arguments, the fixpoint does exist, in this
case). This fixpoint is called \emph{the saturated state}. 
%\end{enumerate}

For example, for ${\tt Math.fact\_nc}[\cog {:} c_0](\unit)$, the $n$-th approximant returns the pairs of lams
\[
\pairl{\slam{(c_0,c_1) , \cdots , (c_{n-1},c_n)}}{\pinull} \; .
\]
Saturating at this stage yields the lam
\[
\pairl{\slam{(c_0,c_1) , \cdots , (c_{n-1},c_n), (c_1,c_1)}}{\pinull}
\]
that contains a circular dependency -- the pair $(c_1,c_1)$ -- revealing a 
potential deadlock in the corresponding program. Actually, in this 
case, this circularity is a \emph{false positive} that is introduced
by the (over)approximation: the original code
never manifests a deadlock.

Note finally that a lam is the result of the analysis of one contract.
Hence, to match the structures that are generated during the type inference, our analysis uses three extensions of lams:
 (i) a pair of lams $\pairl{\lafsa}{ \lafsa'}$ for analysing pairs of contracts $\pairl{\cntc}{ \cntc'}$;
 (ii) {\em parameterised} pair of lams $\lambda\vect c.\pairl{\lafsa}{ \lafsa'}$ for analysing methods:
  here, $\vect c$ are the cog names in the header of the method (the $\ethis$ object and the formal parameters), and
  $\pairl{\lafsa}{ \lafsa'}$ is the result of the analysis of the contract pair typing the method;
 and (iii) lam tables $(\cdots,\lambda \vect{c_i}. \pairl{\lafsa_i}{\lafsa_i'}, \cdots )$
 that maps each method in the program to its current approximant.
We observe that $\lambda\vect c.\pairl{\lafsa}{ \lafsa'}$ is $\pairl{\lafsa}{ \lafsa'}$
whenever $\vect{c}$ is empty.

\subsection{\em Lams and lam operations}
\label{sec:lam-lamop}

The following definition formally introduce the notion of lam.

\begin{definition}
A relation on cog names is a set of pairs either of the form $(c_1, c_2)$ or
of the form $(c_1, c_2)^\aa$, generically represented as 
$(c_1,c_2)^{[\aa]}$. We denote such relation by 
$\slam{(c_{i_0},c_{i_1})^{[\aa]} , \cdots , (c_{i_{n-1}},c_{i_n})^{[\aa]}}$.
A \emph{lam}, ranged over $\lafsa$, $\lafsa'$, $\cdots$, is a set of relations on cog 
names. Let $\pinull$ be the lam $\slam{\varnothing}$ and let $\gr{\lafsa}$ be the 
cog names occurring in $\lafsa$.
\end{definition}

The pre-order relation between lam, pair of lams and parameterised pair of lam, noted $\Subset$ is defined below.
This pre-order is central to prove that our algorithm indeed computes a fix point.

\begin{definition}
%{\bf OLD DEF: }
%Let $\lafsa \Subset \lafsa'$ whenever there is a cog name injection $\kappa$ from
%$\gr{\lafsa}$ to $\gr{\lafsa'}$ such that, for every $L \in \lafsa$ there is $L' \in
%\lafsa'$ with $\kappa(L) \subseteq L'$. With an abuse of notation, let
%\begin{itemize}
%\item[--]
%$\pairl{\lafsa_1 }{ \lafsa_1'} \Subset \pairl{\lafsa_2}{\lafsa_2'}$
%whenever  there is a cog name injection $\kappa$ from
%$\gr{\lafsa_1 \cup \lafsa_1'}$ to $\gr{\lafsa_2 \\\cup \lafsa_2'}$ such that, 
%for every $L_1 \in \lafsa_1$ there is $L_2 \in
%\lafsa_2$ with $\kappa(L_1) \subseteq L_2$ and for every $L_1' \in \lafsa_1'$ 
%there is $L_2' \in \lafsa_2'$ with $\kappa(L_1') \subseteq L_2'$ (lifting of 
%the pre-ordering relation to pairs of lams);
%\item[--]
%$\lambda \vect{c}. \pairl{\lafsa_1}{\lafsa_1'} \Subset 
%\lambda \vect{c}. \pairl{\lafsa_2}{\lafsa_2'}$
%whenever  there is a cog name injection $\kappa$ from
%$\gr{\lafsa_1 \cup \lafsa_1'}$ to \\$\gr{\lafsa_2 \cup \lafsa_2'}$ that is 
%the identity on $\vect{c}$ such that $\pairl{\lafsa_1 }{ \lafsa_1'} \Subset \pairl{\lafsa_2}{\lafsa_2'}$.
%\end{itemize} 
%
%
%
%\begin{update}
Let $\lafsa$ and $\lafsa'$ be lams and $\kappa$ be an injective function between cog names. We note $\lafsa \Subset_\kappa \lafsa'$ iff for every 
$L \in \lafsa$ there is $L' \in\lafsa'$ with $\kappa(L) \subseteq L'$.
Let
\begin{itemize}
%\item[--]
%$\pairl{\lafsa_1 }{ \lafsa_1'} \Subset_\kappa \pairl{\lafsa_2}{\lafsa_2'}$ iff $\lafsa_1 \Subset_\kappa \lafsa_2$ and $\lafsa_1' \Subset_\kappa \lafsa_2'$ ;
\item[--]
$\lambda \vect{c}. \pairl{\lafsa_1}{\lafsa_1'} \Subset_\kappa\lambda \vect{c}. \pairl{\lafsa_2}{\lafsa_2'}$
iff $\kappa$ is the identity on $\vect{c}$ and $\pairl{\lafsa_1 }{ \lafsa_1'} \Subset_\kappa \pairl{\lafsa_2}{\lafsa_2'}$.
\end{itemize}
Let also $\Subset$  be the relation
%
%Finally, we define the relation $\Subset$ between lams $\lafsa$, pairs of lams $\pairl{\lafsa}{\lafsa'}$ and between
%parameterised pairs of lams $\lambda \vect{c}. \pairl{\lafsa}{\lafsa'}$ as follow:
\begin{itemize}
\item[--]
$\lafsa \Subset \lafsa'$ iff there is $\kappa$ such that $\lafsa \Subset_\kappa
\lafsa'$;

%\item[--]
%$\pairl{\lafsa_1}{\lafsa_1'} \Subset \pairl{\lafsa_2}{\lafsa_2'}$
% iff there is $\kappa$ such that $\pairl{\lafsa_1}{ \lafsa_1'} \Subset_\kappa \pairl{\lafsa_2}{\lafsa_2'}$;
%
\item[--]
$\lambda \vect{c}. \pairl{\lafsa_1}{\lafsa_1'} \Subset\lambda \vect{c}. \pairl{\lafsa_2}{\lafsa_2'}$ iff there is $\kappa$ such that $\lambda \vect{c}. \pairl{\lafsa_1}{\lafsa_1'} \Subset_\kappa\lambda \vect{c}. \pairl{\lafsa_2}{\lafsa_2'}$.
\end{itemize}
\end{definition}

The set of lams with the $\Subset$ relation is a
pre-order with a bottom element, which is either 
$\pinull$ %or $\pairl{\pinull}{\pinull}$
or $\lambda \vect{c}.\pairl{\pinull}{\pinull}$ or
$( \cdots , \lambda \vect{c_i}. \pairl{\pinull}{\pinull}, \cdots )$ according to the
domain we are considering. 
In Figure~\ref{fig:lamOp} we define a number of basic operations on the lam model that are
 used
in the semantics of contracts.

The relevant property for the following theoretical development is the one below.
We say that an operation is monotone if,
whenever it is applied
to arguments in the pre-order relation $\Subset$, it returns values in the 
same pre-order relation $\Subset$).
The proof is straightforward and therefore omitted.

\begin{proposition}
%An operation {\tt op} is monotone with respect to $\preceq$, if, whenever $\lafsa
%\preceq \lafsa'$ then ${\tt op}(\lafsa) \preceq {\tt op}(\lafsa')$ (similarly 
%for binary operations).
%
The operations of extension, parallel, sequence and plus are monotone with respect to
$\Subset$. Additionally, if $\lafsa \Subset \lafsa'$ then 
$\lafsa \subst{\vect{c'}}{\vect{c}} \Subset \lafsa' \subst{\vect{c'}}{\vect{c}}$.
\end{proposition}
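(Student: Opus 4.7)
The plan is to prove each clause by unfolding the definition of $\Subset$ (i.e.~exhibiting a witness injection $\kappa$) and manipulating it to handle each operation. Since $\Subset$ on pairs and on parameterised pairs is componentwise, I will establish the lam-level case and transport it.

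First, I would handle extension. Suppose $\lafsa \Subset_\kappa \lafsa'$. I claim $\lafsa \addition (c,c')^{[\aa]} \Subset_{\kappa'} \lafsa' \addition (c,c')^{[\aa]}$ for some $\kappa'$ that agrees with $\kappa$ on $\cn{\lafsa}$ and fixes both $c$ and $c'$. Indeed, if $c$ (or $c'$) lies outside $\cn{\lafsa}$, I simply extend $\kappa$ by the identity on it; if $c$ is in $\cn{\lafsa}$ but $\kappa(c)\neq c$, I pre-compose $\kappa$ with a fresh renaming of the internal cog names so as to free $c$ and $c'$ and then set $\kappa'(c)=c$, $\kappa'(c')=c'$. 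Injectivity is preserved since all choices are on fresh names. Then each $L\cup\{(c,c')^{[\aa]}\}\in \lafsa\addition(c,c')^{[\aa]}$ is witnessed by $L'\cup\{(c,c')^{[\aa]}\}\in\lafsa'\addition(c,c')^{[\aa]}$, where $L'$ is the witness of $L$ under $\kappa$.

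For the parallel operation, suppose $\lafsa_1 \Subset_{\kappa_1} \lafsa_2$ and $\lafsa_1' \Subset_{\kappa_2} \lafsa_2'$. By the same fresh-renaming trick as above, I may assume, without loss of generality, that $\kappa_1$ and $\kappa_2$ coincide on the (finite) set of cog names common to $\lafsa_1$ and $\lafsa_1'$, and I merge them into a single injective $\kappa$. Then for any $L_1\cup L_1' \in \lafsa_1\parop\lafsa_1'$, taking witnesses $L_2\supseteq\kappa(L_1)$ and $L_2'\supseteq\kappa(L_1')$ yields $L_2\cup L_2'\in\lafsa_2\parop\lafsa_2'$ with $\kappa(L_1\cup L_1')\subseteq L_2\cup L_2'$. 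The sequence operation on pairs is a combination of unions and parallels: assuming $\pairl{\lafsa_1}{\lafsa_1'}\Subset_{\kappa}\pairl{\lafsa_2}{\lafsa_2'}$ and $\pairl{\lafsa_3}{\lafsa_3'}\Subset_{\kappa'}\pairl{\lafsa_4}{\lafsa_4'}$, I choose (after renaming) a common $\kappa$ and then check each clause of the piecewise definition in Figure~\ref{fig:lamOp}; union monotonicity (straightforward) together with parallel monotonicity handles both branches. The plus operation on pairs is just componentwise union, which is monotone by the definition of $\Subset$.

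For the final clause, suppose $\lafsa \Subset_\kappa \lafsa'$ and let $\sigma = \subst{\vect{c'}}{\vect{c}}$. I claim $\lafsa\sigma \Subset_{\kappa_\sigma} \lafsa'\sigma$ where $\kappa_\sigma = \sigma\circ\kappa\circ\sigma^{-1}$, extended by the identity where necessary. After, if needed, a further renaming of internal names to avoid capture (so that $\kappa_\sigma$ remains injective), for every $L\sigma\in\lafsa\sigma$ the witness is $L'\sigma\in\lafsa'\sigma$: from $\kappa(L)\subseteq L'$ we get $\sigma(\kappa(L))\subseteq L'\sigma$, hence $\kappa_\sigma(L\sigma)\subseteq L'\sigma$.

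The main obstacle, as suggested above, is compatibility of the witness injections: a naive combination of $\kappa_1$ and $\kappa_2$ in the parallel case (or of $\kappa$ with the substitution) may fail to be well-defined or injective. The technique is uniform, however: since cog names not appearing in the lams are immaterial, one may always rename them to disjoint fresh sets and then set the combined $\kappa$ to be the identity on any remaining overlaps, obtaining the required injective witness.
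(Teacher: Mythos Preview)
The paper does not actually prove this proposition; it says only that the proof ``is straightforward and therefore omitted.'' Your attempt is more explicit, but the device you lean on---the ``fresh-renaming trick''---does not do what you need.

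The lams $\lafsa,\lafsa'$ are fixed sets of relations over specific cog names; you may choose the witness injection $\kappa'$, but you are not entitled to rename names inside $\lafsa$ or $\lafsa'$. Thus ``pre-compose $\kappa$ with a fresh renaming of the internal cog names so as to free $c$ and $c'$,'' and later ``without loss of generality $\kappa_1$ and $\kappa_2$ coincide on the common cog names,'' both change the statement being proved. In fact two witnesses need not be mergeable at all: take one-element lams $\lafsa_1=\{\slam{(a,b)}\}$, $\lafsa_1'=\{\slam{(a,d)}\}$, $\lafsa_2=\{\slam{(x,y)}\}$, $\lafsa_2'=\{\slam{(z,w)}\}$ with $x\neq z$. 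Then $\lafsa_1\Subset\lafsa_2$ and $\lafsa_1'\Subset\lafsa_2'$, yet $\lafsa_1\parop\lafsa_1'=\{\slam{(a,b),(a,d)}\}$ admits no injective $\kappa$ into $\lafsa_2\parop\lafsa_2'=\{\slam{(x,y),(z,w)}\}$, because both source pairs share first component $a$ while the two target pairs do not. The same obstruction hits extension when the added pair reuses a name that $\kappa$ moves, and your substitution argument presupposes that $\sigma=\subst{\vect{c'}}{\vect{c}}$ is invertible, which a cog-name substitution need not be.

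What makes the claim ``straightforward'' in the paper's actual usage is that the relevant comparisons arise on terms $\lambda\vect{c}.\pairl{\cdot}{\cdot}$, where by definition the witness $\kappa$ is already the identity on the parameters $\vect{c}$; and the dependency pairs added by extension, as well as the names touched by the substitutions in Figure~\ref{fig:lamTrans}, are either parameter names or fresh ones disjoint from everything in sight. Under that discipline every operation is trivially $\Subset_\kappa$-monotone for the \emph{same} $\kappa$, and no reconciliation of distinct injections is required. Your argument would be repaired by making these hypotheses explicit and working uniformly with one $\kappa$, rather than trying to manufacture compatibility after the fact.
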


\subsection{\em The finite approximants of abstract method behaviours.}
\label{sec.finiteapproximants}

As explained above, the 
 lam model of a {\coreABS} program is obtained by means of a 
fixpoint technique plus a saturation applied to its contract class table. 
In particular, the lam model of the class table is a
lam table that maps each method $\C.\m$ of the program to $\lambda \vect{c}_{\C.\m}. \pairl{\lafsa_{\C.\m}}{\lafsa_{\C.\m}'}$ where $\vect{c}_{\C.\m} = \extr{\frr, \vect{\frs}}$, with $\frr(\vect{\frs})$ being the header of the method contract of
 $\C.\m$. The definition of $\extr{\frr, \vect{\frs}}$ is given in 
Figure~\ref{fig:extrproc} (we recall that names in headers occur linearly).
\begin{figure*}[t]
\myrules{\mathmode\simplemode}{
    \extr{\unit}  \eqdef  \varepsilon \and
    \extr{\X}  \eqdef  \varepsilon  \and
    \extr{[\cog {:} \obj, x_1 {:} \frr_1, \cdots , x_n {:} \frr_n]} \eqdef  \obj \, \extr{\frr_1} \, \cdots \, \extr{\frr_n} \and
    \extr{\fRec{\obj}{\frr}} \eqdef  \obj \, \extr{\frr}
    \and
    \extr{\vect{\frr}, \vect{\frs}} \eqdef \extr{\vect{\frr}} \extr{ \vect{\frs}}
}\vspace{-1em}
\caption{The extraction process.}\label{fig:extrproc}
\end{figure*}
The following definition presents the algorithm used to compute the next approximant of a method class table.
\begin{figure*}[t]
\myrules{\mathmode\simplemode}{
    \repl{\unit}{\unit} \eqdef \varepsilon \and
    \repl{\frr}{\X} \eqdef \varepsilon \and
    \repl{[\cog {:} \obj', x_1 {:} \frr_1', \cdots , x_n {:} \frr_n']}{[\cog {:} \obj, x_1 {:} \frr_1, \cdots , x_n {:} \frr_n]}
      \eqdef \obj' \, \repl{\frr_1'}{\frr_1} \, \cdots \, \repl{\frr_n'}{\frr_n} \and
    \repl{\fRec{\obj'}{\frr'}}{\fRec{\obj}{\frr}} \eqdef \obj' \, \repl{\frr'}{\frr}
}\vspace{-1em}
\caption{The cog mapping process.}\label{fig:cmproc}
\end{figure*}
\begin{definition}
\label{def.lafsatransf}
Let {\cct} be a contract class table of the form
\[
\bigl(\cdots, \C.\m \mapsto\mcontract{}{\frr_{\C.\m}(\overline{\frs_{\C.\m}})}{\pairl{\cntc_{\C.\m}}{\cntc_{\C.\m}'}}{ \frr_{\C.\m}'}, \cdots \bigr)
\]
\begin{enumerate}
\item
the \emph{ approximant 0} is defined as 
\[
\bigl(\cdots , \lambda(\extr{\frr_{\C.\m}, \vect{\frs_{\C.\m}}}).\pairl{\pinull}{\pinull}, \cdots \bigr) \; ;
\]

\item
let  $\lt = \bigl(\cdots, \lambda\extr{\frr_{\C.\m}, \vect{\frs_{\C.\m}}}.\pairl{\lafsa_{\C.\m}}{\lafsa_{\C.\m}'}, \cdots 
 \bigr)$ be the $n$-th approximant; the $n+1$-th approximant 
is  defined as $\bigl(\cdots, \lambda\extr{\frr_{\C.\m}, \vect{\frs_{\C.\m}}}.\pairl{\lafsa_{\C.\m}''}{\lafsa_{\C.\m}'''}, \cdots 
 \bigr)$ where
$\pairl{\lafsa_{\C.\m}''}{\lafsa_{\C.\m}'''}$ 
$= \cntc_{\C.\m}(\lt)_c \; \fatsemi \; \cntc_{\C.\m}'(\lt)_c$
with $c$ being the cog of $\frr_{\C.\m}$ and the function $\cntc(\lafsa)_c$ being defined by structural induction in Figure~\ref{fig:lamTrans}. 
\end{enumerate}
\end{definition}

\begin{figure*}[t]
\begin{enumerate}
\item[1.]
let $\overline{b_{\C,\m}} = (\gr{\lafsa_{\C,\m}} \cup \gr{\lafsa_{\C,\m}'})  \setminus
\overline{c_{\C,\m}}$. These are the free cog 
names that are replaced by \emph{fresh} cog names at every \emph{transformation step};

\medskip

\item[2.]
the transformation $\cntc_{\C,\m}
(\cdots \lambda \vect{c_{\C,\m}}.\pairl{\lafsa_{\C,\m}}{\lafsa_{\C,\m}'}, \cdots %, \langle \lafsa\fatcomma \lafsa' \rangle
)_c$ is defined inductively as follows:
\\
\\
\begin{tabular}{ll}
--&
$\pairl{\pinull}{\pinull} \addition (c_1,c_2)^{[\aa]}$ \hfill \rulename{L-GAzero}
\\
&\hspace*{.3cm} if
$\cntc_{\C,\m} = (c_1,c_2)^{[\aa]}$;
\\
\\
--&
$(\lambda \vect{c_{\D,\n}}. \pairl{\lafsa_{\D,\n}}{\lafsa_{\D,\n}'} \addition (c,c)^{\aa}
\subst{\overline{b_{\D,\n}'}}{\overline{b_{\D,\n}}}) \repl{\frr'}{\frr_{\D,\n}} \repl{\overline{\frs'}}{\overline{\frs_{\D,\n}}}$
%\Bigr) \subst{\overline{b'}}{\overline{b}}$
\hfill \rulename{L-SInvk}
\\
&\hspace*{.3cm} if
$\cntc_{\C,\m} = \D.\n \; \frr'(\overline{\frs'})\rightarrow\frr''$, $\frr' = 
[\cog {:} c, \bar{x} {:} \bar{\frr'}]$, and $\cct(\D)(\n) =
\mcontract{}{\frr_{\D,\n}(\overline{\frs_{\D,\n}})}{\pairl{\cntc_{\D,\n}}{\cntc_{\D,\n}'}}{\frr_{\D,\n}'}$
\\
&
\hspace*{.3cm} and $\overline{b_{\D,\n}'}$ are fresh cog names;
%\repl{\frr''}{\frr_{\D,\n}'})$;
\\
\\
-- & $(\lambda \vect{c_{\D,\n}}. \pairl{\lafsa_{\D,\n}}{\lafsa_{\D,\n}'}\addition (c,c')
\subst{\overline{b_{\D,\n}'}}{\overline{b_{\D,\n}}}) \repl{\frr'}{\frr_{\D,\n}} \repl{\overline{\frs'}}{\overline{\frs_{\D,\n}}}$
%\Bigr) \subst{\overline{b'}}{\overline{b}}$
\hfill \rulename{L-RSInvk}
\\
&\hspace*{.3cm} if
$\cntc_{\C,\m} = \D.\n \; \frr'(\overline{\frs'})\rightarrow\frr''$, $\frr' = 
[\cog {:} c', \bar{x} {:} \bar{\frr'}]$, and $c \neq c'$  and $\cct(\D)(\n) =
\mcontract{}{\frr_{\D,\n}(\overline{\frs_{\D,\n}})}{\pairl{\cntc_{\D,\n}}{\cntc_{\D,\n}'}}{\frr_{\D,\n}'}$
\\
&
\hspace*{.3cm} and $\overline{b_{\D,\n}'}$ are fresh cog names;
%\repl{\frr''}{\frr_{\D,\n}'})$;
\\
\\
--&
%$\Bigl(  
%$\lambda \vect{c_{\C,\m}}. 
$\pairl{\pinull}{\lafsa_{\D,\n}'' \cup 
\lafsa_{\D,\n}'''}$
%\Bigr) \subst{\overline{b'}}{\overline{b}}$
\hfill \rulename{L-AInvk}
\\
&\hspace*{.3cm} if
$\cntc_{\C,\m} = \D!\n \; \frr'(\overline{\frs'})\rightarrow\frr''$ and $\cct(\D)(\n) =
\mcontract{}{\frr_{\D,\n}(\overline{\frs_{\D,\n}})}{\pairl{\cntc_{\D,\n}}{\cntc_{\D,\n}'}}{\frr_{\D,\n}'}$
\\
&\hspace*{.3cm} 
and $\pairl{\lafsa_{\D,\n}''}{\lafsa_{\D,\n}'''} =
(\lambda \vect{c_{\D,\n}}. \pairl{\lafsa_{\D,\n}}{\lafsa_{\D,\n}'}
\subst{\overline{b_{\D,\n}'}}{\overline{b_{\D,\n}}}) \repl{\frr'}{\frr_{\D,\n}} \repl{\overline{\frs'}}{\overline{\frs_{\D,\n}}}$
%\\
%& \hspace*{.3cm} 
and $\overline{b_{\D,\n}'}$ are fresh cog names;
%\repl{\frr''}{\frr_{\D,\n}'})$;
\\
\\
--&
%$\Bigl( 
%$\lambda \vect{c_{\C,\m}}. (
$(\lambda \vect{c_{\D,\n}}. \pairl{\lafsa_{\D,\n}}{\lafsa_{\D,\n}'} \addition (c_1,c_2)^{[\aa]}
\subst{\overline{b_{\D,\n}'}}{\overline{b_{\D,\n}}}) \repl{\frr'}{\frr_{\D,\n}} \repl{\overline{\frs'}}{\overline{\frs_{\D,\n}}}$
%)$ 
%\Bigr) \subst{\frr'}{\frr_{\D,\n}} \subst{\overline{\frs'}}{\overline{\frs_{\D,\n}}}
%\subst{\frr''}{\frr_{\D,\n}'}\subst{\overline{b'}}{\overline{b}}$
\hfill \rulename{L-GAinvk}
\\
&\hspace*{.3cm} if
$\cntc_{\C,\m} = \D!\n \; \frr'(\overline{\frs'}) \rightarrow\frr''\seqpoint (c_1,c_2)^{[\aa]}$ and $\cct(\D)(\n) =
\mcontract{}{\frr_{\D,\n}(\overline{\frs_{\D,\n}})}{\pairl{\cntc_{\D,\n}}{\cntc_{\D,\n}'}}{\frr_{\D,\n}'}$
%$\cct(\D)(\n) =
%\mcontract{}{\frr_{\D,\n}}{\overline{\frs_{\D,\n}}}{\frr_{\D,\n}'}$
and $\overline{b_{\D,\n}'}$ are fresh cog names;
\\
\\
--&
$\cntc_{\C,\m}'
(\cdots , \lambda \vect{c_{\C,\m}}.\pairl{\lafsa_{\C,\m}}{\lafsa_{\C,\m}'}, \cdots
)_c 
 \fatsemi 
\cntc_{\C,\m}''
(\cdots ,\lambda \vect{c_{\C,\m}}.\pairl{\lafsa_{\C,\m}}{\lafsa_{\C,\m}'}, \cdots
)_c$ 
%
%$\frr_{\C,\m}(\overline{\frs_{\C,\m}}).\cntc_{\C,\m}'
%(\cdots \langle \lafsa_{\C,\m}\fatcomma \lafsa_{\C,\m}' \rangle, \cdots ) 
%\fatsemi \frr_{\C,\m}(\overline{\frs_{\C,\m}}).\cntc_{\C,\m}''
%(\cdots \langle \lafsa_{\C,\m}\fatcomma \lafsa_{\C,\m}' \rangle, \cdots ) $
\hfill \rulename{L-Seq}
\\
&\hspace*{.3cm} if
$\cntc_{\C,\m} = \cntc_{\C,\m}' \fatsemi \cntc_{\C,\m}''$;
\\
\\
--&
$\cntc_{\C,\m}'
(\cdots , \lambda \vect{c_{\C,\m}}.\pairl{\lafsa_{\C,\m}}{\lafsa_{\C,\m}'}, \cdots
)_c \; + \; 
\cntc_{\C,\m}''
(\cdots , \lambda \vect{c_{\C,\m}}.\pairl{\lafsa_{\C,\m}}{\lafsa_{\C,\m}'}, \cdots
)_c$ 
%
%$\frr_{\C,\m}(\overline{\frs_{\C,\m}}).\cntc_{\C,\m}'
%(\cdots \langle \lafsa_{\C,\m}\fatcomma \lafsa_{\C,\m}' \rangle, \cdots ) 
%\fatsemi \frr_{\C,\m}(\overline{\frs_{\C,\m}}).\cntc_{\C,\m}''
%(\cdots \langle \lafsa_{\C,\m}\fatcomma \lafsa_{\C,\m}' \rangle, \cdots ) $
\hfill \rulename{L-Plus}
\\
&\hspace*{.3cm} if
$\cntc_{\C,\m} = \cntc_{\C,\m}' + \cntc_{\C,\m}''$;
\\
\\
%--&
%$\cntc_{\C,\m}'
%(\cdots , \lambda \vect{c_{\C,\m}}.\pairl{\lafsa_{\C,\m}}{\lafsa_{\C,\m}'}, \cdots
%)_c \; \parop\; 
%\cntc_{\C,\m}''
%(\cdots ,\lambda \vect{c_{\C,\m}}.\pairl{\lafsa_{\C,\m}}{\lafsa_{\C,\m}'}, \cdots
%)_c$ 
%%
%$\frr_{\C,\m}(\overline{\frs_{\C,\m}}).\cntc_{\C,\m}'
%(\cdots \langle \lafsa_{\C,\m}\fatcomma \lafsa_{\C,\m}' \rangle, \cdots ) 
%\fatsemi \frr_{\C,\m}(\overline{\frs_{\C,\m}}).\cntc_{\C,\m}''
%(\cdots \langle \lafsa_{\C,\m}\fatcomma \lafsa_{\C,\m}' \rangle, \cdots ) $
%\hfill \rulename{L-Par}
%\\
%&\hspace*{.3cm} if
%$\cntc_{\C,\m} = \cntc_{\C,\m}' \parop \cntc_{\C,\m}''$.\\
%\\
--&
$\cntc_{\C,\m}'
(\cdots , \lambda \vect{c_{\C,\m}}.\pairl{\lafsa_{\C,\m}}{\lafsa_{\C,\m}'}, \cdots
)_c \; \rfloor\; 
\cntc_{\C,\m}''
(\cdots , \lambda \vect{c_{\C,\m}}.\pairl{\lafsa_{\C,\m}}{\lafsa_{\C,\m}'}, \cdots
)_c$ 
%
%$\frr_{\C,\m}(\overline{\frs_{\C,\m}}).\cntc_{\C,\m}'
%(\cdots \langle \lafsa_{\C,\m}\fatcomma \lafsa_{\C,\m}' \rangle, \cdots ) 
%\fatsemi \frr_{\C,\m}(\overline{\frs_{\C,\m}}).\cntc_{\C,\m}''
%(\cdots \langle \lafsa_{\C,\m}\fatcomma \lafsa_{\C,\m}' \rangle, \cdots ) $
\hfill \rulename{L-Par}
\\
&\hspace*{.3cm} if
$\cntc_{\C,\m} = \cntc_{\C,\m}' \rfloor \cntc_{\C,\m}''$.
\end{tabular}

\end{enumerate}
  \caption{ Lam transformation of \cct.}\label{fig:lamTrans}
\end{figure*}

It is worth to notice that there are two rules for synchronous invocations in 
Figure~\ref{fig:lamTrans}: \rulename{L-SInvk} dealing with  synchronous 
invocations on the same cog name of the caller -- the index $c$ of the transformation --,
\rulename{L-RSInvk} dealing with  synchronous invocations on different cog names.

%
%The \emph{lam transformation} of Definition~\ref{def.lafsatransf}
%is initially applied to the first approximant (of the so-called \emph{abstract
%class table})
Let
$$\Bigl( \cdots , \lambda \vect{\obj_{\C,\m}}.\pairl{{\lafsa_{\C,\m}}^0}{{\lafsa_{\C,\m}'}^0} , \cdots
\Bigr) = \Bigl( \cdots , \lambda \vect{\obj_{\C,\m}}.\pairl{\pinull}{\pinull} \rangle , \cdots
\Bigr),$$ 
and let
\[
\begin{array}{l}
\Bigl( \cdots ,\lambda \vect{\obj_{\C,\m}}.\pairl{{\lafsa_{\C,\m}}^0}{{\lafsa_{\C,\m}'}^0} , \cdots
\Bigr) , \\
\Bigl( \cdots ,\lambda \vect{\obj_{\C,\m}}.\pairl{{\lafsa_{\C,\m}}^1}{{\lafsa_{\C,\m}'}^1} , \cdots
\Bigr),\\
\Bigl( \cdots ,\lambda \vect{\obj_{\C,\m}}.\pairl{{\lafsa_{\C,\m}}^2}{{\lafsa_{\C,\m}'}^2} , \cdots
\Bigr), \cdots
\end{array}
\]
be the sequence obtained by the algorithm of Definition~\ref{def.lafsatransf}
(this is the standard Knaster-Tarski technique). This sequence is non-decreasing (according to $\Subset$) because it is defined 
as a composition of monotone operators, see 
Proposition~\ref{def.lafsatransf}. 
%
%In the above sequence,
%the lam function $\lambda \vect{\obj_{\C,\m}}.\pairl{{\lafsa_{\C,\m}}^i}{{\lafsa_{\C,\m}'}^i}$ is the 
%\emph{i-th finite approximant} of {\C.\m}.
%
Because of the
creation of new cog names at each iteration, the fixpoint of the above sequence may not exist. We have already discussed the example of ${\tt Math.fact\_nc}$.
%For example, the method contract $\cct(\C)(\m)={}$
%%contract $\C.\m \; [\cog {:} c](~)\rightarrow [\cog {:} c] \seqpoint (c',c) $, 
%%where  
%$$\mcontract{}{[\cog {:} c](~)}{\{ \C.\m \; [\cog {:} c'](~)\rightarrow [\cog {:} c'] \seqpoint (c,c') \} }{[\cog {:} c']},$$
%% $\obj() \{ 
%% \C.\m^{\gg}(\obj') \}~\obj$
%yields the infinite sequence 
%\[
%\begin{array}{l}
%\lambda c. \pairl{\pinull}{\pinull}
%\\
%\lambda c. \pairl{\slam{ (c,c') }}{\pinull}
%\\
%\lambda c. \pairl{\slam{ (c,c'), (c',c_1) }}{\pinull}
%\\
%\lambda c. \pairl{\slam{ (c,c'), (c',c_2), (c_2,c_3) }}{\pinull}
%\\
%\cdots
%\end{array}
%\]
%%where a set of pairs $W$ represents the \emph{one-state/no-transition 
%%lafsa} $(\{ \la \},\emptyset,\la, \la)$. 
%%
In order to let our analysis terminate, after a given approximant, we 
run the Knaster-Tarski technique \emph{using a different semantics for the operations
\rulename{L-SInvk}, \rulename{L-RSInvk}, \rulename{L-AInvk}, and \rulename{L-GAinvk}} 
(these are the rules where cog names may be created).
In particular, when these operations are used at approximants larger than $n$, the
renaming of free variables is disallowed. That is, the substitutions $\subst{\overline{b_{\D,\n}'}}{\overline{b_{\D,\n}}}$ in Figure~\ref{fig:lamTrans} are removed. It is straightforward to verify that
these operations are still monotone. It is also straightforward to demonstrate by a 
simple cardinality argument the
existence of fixpoints in the lam domain 
by running the Knaster-Tarski technique with this different
semantics.
This method is called a \emph{saturation technique at} $n$. 

For example, if we compute the third approximant of
$$
\begin{array}{l}
{\tt Math.fact\_nc} \mapsto
\\
\quad [\cog {:} \obj](\unit) \{ 
\pairl{\pinull + \mbox{\Abs{Math\!}}\name{fact\_nc}\; [\cog {:} \obj'](\unit)\rightarrow \unit \seqpoint (\obj,\obj')}{\pinull}
\\
\qquad \qquad \qquad \} \unit
\end{array}
$$
% $\obj() \{ 
% \C.\m^{\gg}(\obj') \}~\obj$
we get the sequence 
\[
\begin{array}{l}
\lambda c. \pairl{\pinull}{\pinull}
\\
\lambda c. \pairl{\slam{ (c,c_0) }}{\pinull}
\\
\lambda c. \pairl{\slam{ (c,c_0), (c_0,c_1) }}{\pinull}
\\
\lambda c. \pairl{\slam{ (c,c_0), (c_0,c_1), (c_1,c_2) }}{\pinull}
\end{array}
\]
and, if we saturate a this point, we obtain
\[
\begin{array}{l}
\lambda c. \pairl{\slam{(c,c_0) (c_0,c_0), (c_0,c_1), (c_1,c_2) }}{\pinull}
\\
\lambda c. \pairl{\slam{(c,c_0) (c_0,c_0), (c_0,c_1), (c_1,c_2) }}{\pinull} \qquad 
\mbox{fixpoint}
\end{array}
\]

\begin{definition}
Let $(\ct , \{ \vect{T \ x \sseq} s \} , \cct)$ be a {\coreABS} program and let
$\Bigl( \cdots \lambda \vect{c_{\C,\m}}. 
\pairl{{\lafsa_{\C,\m}}^{n+h}}{{\lafsa_{\C,\m}'}^{n+h}} , \cdots \Bigr)$ be  the 
fixpoint (unique up to renaming of cog names)
obtained by the saturation technique at $n$. The \emph{abstract class table at} $n$, written $\act_{[n]}$, is a map that takes $\C.\m$ and returns 
$\lambda \vect{c_{\C,\m}}. \pairl{{\lafsa_{\C,\m}}^{n+h}}{{\lafsa_{\C,\m}'}^{n+h}}$.
\end{definition}

Let $(\ct , \{ \vect{T \ x \sseq} s \}, \cct)$ be a {\coreABS} program and
$$\inferst{\Gamma 
}{{\rm start}}{\{ 
\vect{T \ x \sseq} s \}}{\pairl{\cntc}{\cntc'}}{\mathcal U}{\Gamma} \; .$$
%
%\tfJ{\Gamma + \this: [\cog {:} {\rm start}]}{{\rm start}}{\{ 
%\vect{T \ x \sseq} s \}}{(\T',\frr)}{\cntc}{}$ 
%where
%$\Gamma$ maps $\C.\m$ to the interface of $\cct(\C)(\m)$ and 
%``${\rm start}$'' is a special cog name corresponding to the object ${\it start}$, 
%and let
%$\act_{[n]}$ be the corresponding abstract class table at $n$.
The \emph{abstract
semantics saturated at} $n$ of $(\ct , \{ \vect{T \ x \sseq} s \},$ $\cct)$ is 
%{\small
%\[  
%\pairl{\cntc \Bigl( \cdots ,\act_{[n]}(\C.\m) , 
%\cdots \Bigr)_{\rm start}}{\cntc'\Bigl( \cdots ,\act_{[n]}(\C.\m) , 
%\cdots \Bigr)_{\rm start}}   \; .
%\]}
$(\cntc (\act_{[n]})_{\rm start})\cseq (\cntc'(\act_{[n]})_{\rm start})$.

As an example, in Figure~\ref{fig:satEx} we compute the abstract semantics 
saturated at 2 of 
the
class \name{Math} in Figure~\ref{fig.Math}. 

\begin{figure*}[t] 
{\footnotesize
\begin{tabular}{l@{\quad}l@{\quad}l@{\quad}l@{\quad}l}
method & approx. 0 & approx. 1 & approx.2 & saturation
\\
\hline
\\
\mbox{\Abs{Math.fact\_g}} & $\lambda c. \pairl{\pinull}{\pinull}{}$ 
& $\lambda c. \pairl{\slam{ (c,c)}}{\pinull}{}$
& $\lambda c. \pairl{\slam{(c,c)}}{\pinull}{}$
\\
\mbox{\Abs{Math.fact\_ag}} & $\lambda c. \pairl{\pinull}{\pinull}{}$
& $\lambda c. \pairl{\slam{(c,c)^\aa}}{\pinull}{}$
& $\lambda c. \pairl{\slam{(c,c)^\aa }}{\pinull}{}$
\\
\mbox{\Abs{Math.fact\_nc}} & $\lambda c. \pairl{\pinull}{\pinull}{}$
& $\lambda c. \pairl{\slam{(c,c')}}{\pinull}{}$
& $\lambda c. \pairl{\slam{(c,c'), (c',c'')}}{\pinull}{}$
& $\lambda c. \pairl{\slam{(c,c'), (c',c'), (c',c'')}}{\pinull}{}$
\end{tabular}
}
\caption{Abstract class table computation for class \name{Math}.}
\label{fig:satEx}
\end{figure*}

\subsection{\em Deadlock analysis of lams.}
\label{sec.deadlock_analysis_lams}

\begin{definition}
\label{def.lam-circularity}
Let $L$ be a relation on cog names (pairs in $L$ are either of the form $(\obj_1, \obj_2)$ or of the form
$(\obj_1, \obj_2)^\aa$. 
%The {\get}\emph{-closure} of
%$L$, noted $L^\get$, is the least set such that
%$$\begin{array}{c}
%L \in L^\get 
%\qquad 
%\bigfract{(c_1,c_2) \in L^\get \quad (c_2,c_3)^{[\aa]} \in L^\get}{(c_1,c_3) \in L^\get}
%%\\ 
%%\\
%%\bigfract{(c_1,c_2) \in L^\get \quad (c_2,c_3)^\aa \in L^\get}{(c_1,c_3) 
%%\in L^\get}
%\end{array}$$
$L$ \emph{contains a circularity} if $L^{\get}$ has a pair $(\obj,\obj)$ (see Definition~\ref{def.obj-circularity}). Similarly, $\pairl{\lafsa}{\lafsa'}$ (or 
$\lambda \vect{c}. \pairl{\lafsa}{\lafsa'}$)
has a circularity if
there is $L \in \lafsa \cup \lafsa'$ that contains a circularity.

A {\coreABS} program with an abstract class table saturated at $n$ is \emph{deadlock-free} if its abstract semantics 
 $\pairl{\lafsa}{\lafsa'}$ does not contain a circularity.
\end{definition}

\medskip
\noindent
The fixpoints for \mbox{\Abs{Math.fact\_g}} and \mbox{\Abs{Math.fact\_ag}} are
found at the third iteration. According to the above definition of deadlock-freedom,
\mbox{\Abs{Math.fact\_g}} yields a deadlock, whilst \mbox{\Abs{Math.fact\_ag}} is deadlock-free because $\{ (c,c)^\aa \}^\get$ does not contain any circularity.
As discussed before, there exists no fixpoint for \mbox{\Abs{Math.fact\_nc}}. If
we decide to stop at the approximant 2 and saturate, we get 
$$\lambda c. \pairl{\slam{(c,c'), (c',c') , (c',c'')}}{\pinull},$$
which contains a  circularity that is a false positive.

Note that saturation might even start at the approximant 0
(where every method is $\lambda c. \pairl{\pinull}{\pinull}{}$). In this case,
for
\mbox{\Abs{Math.fact\_g}} and \mbox{\Abs{Math.fact\_ag}},
we get the same answer and the same pair of lams as the above third approximant. 
For \mbox{\Abs{Math.fact\_nc}}
we get $$\lambda c. \pairl{\slam{(c,c'), (c',c')}}{\pinull}{},$$ which contains 
a circularity. 

\medskip

In general, in techniques like the one we have presented, 
it is possible to augment the precision of the analysis 
by delaying the saturation. However, assuming that pairwise different method contracts
have disjoint free cog names (which is a reasonable assumption), we have not found 
any sample {\coreABS} code where saturating at 1 gives a better precision than saturating at 0. While this issue is left open, the current version of our 
tool {\DFfABS} allows one to specify the
saturation point; the default saturation point is 0.

\begin{figure*}[t] 
{\footnotesize
\begin{tabular}{l@{\quad}l@{\quad}l@{\quad}l}
method & approx. 0 & approx. 1 & approx.2
\\
\hline
\\
\mbox{\Abs{CpxSched.m1}} & $\lambda c,c',c''. \pairl{\pinull}{\pinull}{}$ 
& $\lambda c,c',c''. \pairl{\pinull}{\slam{(c',c''),(c'',c')}}{}$
& $\lambda c,c',c''. \pairl{\pinull}{\slam{(c',c''),(c'',c')}}{}$
\\
\mbox{\Abs{CpxSched.m2}} & $\lambda c,c'. \pairl{\pinull}{\pinull}{}$
& $\lambda c,c'. \pairl{\slam{(c,c')}}{\pinull}{}$
& $\lambda c,c'. \pairl{\slam{(c,c')}}{\pinull}{}$
\\
\mbox{\Abs{CpxSched.m3}} & $\lambda c. \pairl{\pinull}{\pinull}{}$
& $\lambda c. \pairl{\pinull}{\pinull}{}$
\end{tabular}
}
\caption{Abstract class table computation for class \name{CpxSched}.}
\label{fig:satExCpx}
\end{figure*}

The computation of the abstract class table for class \name{CpxSched} does not need
any saturation, all methods are non-recursive and encounter their fixpoint by iteration 2. (See Figure~\ref{fig:satExCpx}.)
The abstract class table shows a circularity for method \name{m1}, manifesting the presence of a deadlock.

The correctness of the fixpoint analysis of contracts discussed in this section 
is demonstrated in Appendix~\ref{sec:analysis-proof}.
%
%our technique is affirmed in the following theorem.
%Given a configuration $\nt{cn}$ of a program with  $\act_{[n]}$ as its abstract class table at $n$, and such that 
%$\Delta\vdash_R \nt{cn} : \pairl{\cntc_1}{\cntc_2}$, 
%we write $\abstractsemantics{\nt{cn}}{\act_{[n]}}$  
%%$\abstractsemantics{\pairl{\cntc_1}{\cntc_2}}{\act_{[n]}}$ 
%meaning 
%{\small
%\[
%\pairl{\cntc_1 \Bigl( \cdots ,\act_{[n]}(\C.\m) , \cdots \Bigr)_{\rm start}}{\cntc_2\Bigl( \cdots ,\act_{[n]}(\C.\m) , \cdots \Bigr)_{\rm start}} \; .
%\] }
%
%\begin{theorem}\label{thm:analysis-correct}
%Let $(\ct , \{ \vect{T \ x \sseq} s \}, \cct)$ be a \coreABS{} program, $\act_{[n]}$ be its abstract class table at $n$, and $\nt{cn}$ be a configuration of its operational semantics.
%  \begin{enumerate}
%  \item If $\nt{cn}$ has a circularity, then at least one relation in $\abstractsemantics{\nt{cn}}$, has a circularity;
%\item if $\nt{cn}\to \nt{cn}'$ and $\abstractsemantics{\nt{cn}'}$ has a circularity, then a circularity is already present in $\abstractsemantics{\nt{cn}}$.
%  \end{enumerate}
%\end{theorem}
%\begin{proof}
%  See Appendix~\ref{sec:analysis-proof}.
%\end{proof}
%% \begin{theorem}
%% A {\coreABS} program is
%% deadlock-free  if its  abstract semantics computed with an 
%% abstract class table saturated at the $n$-th approximant, for some $n$, 
%% is deadlock-free.
%% \end{theorem}
%
We remark that this technique 
%the fixpoint analysis of contracts discussed in this section 
is
as modular as the inference system: once the contracts of a module have been computed,
one may run the fixpoint analysis and attach the corresponding abstract values to 
the code. Analysing a program reduces to computing the lam of the main function. 

%
%\begin{itemize}
%\item  
%The lam  for the contract
%$
%\mcontract{\Abs{Math}.\name{fact\_g}}{\obj[\,](\unit)}{\Abs{Math}.\name{fact\_g}\; \obj[\,](\unit)\rightarrow \unit \seqpoint (\obj,\obj)}{\unit}.
%$  of \name{fact\_g} is computed as follows
%
%
%%
% 
%\item  
%\name{fact\_ag} has contract
%$\mcontract{\Abs{Math}.\name{fact\_ag}}{\obj[\,](\,)}{\Abs{Math}.\name{fact\_ag}\; \obj[\,](\unit)\rightarrow \unit \seqpoint (\obj,\obj)^\aa}{\unit}$.
%
%
%
%\item  
%\name{fact\_g} has contract 
%$\mcontract{\Abs{Math}.\name{fact\_nc}}{\obj[\,](\unit)}{\Abs{Math}.\name{fact\_ng}\; \objb[\,](\unit)\rightarrow \unit \seqpoint (\obj,\objb)}{\unit}$.
%
%
%
%\end{itemize}
%
%
%
%
%Figure~\ref{fig1}(i) illustrates the (single state)  lam of the 
%contracts $\C.\n~\objb[\f:Y] (\,)\rightarrow\objb'[\f:Y] {\seqpoint} (\obj,\objb)$,
%assuming that the contract of {\tt C.n} is $\pinull$. According to our
%inference system, this contract is associated to a {\coreABS} code 
%such as
%\begin{absexamplen}
%    Fut<D> x = b!n(); 
%    x.get; 
%\end{absexamplen}
%Figure~\ref{fig1}(ii)  displays the (single state)  lam of the 
%contract $\C.\n~\objb[\f:Y] (\,)\rightarrow\objb'[\f:Y] {\seqpoint} (\obj,\objb)^\aa$ (which corresponds to an \Abs{await} operation).
%
%The models in Figure~\ref{fig1}(i) and (ii) do not manifest 
%any problematic dependency between object names as long as 
%the values of $a$ and $b$ are different. 
%However, a critical pair appears if $\obj = \objb$ -- an 
%\emph{object-circularity}.
%

%%% Local Variables:
%%% mode: latex
%%% TeX-master: "SoSyM"
%%% End:

\section{The model-checking analysis of contracts}
\label{sec.mutanalysis}
%!TEX root = SoSyM.tex

The second analysis technique for the contracts of 
Section~\ref{sec.FJg-contracts}
consists  of computing contract models \emph{by expanding} their 
invocations. We therefore begin this section by introducing a semantics of contracts that 
is alternative to the one of Section~\ref{sec.contractanalysis}.

\subsection{\em Operational semantics of contracts.}

%$$\CP[~] \quad ::= \quad [~] \qquad | \qquad \CP[~] \sparop\cntc 
%\qquad | \qquad \CP[~]\seqmut\cntc  $$
%
%As usual $\CP[\cntc]$ is the contract where the hole of $\CP[~]$ is replaced by $\cntc$.
%
%The operational semantics uses two auxiliary functions on contracts \syn{} and \asyn{}, which given a contract extracts 
%its synchronous (resp. asynchronous) behaviour. They corresponds to the pair of lams produced by the fixed-point semantics of Section~\ref{sec.contractanalysis}.
%
%\begin{definition}[\syn{} and \asyn{}]
%\ \\
%$\begin{array}{rcl}
%  \syn(\cntc) &=& 
%  \begin{cases}
%    \pinull & \mbox{ if } \cntc = \C!\m~\frr(\bar{\frr})\rightarrow\frr'; \\
%    %\C!\m~\frr(\bar{\frr})\rightarrow\frr'\sparop(c,c')^{[\aa]} & \mbox{ if } \cntc =  \C!\m~\frr(\bar{\frr})\rightarrow\frr'\seqpoint(c,c')^{[\aa]}\\
%    \syn(\cntc_1) + \syn(\cntc_2) & \mbox{ if } \cntc = \cntc_1 + \cntc_2;\\
%    \syn(\cntc_1) \fatsemi \syn(\cntc_2) & \mbox{ if } \cntc = \cntc_1 \fatsemi \cntc_2;\\
%    \cntc &\mbox{ otherwise}
%  \end{cases}\\
%\\
%  \asyn(\cntc) &=& 
%  \begin{cases}
%    \cntc & \mbox{ if } \cntc = \C!\m~\frr(\bar{\frr})\rightarrow\frr'; \\
%    \asyn(\cntc_1) + \asyn(\cntc_2) & \mbox{ if } \cntc = \cntc_1 + \cntc_2;\\
%    \asyn(\cntc_1) \fatsemi \asyn(\cntc_2) & \mbox{ if } \cntc = \cntc_1 \fatsemi \cntc_2;\\
%    \pinull &\mbox{ otherwise}
%  \end{cases}
%\end{array}$
%\end{definition}

\begin{figure*}
contract pairs
\[
\cntcp \; ::= \quad \cntc %\quad | \quad \pairl{\cntc}{\cntc} 
\quad | \quad \pairl{\cntcp}{\cntcp}_c
\quad | \quad \cntcp \addition (c,c')^{[\aa]} \quad | \quad \cntcp + \cntcp 
\quad | \quad \cntcp \fatsemi \cntcp  %\quad | \quad \cntcp \parop \cntcp 
\quad | \quad \cntcp \rfloor \cntcp 
\]

\medskip

contract pairs contexts $(\sharp \in \{ +, \fatsemi, %\parop, 
\rfloor \})$
\[
\begin{array}{r@{\quad}l}
\CD[\,] \; ::= & [\, ] \quad | \quad \CD[\,] \addition (c',c'')^{[\aa]} 
\quad | \quad \CD[\,] \sharp \cntcp 
\quad | \quad \cntcp \sharp \CD[\,] 
\\
\\
\CP[\,]_c \; ::= & \pairl{\CD[\,]}{\cntcp}_c \quad | \quad  \pairl{\cntcp}{\CD[\,]}_c \quad | \quad\pairl{\CP[\,]_{c}}{\cntcp}_{c'}
\quad | \quad  \pairl{\cntcp}{\CP[\,]_{c}}_{c'}
\quad | \quad \CP[\,]_c \addition (c',c'')^{[\aa]} 
\quad | \quad \CP[\,]_c \sharp \cntcp 
\quad | \quad \cntcp \sharp \CP[\,]_c
%\\
%\\
%| &
%\CP[\,]_c \fatsemi \cntcp  
%\quad | \quad \cntcp \fatsemi \CP[\,]_c
%\quad | \quad 
% \CP[\,]_c \parop \cntcp 
%\quad | \quad \cntcp \parop  \CP[\,]_c
%\quad | \quad \CP[\,]_c \rfloor \cntcp 
%\quad | \quad \cntcp \rfloor \CP[\,]_c
\end{array}
\]

\medskip

reduction relation
\[
\begin{array}{c}
\mathrule{Red-SInvk}{
	\begin{array}{c}
	\C.\m = \frs(\bar{\frs}) \{ \pairl{\cntc}{\cntc'} \} \frs'
	\qquad \frr = [\cog{:}c, \bar{x} {:} \bar{\frr''}]
	\\ 
	\gr{\pairl{\cntc}{\cntc'}} \setminus \gr{\frs,\bar{\frs}}=\wt{z} 
	\\
	\wt{w} \mbox{ are fresh} 
    \qquad
	\pairl{\cntc}{\cntc'}\subst{\wt{w}}{\wt{z}} 
		\subst{\frr,\bar{\frr}}{\frs,\bar{\frs}} = \pairl{\cntc''}{\cntc'''}
    \end{array}
  }{
    \CP[\C.\m \;\frr(\bar{\frr})\rightarrow \frr']_c
    \lred{} 
   	\CP[\pairl{\cntc''}{\cntc'''}_c]_c
  }
\qquad
\mathrule{Red-RSInvk}{
	\begin{array}{c}
	\C.\m = \frs(\bar{\frs}) \{ \pairl{\cntc}{\cntc'} \} \frs'
	\qquad \frr = [\cog{:}c', \bar{x} {:} \bar{\frr''}] \qquad c \neq c'
	\\ 
	\gr{\pairl{\cntc}{\cntc'}} \setminus \gr{\frs,\bar{\frs}}=\wt{z} 
	\\
	\wt{w} \mbox{ are fresh} 
    \qquad
	\pairl{\cntc}{\cntc'}\subst{\wt{w}}{\wt{z}} 
		\subst{\frr,\bar{\frr}}{\frs,\bar{\frs}} = \pairl{\cntc''}{\cntc'''}
    \end{array}
  }{
    \CP[\C.\m \;\frr(\bar{\frr})\rightarrow \frr']_c
    \lred{} 
   	\CP[\pairl{\cntc''}{\cntc'''}_{c'} \addition (c,c')]_c
  }
\\
\\
\mathrule{Red-AInvk}{
    \begin{array}{c}
      \C.\m = \frs(\bar{\frs}) \{ \pairl{\cntc}{\cntc'} \} \frs'
      \qquad \frr = [\cog{:}c', \bar{x} {:} \bar{\frr''}]
      \\ 
      \gr{\pairl{\cntc}{\cntc'}} \setminus \gr{\frs,\bar{\frs}}=\wt{z} 
      \\
      \wt{w} \mbox{ are fresh} 
      \qquad 
      \pairl{\cntc}{\cntc'}\subst{\wt{w}}{\wt{z}} \subst{\frr,\bar{\frr}}{\frs,\bar{\frs}} = \pairl{\cntc''}{\cntc'''}
    \end{array}
  }{
    \CP[\C!\m \;\frr(\bar{\frr})\rightarrow \frr']_c
    \lred{} 
    \CP[\pairl{\cntc''}{\cntc'''}_{c'}]_c
  }
\qquad
\mathrule{Red-GAInvk}{
    \begin{array}{c}
      \C.\m = \frs(\bar{\frs}) \{ \pairl{\cntc}{\cntc'} \} \frs'
      \qquad \frr = [\cog{:}c', \bar{x} {:} \bar{\frr''}]
      \\ 
      \gr{\pairl{\cntc}{\cntc'}} \setminus \gr{\frs,\bar{\frs}}=\wt{z} 
      \\ 
      \wt{w} \mbox{ are fresh} 
      \qquad
      \pairl{\cntc}{\cntc'}\subst{\wt{w}}{\wt{z}} \subst{\frr,\bar{\frr}}{\frs,\bar{\frs}} = \pairl{\cntc''}{\cntc'''}
    \end{array}
  }{
   \CP[\C!\m \;\frr(\bar{\frr})\rightarrow \frr'\seqpoint (c'',c''')^{[\aa]}]_c    \lred{} 
    \CP[\pairl{\cntc''}{\cntc'''}_{c'}\addition (c'',c''')^{[\aa]}]_c
  }
%  \\
%  \\
%  \mathax{Red-Choice}{
%  }{
%    \CP[\cntc + \cntc']
%    \lred{} 
%    \CP[\cntc\seqmut \cntc']
%  }
  \\
  \\  
\end{array}
\]
\caption{Contract reduction rules.}
\label{fig:contred}
\end{figure*}

The operational semantics of a contract is defined as a reduction relation between 
terms that are \emph{contract pairs} $\cntcp$, whose syntax is defined in 
Figure~\ref{fig:contred}. These contract pairs highlight (in the operational semantics) 
the fact
that every contract actually represents two collections of relations on cog names: those corresponding 
to the \emph{present states} and those corresponding to
 \emph{future states}. We have discussed this dichotomy in Section~\ref{sec.FJg-contracts}.

In Figure~\ref{fig:contred} we have also defined the \emph{contract pair contexts}, noted 
$\CP[~]_c$, which are indexed contract pairs with a hole. The index 
$c$ indicates that the hole is immediately enclosed by $\pairl{\cdot}{\cdot}_{c}$.

The reduction relation that defines the evaluation of contract pairs 
$\pairl{\cntcp_1}{\cntcp_1'}_c \lred{} \pairl{\cntcp_2}{\cntcp_2'}_c$
is defined in 
Figure~\ref{fig:contred}. There are four reduction rules: \rulename{Red-SInvk}
for synchronous invocation on the same cog name of the caller (which is stored in 
the index of the enclosing pair), 
\rulename{Red-RSInvk} for synchronous invocations on different cog name, \rulename{Red-AInvk} for asynchronous invocations,
 and  \rulename{Red-GAInvk} for asynchronous invocations with synchronisations.
We observe that every evaluation step amounts to expanding method invocations 
by replacing free cog names in method contracts with fresh names and
\emph{without modifying the syntax tree of contract pairs}. 

To illustrate the operational semantics of contracts we discuss 
three
examples:
\begin{enumerate}
\item
Let 

{\small
${\tt F.f} \; = \; [\cog : c](x: [\cog : c'], y: [\cog : c'']) \{ $
\\
\qquad $\pairl{ 
({\tt F.g} \, [\cog : c'](x: [\cog : c'']) \rightarrow \unit).(c,c')  \; +  \; 
\pinull.(c',c'')}{\pinull}$
\\
\qquad $  \} \unit$}
\\
and 
\\
{\small
${\tt F.g} \; = \; \, [\cog : c](x: [\cog : c']) \{ \pairl{\pinull.(c,c') + \pinull.(c',c)}{\pinull} \} \unit$
}

Then

\smallskip
{\scriptsize
\[
\begin{array}{l}
\pairl{{\tt F!f}\, [\cog : c](x: [\cog : c'], y: [\cog : c'']) \rightarrow \unit 
}{\pinull}_{\rm start}
\\
\lred{} \; 
\pairl{\pairl{\pinull}{({\tt F.g} \, [\cog : c'](x: [\cog : c'']) \rightarrow \unit).(c,c')  \; +  \; 
\pinull.(c',c'')}_{c}}{\pinull}_{\rm start}
\\
\lred{} \; 
\pairl{\pairl{\pinull}{\pairl{\pinull.(c',c'') + \pinull.(c'',c')}{\pinull}_{c'}\addition (c,c')  \; +  \; 
\pinull.(c',c'')}_c}{\pinull}_{\rm start}
\end{array}
\]}

\smallskip

The contract pair in the final state \emph{does not contain method invocations}. This 
is because the above main function is not recursive.
Additionally, the evaluation of ${\tt F.f}\, [\cog : c](x: [\cog : c'], y: [\cog : c''])$ 
\emph{has not created names}. This 
is because names in the bodies of 
{\tt F.f} and {\tt F.g} are bound.

\item
Let
{\small
\[
{\tt F.h}\; = \; [\cog : c]( \unit ) \{ \pairl{\pinull}{({\tt F.h}[\cog : c'] (\unit) 
\rightarrow \unit) \parop \pinull.(c,c') } \} \unit
\]}
Then

\smallskip

{\scriptsize
$\begin{array}{l}
\pairl{{\tt F!h} \, [\cog : c]( \unit ) \rightarrow \unit}{\pinull}_{\rm start}
\\
\lred{} \, 
\pairl{\pairl{\pinull}{({\tt F.h}[\cog : c'] (\unit) 
\rightarrow \unit) \parop \pinull.(c,c') }_c}{\pinull}_{\rm start}
\\
\lred{} \,
\pairl{\pairl{\pinull}{\pairl{\pinull}{({\tt F.h}[\cog : c''] (\unit) 
\rightarrow \unit) \parop \pinull.(c',c'') }_{c'} \parop \pinull.(c,c') }_c}{\pinull}_{\rm start} 
\\
\lred{}
\cdots  
\end{array}$}

\smallskip

\noindent where, in this case, the contract pairs grow in the number of dependencies as the
evaluation progresses. This growth \emph{is due to the presence of a free name}
in the definition of {\tt F.h} that, as said, corresponds to generating a fresh
name at every recursive invocation.

\item Let
{\small
\[
{\tt F.l} \, = \, [\cog : c](~) \{ \pairl{ \pinull.(c,c') \; \fatsemi \; 
( \pinull.(c,c') \rfloor {\tt F!l} \, [\cog : c](~) \rightarrow \unit)}{\pinull} \} \unit
\]}
Then

\smallskip

{\scriptsize
\[
\begin{array}{l}
\pairl{{\tt F!l} \,[\cog : c](~) \rightarrow \unit}{\pinull}_{\rm start}
\\
\lred{} \;
\pairl{\pairl{\pinull}{\pinull.(c,c') \; \fatsemi \; 
( \pinull.(c,c') \rfloor {\tt F!l} \, [\cog : c](~) \rightarrow \unit)}_c}{\pinull}_{\rm start}
\\
\lred{} \;
\langle \langle {\pinull} \fatcomma \pinull.(c,c') \; \fatsemi \; 
( \pinull.(c,c') \rfloor 
 \\  
\qquad  \pairl{\pinull}{\pinull.(c',c'') \; \fatsemi
( \pinull.(c',c'') \rfloor {\tt F!l} \, [\cog : c''](~) \rightarrow \unit)}_{c'}
)\rangle_c \fatcomma {\pinull}\rangle_{\rm start}
\\
\lred{}  \cdots 
\end{array}
\]}

\smallskip

\noindent 
In this case, the contract pairs grow in the number of ``$\fatsemi$''-terms, 
which become
larger and larger as the evaluation progresses.
\end{enumerate}

It is clear that, in presence of recursion and of free cog names in method contracts,
a technique that analyses contracts by expanding method invocations is fated to fail 
because the system is  infinite state. However,
it is possible to stop the expansions at suitable points without losing any relevant
information about dependencies. In this section we highlight the technique we
have developed in~\cite{GL2014} that has been prototyped for {\coreABS} in
{\DFfABS}.

\subsection{\em Linear recursive contract class tables.}

Since contract pairs models may be infinite-state, instead of resorting to a saturation technique, which introduces  inaccuracies, we exploit a generalisation of permutation theory
that let us decide when stopping the evaluation with the guarantee that if no circular
dependency has been found up to that moment then it will not appear afterwards.
That stage corresponds to the \emph{order} of an associated permutation.
It turns out that this technique is suited for so-called \emph{linear recursive} 
contract class
tables.

\begin{definition}
\label{def.linearrecursion}
A contract class table is \emph{linear recursive} if
(mutual) recursive invocations in bodies of methods have \emph{at most one recursive invocation}.
\end{definition}
It is worth to observe that a {\coreABS} program may be linear recursive while the 
corresponding contract class table is not.
For example,
consider the following method {\tt foo} of class {\tt Foo} that prints integers by
invoking a printer service and awaits for the termination of the printer task and
for its own termination: 
\begin{absexamplen}
Void foo(Int n, Print x){
    Fut<Void> u, v ;
    if (n == 0) return() ; 
    else { u = this!foo(n-1, x) ;
           v = x!print(n) ;
           await v? ;
           await u? ;
           return() ;
     }
}
\end{absexamplen}
While {\tt foo} has only one recursive invocation, its contract written in Figure~\ref{fig.foocontract} is not.
\begin{figure*}[t]
{\small
\[
\begin{array}{rl}
{\tt Foo.foo} = &  [\cog : c] (\unit, x:[\cog : c']) \{ 
\; \pairl{\pinull \; + 
\; \Bigl( \bigl( \cntc .(c,c')^{\aa} 
\rfloor \cntc' \bigr)
\fatsemi \cntc'.(c,c)^{\aa} \Bigr)}{\pinull}
\; \} \rightarrow \unit 
\\
{\rm where} \; \; \cntc = & {\tt Print!print} \, [\cog : c'](\unit) \rightarrow \unit
\\
\cntc' = & {\tt Foo!foo} \, [\cog : c] (\unit, x:[\cog : c']) \rightarrow \unit
\end{array}
\]
}
\caption{\label{fig.foocontract} Method contract of {\tt Foo.foo}.}
\end{figure*}
That is, the contract of {\tt Foo.foo} displays \emph{two} recursive invocations because,
in correspondence of the {\tt await v?} instruction, we need to collect all the effects
produced by the previous unsynchronised asynchronous invocations (see rule 
\rulename{T-Await})~\footnote{
It is possible to define sufficient conditions on {\coreABS} programs that entail 
linear recursive contract class tables. For example, two such conditions are
that, in (mutual) recursive methods, recursive invocations are either (i) synchronous  
or (ii) asynchronous followed by a \Abs{get} or \Abs{await} synchronisation on the future value, without any other \Abs{get} or \Abs{await} synchronisation or synchronous 
invocation in between.}.

\subsection{\em Mutations and flashbacks}

The idea of our technique is to consider the patterns of cog names in the formal parameters 
and the (at most unique) recursive invocation of
method contracts and to study the changes. For example, the above method contracts of
{\tt F.h} and {\tt F.l} transform the pattern of cog names in the formal parameters, 
written $(c)$ into the pattern of recursive invocation $(c')$. We write this
transformation as
\[
(c) \; \leadsto \; (c') \; .
\]
In general, the transformations we consider are called \emph{mutations}.

\begin{definition}
A \emph{mutation} is a transformation of tuples of (cog) names, written
\[
(x_1, \cdots , x_n) \; \leadsto \; (x_1', \cdots , x_n') 
\]
where $x_1, \cdots , x_n$ are pairwise different and $x_i'$ \emph{ may not occur in} $\{ x_1, \cdots , x_n \}$. 

Applying a mutation $(x_1, \cdots , x_n) \; \leadsto \; (x_1', \cdots , x_n')$
to a tuple of cog names (that may contain duplications) 
$(c_1, \cdots , c_n)$ gives a tuple $(c_1', \cdots , c_n')$ where
\begin{itemize}
\item[--] 
$c_i' = c_j$ if $x_i' = x_j$;
\item[--]
$c_i'$ is a fresh name if $x_i' \not \in \{ x_1, \cdots , x_n \}$;
\item
$c_i' = c_j'$ if they are both fresh and $x_i' = x_j'$.
\end{itemize}
We write $(c_1, \cdots , c_n) \redmut (c_1', \cdots , c_n')$ when 
$(c_1', \cdots , c_n')$ is obtained by applying a mutation (which is kept implicit)
to $(c_1, \cdots , c_n)$.
\end{definition}

For example, given the mutation 
\begin{eqnarray}
\label{mut.uno}
(x,y,z,u) \leadsto (y,x,z',z')
\end{eqnarray}
we obtain the following sequence of tuples:
%\[
\begin{eqnarray}
\label{mut.red}
(c,c',c'',c''') \; \redmut & (c',c,c_1,c_1) 
\\
\redmut & (c,c',c_2,c_2) \nonumber
\\
\redmut & (c',c,c_3,c_3) \nonumber
\\
\redmut & \cdots \nonumber
\end{eqnarray}
%\]
When a mutation $(x_1, \cdots , x_n ) \leadsto (x_1', \cdots , x_n')$ is such that
$\{ \x_1,  \cdots , x_n\} = \{ x_1', \cdots , x_n'\}$ then 
the mutation is a \emph{permutation}~\cite{Comtet}. In this case, 
the permutation theory guarantees that, by repeatedly applying the same permutation 
to a tuple of names, at some point, one obtains the initial tuple. 
This point, which is known as the \emph{order of the permutation}, allows one to define the following algorithm for linear recursive method contracts whose mutation is a permutation:
\begin{enumerate}
\item
compute the order of the permutation associated to the recursive method contract and
\item
correspondingly unfold the term to evaluate.
\end{enumerate}
It is clear that, when method contract bodies have no free cog names,
 further unfoldings of the recursive method contract cannot add new dependencies. Therefore the evaluation, as far as dependencies are concerned, may stop.

When a mutation is not a permutation, as in the example above, it is not possible
to get again an old tuple by applying the mutation because of the presence of fresh names.
However, it is possible to demonstrate that a tuple is equal to an old one \emph{up-to} a
suitable map, called flashback.

\begin{definition}
A tuple of cog names $(c_1, \cdots , c_n)$ is equivalent to $(c_1', \cdots, c_n')$, written
$(c_1, \cdots , c_n) \approx (c_1', \cdots,$ $ c_n')$, if there
is an injection $\imath$ called \emph{flashback} such that:
\begin{enumerate}
\item
$(c_1, \cdots, c_n) = (\imath( c_1'), \cdots, \imath(c_n'))$

\item
$\imath$ is the identity on ``old names'', that is,  if $c_i' \in  
\{ c_1, \cdots, c_n \}$ then $\imath(c_i') = c_i'$.
\end{enumerate}
\end{definition}
For example, in the sequence of transitions~(\ref{mut.red}),
%\[
%(c,c',c'',c''') \; \redmut \; (c',c,c_1,c_1)
%\; \redmut \; (c,c',c_2,c_2)
%\; \redmut \;  (c',c,c_3,c_3)
%\]
there is a flashback from the last tuple to the second one and there is  
(and there will be, by applying the mutation (\ref{mut.uno})) no tuple that is equivalent to 
the initial tuple.
  
It is possible to generalize the result about permutation orders:

\begin{theorem}[\cite{GL2014}]
\label{thm.mainthm}
Let $(x_1, \cdots, x_n) \leadsto (x_1', \cdots, x_n')$ be a mutation
and let 
{\small \[
(c_1, \cdots , c_n) \redmut (c_{n+1}, \cdots , c_{2n})
\redmut (c_{2n+1}, \cdots , c_{3n}) \redmut \cdots 
\]}
be a sequence of applications of the mutation. Then there are  $0 \leq h < k$ such that 
\[
(c_{hn+1}, \cdots , c_{(h+1)n}) \approx (c_{kn+1}, \cdots , c_{(k+1)n})
\]
The value $k$ is called \emph{order of the mutation}.
\end{theorem}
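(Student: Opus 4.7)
My plan is to prove Theorem~\ref{thm.mainthm} by a pigeonhole argument on a finite, deterministically-evolving abstraction of the tuples $T_m = (c_{mn+1}, \ldots, c_{(m+1)n})$. The central observation is that the whole evolution is governed by rigid combinatorial data extracted once and for all from the mutation, so each $T_m$ can be summarised by a finite \emph{profile} in such a way that profile-coincidence at two steps implies flashback-equivalence of the corresponding tuples.

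First I would decompose the mutation $(x_1, \ldots, x_n) \leadsto (x_1', \ldots, x_n')$ into (i) a partial function $\phi$ on $\{1,\ldots,n\}$ with $\phi(i) = j$ whenever $x_i' = x_j$ (and undefined otherwise), and (ii) an equivalence relation $\sim_\mu$ on the positions where $\phi$ is undefined, recording which fresh positions share the same new name. Unfolding the dynamics gives, for each position $i$ at step $m$, exactly one of two behaviours: either the chain $i, \phi(i), \phi^2(i), \ldots$ stays defined, in which case $T_m[i] = T_0[\phi^m(i)]$; or it first becomes undefined at some step $\ell_i \leq n$, in which case $T_m[i]$ is a fresh name born at step $m - \ell_i + 1$ at birth-position $\phi^{\ell_i - 1}(i)$. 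The \emph{profile} of $T_m$ is then the tuple whose $i$-th entry is either ``inherited, tag $\phi^m(i)$'' or ``fresh, length $\ell_i$, birth class $[\phi^{\ell_i - 1}(i)]_{\sim_\mu}$''. Because $\phi$ acts on a finite set, every $\phi$-orbit enters a cycle within at most $n$ steps; hence for $m \geq n$ the fresh entries stop depending on $m$ and the inherited entries are periodic in $m$ with period dividing the \lcm{} of the cycle lengths of $\phi$.

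Since profiles live in a finite set, pigeonhole produces $n \leq h < k$ with identical profiles at steps $h$ and $k$; using the eventual periodicity I can moreover arrange $k - h \geq n$, and I take this $k$ as the order of the mutation cited in the statement. From profile agreement I then construct the flashback $\imath$: on positions of inherited type, profile equality forces $\phi^h(i) = \phi^k(i)$, hence $T_h[i] = T_k[i]$, and $\imath$ is the identity on those names; on positions of fresh type, $\imath$ sends the name at position $i$ in $T_k$ to the name at position $i$ in $T_h$, and this is well-defined because two fresh positions have equal values in $T_k$ exactly when their chain lengths and birth classes coincide, which is precisely the condition under which they have equal values in $T_h$. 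The main obstacle, and the delicate point of the verification, is the ``identity on old names'' clause of flashback: I must check that any name of $T_k$ that also occurs in $T_h$ is already fixed by this $\imath$. Inherited names cause no difficulty since they coincide position-wise with those of $T_h$ and $\imath$ fixes them by construction; for fresh names the bound $k - h \geq n$ is exactly what is needed, since then every fresh name of $T_k$ has birth step $k - \ell_i + 1 \geq k - n + 1 > h$ and is therefore genuinely absent from $T_h$, rendering the ``identity on old'' clause vacuous on that side.
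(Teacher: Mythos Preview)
The paper does not contain a proof of Theorem~\ref{thm.mainthm}: the statement is imported from~\cite{GL2014} and only used as a black box here. So there is no in-paper argument to compare your proposal against.

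That said, your proposal stands on its own as a correct proof. The key points all check out: the decomposition into the partial function $\phi$ and the birth equivalence $\sim_\mu$ is exactly what the operational definition of mutation encodes; the bound $\ell_i \le n$ follows because the pre-periodic tail of any $\phi$-orbit has length at most $n$; the profile is finitely-valued and eventually periodic in $m$ for $m \ge n$ since the fresh part becomes constant and the inherited tags cycle; and the construction of $\imath$ is well-defined and injective because, for inherited positions, profile equality gives $\phi^h(i)=\phi^k(i)$ and hence $T_h[i]=T_k[i]$ outright, while for fresh positions equality of names is characterised precisely by equality of $(\ell_i,[\phi^{\ell_i-1}(i)]_{\sim_\mu})$, independently of the step. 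Your treatment of the ``identity on old names'' clause is also sound: inherited names of $T_k$ coincide with those of $T_h$ position-wise and are fixed by $\imath$, while the choice $k-h\ge n$ forces every fresh name of $T_k$ to have birth step strictly greater than $h$, so it cannot occur in $T_h$ at all (neither as an inherited name, which lives in $T_0$, nor as a fresh name, which is born at a step $\le h$). Arranging $k-h\ge n$ is immediate from the eventual periodicity of the profile: if $p$ is the period for $m\ge n$, take $h=n$ and $k=n+\lceil n/p\rceil\cdot p$.

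Two very small expository remarks. First, $\phi$ need not be injective, so ``cycle lengths of $\phi$'' should be read as the periods of the eventually-periodic orbits; your argument only uses eventual periodicity, so this is purely a matter of wording. Second, when you write that the fresh entries ``stop depending on $m$'' for $m\ge n$, it is worth making explicit that this is because $\ell_i\le n$ forces every position with finite $\ell_i$ to already be in its fresh regime by step $n$; you have this implicitly, but stating it once removes any ambiguity about the profile at small $m$.
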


For example, the order of the mutation (\ref{mut.uno}) is 3.

\subsection{Evaluation of the main contract pair}

The generalisation of permutation theory in Theorem~\ref{thm.mainthm} allows us 
to define the notion of \emph{order of the contract of the main function} in a 
linear recursive contract class table. This order is the length of the evaluation of the 
contract obtained
\begin{enumerate}
\item 
by unfolding every recursive function as many times as  
\emph{twice its ordering}~\footnote{The interested reader may find in~\cite{GL2014}
the technical reason for unfolding recursive methods as many times as twice the length of
the order of the corresponding mutation.};
\item
by iteratively applying 1 to every invocation of recursive function that has been 
produced during the unfolding.
\end{enumerate}

In order to state the theorem of the correctness of our analysis technique, we
need to define the lam of a contract pair. The following functions
will do the job.

Let  $\sem{\cdot}$ be a map taking a contract pair and returning a pair of lams
that is defined by
\[
\begin{array}{rl}
\sem{\C.\m \;\frr(\bar{\frr})\rightarrow \frr'} \, = & \pairl{\pinull}{\pinull}
\\
\sem{\C!\m \;\frr(\bar{\frr})\rightarrow \frr'} \, = & \pairl{\pinull}{\pinull}
\\
\sem{\C!\m \;\frr(\bar{\frr})\rightarrow \frr'. (c,c')^{[\aa]}} \, = & 
\pairl{\slam{(c,c')^{[\aa]}}}{\pinull}
\\
\sem{\pairl{\cntcp}{\cntcp'}_c} \; = & \pairl{\sem{\cntcp}}{\sem{\cntcp'}}
\end{array}
\]
and it is homomorphic with respect to the operations $+$, $\fatsemi$, %$\parallel$,
$\rfloor$ (whose definition on pairs of lams is in Figure~\ref{fig:lamOp}).
Let $\mathbbm{t}$ be terms of the following syntax
\[
\mathbbm{t} \; ::= \quad \lafsa \quad | \quad \pairl{\mathbbm{t}}{\mathbbm{t}}
\]
and let $(\lafsa)^\flat = \lafsa$ and $(\pairl{\mathbbm{t}}{\mathbbm{t}'})^\flat = 
(\mathbbm{t})^\flat , \, (\mathbbm{t}')^\flat$.

\begin{theorem}[\cite{GL2014}]
\label{thm.invariance}
Let $\pairl{\cntcp_1}{\cntcp_1'}$ be a main function contract and let 
{\footnotesize\[
\begin{array}{@{\!\!}l}
\pairl{\cntcp_1}{\cntcp_1'}_{\rm start} \lred{} \pairl{\cntcp_2}{\cntcp_2'}_{\rm start}
\lred{} \pairl{\cntcp_3}{\cntcp_3'}_{\rm start} \lred{} \cdots
\end{array}
\]}
be its evaluation.
Then there is a $k$, which is the order of $\pairl{\cntcp_1}{\cntcp_1'}_{\rm start}$
 such that if a circularity occurs in 
$(\sem{\pairl{\cntcp_{k+h}}{\cntcp_{k+h}'}_{\rm start}})^\flat$, for every $h$, then it also occurs in
$(\sem{\pairl{\cntcp_{k}}{\cntcp_{k}'}_{\rm start}})^\flat$.
\end{theorem}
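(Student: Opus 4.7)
The plan is to prove the theorem by exhibiting, for each $h > 0$, a flashback $\imath_h$ that sends the dependencies produced at step $k+h$ back into those already present at step $k$. The principal tool is Theorem~\ref{thm.mainthm}, applied to each recursive method contract whose invocation is triggered during the evaluation.

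First, I would define the order $k$ of the starting contract pair $\pairl{\cntcp_1}{\cntcp_1'}_{\rm start}$ precisely. For each (mutually) recursive method contract reachable from the main contract, the transformation of cog names between the method's formal header and its unique recursive invocation determines a mutation, whose order is given by Theorem~\ref{thm.mainthm}. The value $k$ is then obtained by summing the reduction steps required to unfold each recursive invocation reached during the evaluation up to twice the order of its associated mutation, exactly as described informally before the theorem statement. Linearity of the contract class table (Definition~\ref{def.linearrecursion}) is essential here: each recursive call chain has linear branching, so the local orders compose in a well-founded way.

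Second, I would establish a structural invariance lemma by induction on $h$: at each step $k+h$, the syntactic shape of $\pairl{\cntcp_{k+h}}{\cntcp_{k+h}'}_{\rm start}$ coincides, up to a substitution of cog names, with the shape at some earlier step $j \le k$. This holds because the reduction rules \rulename{Red-SInvk}, \rulename{Red-RSInvk}, \rulename{Red-AInvk} and \rulename{Red-GAInvk} only substitute the header names and fresh cog names for the free cog names of a method body; they never alter the body's syntactic structure. Therefore, past step $k$, any further unfolding of a recursive method produces a tuple of cog names for its formal parameters that, by Theorem~\ref{thm.mainthm}, is flashback-equivalent to a tuple already encountered by step $k$. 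Extending the resulting injection to act as the identity on all cog names outside this tuple yields the desired flashback $\imath_h$.

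Third, I would transfer this structural invariance to the dependency level: the lam obtained from $\sem{\pairl{\cntcp_{k+h}}{\cntcp_{k+h}'}_{\rm start}}$ maps, relation by relation, under $\imath_h$ into the lam obtained from $\sem{\pairl{\cntcp_k}{\cntcp_k'}_{\rm start}}$. Since $\imath_h$ is an injection, a pair $(c,c) \in L^{\get}$ for some $L \in \sem{\pairl{\cntcp_{k+h}}{\cntcp_{k+h}'}_{\rm start}}$ is mapped to a pair $(\imath_h(c),\imath_h(c))$, which again witnesses a circularity in the corresponding relation of $\sem{\pairl{\cntcp_k}{\cntcp_k'}_{\rm start}}$; here I would separately verify that $\get$-closure commutes with injective renaming. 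Flattening with $(\cdot)^\flat$ preserves this property, yielding the statement of the theorem. The main obstacle will be the interplay between concurrently unfolding recursive contracts across the parallel operator $\rfloor$: different subtrees may be unfolding different recursive methods with distinct mutation orders, and when one recursive method asynchronously invokes another via \rulename{Red-AInvk} the cog-name dependencies cross subtree boundaries. Ensuring that the local flashbacks supplied by Theorem~\ref{thm.mainthm} compose into a single global $\imath_h$, without collisions between fresh names spawned in one subtree and names reused in another, is the delicate part of the proof.
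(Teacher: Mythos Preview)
The paper does not prove Theorem~\ref{thm.invariance}: it is stated with the attribution \cite{GL2014} and no proof appears anywhere in the present article (Appendix~\ref{sec:mutanalysis-proof} proves the distinct Theorem~\ref{thm:mutanalysis-correct}, which \emph{uses} Theorem~\ref{thm.invariance} but does not establish it). So there is no in-paper proof to compare your proposal against; the authors explicitly defer the argument to \cite{GL2014}, including the technical justification for unfolding to \emph{twice} the mutation order (see the footnote preceding the theorem).

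Your sketch captures the right high-level picture---mutations, flashbacks from Theorem~\ref{thm.mainthm}, and preservation of circularities under injective renamings---and you correctly flag the hard part, namely gluing the per-method flashbacks into a single global injection across parallel subterms. Two points are imprecise, though. First, your ``structural invariance lemma'' as stated is too strong: the syntactic shape of $\pairl{\cntcp_{k+h}}{\cntcp_{k+h}'}_{\rm start}$ does \emph{not} coincide with that of an earlier step, because every application of \rulename{Red-*Invk} strictly grows the tree by inserting a new $\pairl{\cdot}{\cdot}_{c}$ node; what repeats up to flashback is the tuple of cog names fed to a recursive invocation, and hence the \emph{set of dependency pairs} generated by further unfoldings, not the term itself. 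Second, you give no account of why the bound is $2\times$ the mutation order rather than the order itself; this is exactly the point the paper declines to explain and delegates to \cite{GL2014}, so a self-contained proof would need to supply it (roughly: one full cycle to reach the periodic regime, one more to witness the flashback against already-generated dependencies).
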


\begin{example}
The reduction of the contract of method \name{Math.fact\_nc} is as in Figure~\ref{fig:ex.red.fact}.
The theory of mutations provide us with an order for this evaluation.
In particular, the mutation associated to \name{Math.fact\_nc} is $c \leadsto c'$, with order 1, such that after one step we can encounter a flashback to a previous state of the mutation.
Therefore, we need to reduce our contract for a number of steps corresponding to twice the ordering of  \name{Math.fact\_nc}:
after two steps we find the flashback associating the last generated pair $(c',c'')$ to the one produced in the previous step $(c,c')$, by mapping $c'$ to $c$ and $c''$ to $c'$.

The flattening and the evaluation of the resulting contract are shown in Figure~\ref{fig:ex.eval.mut} and produce the pair of lams $ \pairl{ \slam{(c',c'') ,(c,c')}} {\pinull}$ which does not present any deadlock. Thus, differently from the fixpoint analysis for the same example, with this operational analysis we get a precise answer instead of a false positive. (See Figure~\ref{fig:satEx} and Section~\ref{sec.deadlock_analysis_lams}.)

\begin{figure*} 
\centering
  %{\scriptsize
    $\begin{array}{l} \pairl{{\tt
          Math!fact\_nc} \, [\cog : c]( \unit ) \rightarrow
        \unit}{\pinull}_{\rm start}
      \\
      \lred{} \quad
      \pairl{ \pairl{\pinull} {\pinull + {\tt Math!fact\_nc}[\cog :
          c'] (\unit)\rightarrow \unit \seqpoint(c,c') }_c}
      {\pinull}_{\rm start}
      \\
      \lred{} \quad
      \pairl{ \pairl{\pinull} {\pinull + \pairl{\pinull} {\pinull +
            {\tt Math!fact\_nc}[\cog : c''] (\unit)\rightarrow \unit
            \seqpoint(c',c'') }_{c'} \seqpoint(c,c') }_c}
      {\pinull}_{\rm start}
      % \\
      % \lred{}\quad
      % \pairl{ \pairl{\pinull} {\pinull + \pairl{\pinull} {\pinull +
      %       \pairl{\pinull} {\pinull + {\tt Math!fact\_nc}[\cog :
      %         c'''] (\unit)\rightarrow \unit \seqpoint(c'',c''')
      %       }_{c''} \seqpoint(c',c'') }_{c'} \seqpoint(c,c') }_c}
      % {\pinull}_{\rm start}

      % \\
      % \lred{}\quad
      % \pairl{ \pairl{\pinull} {\pinull + \pairl{\pinull} {\pinull +
      %       \pairl{\pinull} {\pinull + \pairl{\pinull} {\pinull + {\tt Math!fact\_nc}[\cog :
      %         c''''] (\unit)\rightarrow \unit \seqpoint(c''',c'''')
      %       }_{c'''}  \seqpoint(c'',c''')
      %       }_{c''} \seqpoint(c',c'') }_{c'} \seqpoint(c,c') }_c}
      % {\pinull}_{\rm start}
    \end{array}$
%}
  \caption{Reduction for contract of method \name{Math.fact\_nc}.}
\label{fig:ex.red.fact}
\end{figure*}

\begin{figure*}
  \centering
$\begin{array}{l}
 (\sem{\pairl{ \pairl{\pinull} {\pinull + \pairl{\pinull} {\pinull +
            {\tt Math!fact\_nc}[\cog : c''] (\unit)\rightarrow \unit
            \seqpoint(c',c'') }_{c'} \seqpoint(c,c') }_c}
      {\pinull}_{\rm start}})^\flat\\
=
 (\pairl{ \pairl{\pinull} {\pinull + \pairl{\pinull} {\pinull +
            \pairl{\slam{(c',c'')}}{\pinull} } \seqpoint(c,c') }}
      {\pinull})^\flat\\
=
 \pairl{\pinull +  \pinull + \slam{(c',c'')}\addition (c,c')} {\pinull}\\
=
 \pairl{ \slam{(c',c'') ,(c,c')}} {\pinull}\\
  \end{array}$
% $\begin{array}{l}
%  (\sem{\pairl{ \pairl{\pinull} {\pinull + \pairl{\pinull} {\pinull +
%             \pairl{\pinull} {\pinull + \pairl{\pinull} {\pinull + {\tt Math!fact\_nc}[\cog :
%               c''''] (\unit)\rightarrow \unit \seqpoint(c''',c'''')
%             }_{c'''}  \seqpoint(c'',c''')
%             }_{c''} \seqpoint(c',c'') }_{c'} \seqpoint(c,c') }_c}
%       {\pinull}_{\rm start}})^\flat\\
% =
%  (\pairl{ \pairl{\pinull} {\pinull + \pairl{\pinull} {\pinull +
%             \pairl{\pinull} {\pinull + \pairl{\pinull} {\pinull + \pairl{\slam{(c''',c'''')}}{\pinull}
%             }  \seqpoint(c'',c''')
%             }\seqpoint(c',c'') } \seqpoint(c,c') }}
%       {\pinull})^\flat\\
% =
%  \pairl{  {\pinull + {\pinull +
%             {\pinull +  {\pinull + {\slam{(c''',c'''')}}
%             }   \addition (c'',c''')
%             } \addition (c',c'') }  \addition (c,c') }}
%       {\pinull}\\
% =
%  \pairl{ \slam{(c''',c''''),(c'',c'''),(c',c'') ,(c,c')}} {\pinull}\\
%   \end{array}$
  \caption{Flattening and evaluation of resulting contract of method \name{Math.fact\_nc}.}
\label{fig:ex.eval.mut}
\end{figure*}

\end{example}

The correctness of the technique based on mutations is demonstrated in 
Appendix~\ref{sec:mutanalysis-proof}.

%%% Local Variables:
%%% mode: latex
%%% TeX-master: "SoSyM"
%%% End:

 \section{The {\DFfABS} tool and its application to the case study} \label{sec:sdatool}
 %!TEX root = SoSyM.tex

{\coreABS} (actually full {\ABS}~\cite{johnsen10fmco})
comes with a suite~\cite{WongAMPSS12} that offers
 a compilation framework,
 a set of tools to analyse the code,
 an Eclipse IDE plugin and Emacs mode for the language.
We extended this suite with an implementation of our deadlock analysis
framework (at the time of writing the suite has only the fixpoint analyser, the 
full framework is available at \url{http://df4abs.nws.cs.unibo.it}).
The {\DFfABS} tool is built upon the abstract syntax tree (AST) of the 
{\coreABS} type checker, which allows us to exploit the type information stored in 
every node of the tree. This simplifies the implementation of several contract inference rules.
The are four main modules that comprise {\DFfABS}:

\medskip

(1) \emph{Contract and Constraint Generation}. This is performed in three steps:
(i) the tool first parses the classes of the program and generates a map 
between interfaces and classes, required for the contract inference of method calls;
(ii) then it parses again all classes of the program to generate the initial environment $\Gamma$ that maps methods to the corresponding method signatures;
 and (iii) it finally parses the AST and, at each node, it applies 
 the contract inference 
rules in Figures~\ref{fig:inf:exp},~\ref{fig:inf:stmt}, and
 \ref{fig.meth-class}.

\medskip

(2) \emph{Constraint Solving} is done by a generic semi-unification solver implemented in Java, following the algorithm defined in~\cite{Henglein:1993}.
%The implementation of that solver is available at
%\url{http://proton.inrialpes.fr/~mlienhar/semi-unification}.
When the solver terminates (and no error is found), it produces a substitution that 
satisfies the input constraints.
Applying this substitution to the generated contracts produces the abstract class table and the contract of the main function of the program.
%
%is constraint-based and works in three steps:
%\begin{enumerate}
%\item
%given a program, it generates \emph{method contract skeletons} and 
%\emph{method interface skeletons} for each method and interface of the program, plus 
%a set of constraints that the contracts must satisfy;
%\item
%the constraints are then solved by a dedicated solver, that generate a substitution validating them;
%\item
%the substitution is applied to the contract skeletons, generating the actual contracts of the methods.
%\end{enumerate}
%
%The

\medskip

(3)
\emph{Fixpoint Analysis} uses dynamic structures to store lams of every method
contract (because lams become larger and larger as the
analysis progresses). At each iteration of the analysis, a number of fresh cog names
is created and the states are updated according to what is prescribed by the
contract.
%A basic operation of the analyser is the renaming, which is used when computing every
%approximant. 
%
At each iteration, the tool checks whether a fixpoint has been reached.
Saturation starts when the number of iterations reaches a maximum value
 (that may be customised by the user).
In this case, since the precision of the algorithm degrades, the 
tool signals that the answer may be imprecise.
To detect whether a relation in the fixpoint lam contains a circular 
dependency, we run Tarjan algorithm~\cite{Tarjan72} for connected components of graphs and we stop
the algorithm when a circularity is found.

\medskip

(4)
\emph{Abstract model checking}
algorithm for 
deciding the circularity-freedom problem in linear recursive contract class tables
 performs the following steps.
(\emph{i}) \emph{Find (linear) recursive methods}: by parsing the 
contract class table we create
a graph where nodes are function names and, for every invocation of {\tt D.n} in the 
body of {\tt C.m},  there is an edge from {\tt C.m} to {\tt D.n}. Then a standard depth first search
associates to every node a path of (mutual) recursive invocations (the paths starting and ending at 
that node, if any). The contract class table is linear recursive if every node has at most one associated path.
(\emph{ii}) \emph{Computation of the orders}: given the list of recursive 
methods, we compute the corresponding mutations.
(\emph{iii}) \emph{Evaluation process}: the contract pair corresponding to the 
main function is evaluated till every recursive function invocation has been 
unfolded up-to twice the corresponding order. 
(\emph{iv}) \emph{Detection of circularities}: this is performed with the
same algorithm of the fixpoint analysis.

\medskip

As regards the computational complexity, the contract inference system is polynomial time with respect to the length of the program in most of the 
cases~\cite{Henglein:1993}. The fixpoint analysis is 
is exponential in the number of cog names in a contract class table (because lams
may double the size at every iteration). However, this exponential effect actually
bites in practice.
%
% in a program. This is the reason
%why, in the above table, increasing the number of iterations (from {\tt 2} to {\tt 3}) 
%causes the runtime to increase by a factor of 6. We
%remark that in most cases, the precision of the SDA tool does not
%enhance at iterations higher than {\tt 1}.
%
%
%linear up-to the saturation 
%point (with respect to the length of the program) 
%and it has polynomial cost for reaching a saturated fixpoint.
The abstract model checking is linear with respect to the length of the program
as far as steps (i) and (ii) are concerned. Step (iv) is linear 
with respect to the size of the final lam. The critical step is (iii), which may 
be exponential with respect to the length of the program. Below, there is an
overestimation of the computational complexity. 
 Let 
\begin{description}
\item[$\ordermut{{\it max}}$] be
the largest order of a recursive method contract (without loss of generality,
we assume there is no mutual recursion). 
\item[$m_{\it max}$] be the maximal number of function invocations
in a body or in the contract of the
main function. 
\end{description}
An upper bound to the length of the evaluation till the saturated state is
$$
\sum_{0 \leq i \leq \ell}(2 \times \ordermut{{\it max}} \times m_{\it max})^i ,
$$
where $\ell$ is the number of methods in the program.
Let $k_{\it max}$ be the maximal number of dependency pairs in a body. Then
the size of the saturated state is $O(k_{\it max} \times 
{(\ordermut{{\it max}} \times m_{\it max})^\ell})$, which is also the computational
complexity of the abstract model checking.

\subsection{\em Assessments}

We tested {\DFfABS} on a number of 
medium-size programs written for
benchmarking purposes by {\coreABS} programmers and on 
an industrial case study based on the Fredhopper Access Server (FAS)%
\footnote{Actually, the FAS module has been written in {\ABS}~\cite{FAS}, %
 and so, we had to adapt it in order to conform with {\coreABS} restrictions (see Section~\ref{sec.restrictions}). %
This adaptation just consisted of purely syntactic changes, and only took half-day work (see also the comments in~\cite{GL2013a}).}
 developed by SDL Fredhopper~\cite{FAS}, which provides search and merchandising IT services to e-Commerce companies.
The (leftmost two columns of the) table in Figure~\ref{fig:assessments} 
reports the experiments: for every program we display the
number of lines, whether the analysis has reported a deadlock ({\tt D}) or not
($\checkmark$), the time in seconds required for the analysis.
Concerning time, we only report the time of the analysis of {\DFfABS} (and not 
the one taken by the inference) 
when they run on a QuadCore 2.4GHz and Gentoo (Kernel 3.4.9).

\begin{figure*}
{\small
\begin{center}
\begin{tabular}{||l||c|@{\quad}c@{\quad}|@{\quad}c@{\quad}|@{\quad}c@{\quad}||}
\hline
\quad program \quad & lines &  \begin{tabular}{c} {\DFfABS}/fixpoint \\ result \ time 
\end{tabular} 
& \begin{tabular}{c} {\DFfABS}/model-check \\ result \ time \end{tabular} & \begin{tabular}{c} {\tt DECO} \\
result \ time \end{tabular}
\\ \hline
{\tt PingPong} & 61 &  \checkmark \quad 0.311 & \checkmark \quad 0.046 & \checkmark \quad 1.30 
\\ \hline
{\tt MultiPingPong} & 88  & {\tt D} \quad 0.209 & {\tt D} \quad 0.109 & {\tt D} \quad 1.43 
\\ \hline
{\tt BoundedBuffer}  \quad &  103  & \checkmark \quad 0.126  & \checkmark \quad 0.353 & \checkmark \quad 1.26
\\ \hline
{\tt PeerToPeer}    &   185 & \checkmark \quad 0.320 & \checkmark \quad 6.070 & \checkmark \quad 1.63
\\ \hline \hline
{\tt FAS Module} & 2645 & \checkmark \quad 31.88 & \checkmark \quad 39.78 & \checkmark \quad 4.38
\\
\hline
\end{tabular}
\end{center}
}
\caption{\label{fig:assessments} 
Assessments of {\DFfABS}.}
\end{figure*}

The rightmost column of the table in Figure~\ref{fig:assessments} reports the results of another
tool that has also been developed for the deadlock analysis of {\coreABS} 
programs: {\tt DECO}~\cite{Antonio2013}. 
This technique integrates a point-to analysis with 
an analysis returning (an over-approximation of) program points that may be running 
in parallel. 
As highlighted by the above table, the three tools return the results as regards deadlock analysis,
 but are different as regards performance.
In particular the fixpoint and model-checking analysis of {\DFfABS} are comparable on small/mid-size programs,
 {\tt DECO} appears less performant (except for {\tt PeerToPeer}, where our model-checking analysis is quite slow
 because of the number of dependencies produced by the underlying algorithm). 
On the {\tt FAS module}, our two analysis are again comparable, while {\tt DECO} has a better performance
 ({\tt DECO} worst case complexity is cubic in the size of the input).

\smallskip

Few remarks  about the precision of the techniques follow.
{\DFfABS}/model-check is the most powerful tool we are aware of for linear recursive contract class table.
For instance, it correctly detect the deadlock-freedom of the method {\tt Math.fact\_nc} (previously defined in Figure~\ref{fig.Math})
 while {\DFfABS}/fixpoint signals a false positive.
Similarly, {\tt DECO} signals a false positive deadlock for the following 
program, whereas {\DFfABS}/model-check returns its deadlock-freedom. 
\begin{absexamplen}
class C implements C {
	Unit m(C c){ C w ;
		w = new cog C() ;
		w!m(this) ;
		c!n(this) ; 
        }
	Unit n(C a){  Fut<Unit> x ; 	
		x = a!q() ;
        x.get ;
        }
	Unit q(){ }
}
{		C a; C b ;
   Fut<Unit> x ;
   a = new cog C() ;
   b = new cog C() ;
   x = a!m(b) ;
}
\end{absexamplen}

However, {\DFfABS}/model-check is not defined on non-linear recursive contract class tables.
Non-linear recursive contract class tables can easily be defined, as shown with the following two contracts:
$$\begin{array}{rl}
\C.\m \; = & [\cog : c]\ ()\ \{\pairl{\pinull}{(\C!\m \, [\cog: c]() \rightarrow \unit).(c,c')\\
&+\ \C!{\tt n} \, [\cog: c'']([\cog:c]) \rightarrow \unit}\} \rightarrow \unit
\\
\C.{\tt n} \; = & [\cog: c]\ ([\cog:c'])\ \\&\{ \pairl{ (\C!\m \, [\cog: c]() 
\rightarrow \unit).(c,c')}{\pinull} \} \rightarrow \unit
\end{array}$$
Here, {\DFfABS}/model-check  fails to analyse $\C.\m$ while {\DFfABS}/fixpoint and {\tt DECO}
successfully recognise as dead\-lock-free%
\footnote{In~\cite{GL2014}, we have defined a source-to-source transformation %
 taking nonlinear recursive contract class tables and returning linear recursive ones. %
This transformation introduces fake cog dependencies that returns a false positive %
 when applying {\DFfABS}/model-check on the example above.}.
We conclude this section with a remark about the proportion between 
programs with linear 
recursive contract class tables and those with nonlinear ones. 
While this proportion is hard to assess, our preliminary 
analyses strengthen the claim that nonlinear recursive programs are rare.
We have parsed the three case-studies developed in the
European project HATS~\cite{FAS}. The case studies are the FAS module, 
a Trading System (TS) modelling a supermarket handling sales, and a Virtual Office 
of the Future (VOF) where office workers are enabled to perform their office tasks seamlessly independent of their current location. 
FAS has 2645 code-lines, TS has 1238 code-lines, and VOF has 429 code-lines. In none of
them we found a nonlinear recursion in the corresponding contract class table, 
TS and VOF have respectively 2 and 3 
linear recursive method contracts (there are
recursions in functions on data-type values that have nothing to do with locks and
control). This substantiates the usefulness of our technique in these programs;
the analysis of a wider range of programs is matter of future work.

%%% Local Variables: 
%%% mode: latex
%%% TeX-master: "subm-SoSyM"
%%% End: 

 \section{Related works}
 \label{sec:relatedworks}
 %!TEX root = SoSyM.tex

A preliminary theoretical study was undertaken in~\cite{GiachinoL11}, 
where (\emph{i}) the considered language is a functional subset of {\coreABS}; 
(\emph{ii}) contracts are not inferred, they are provided by the programmer and 
type-checked;
(\emph{iii}) the deadlock analysis is less precise because it is not iterated as in this 
contribution, but stops at the first approximant, and (\emph{iv}), more importantly,
method contracts are not pairs of lams, which led it to discard dependencies
(thereby causing the analysis, in some cases, to erroneously yield false negatives).
This system has been improved in~\cite{GL2013a} by modelling method contracts as
pairs of lams, thus supporting a more precise fixpoint technique.
The contract inference system of~\cite{GL2013a} has been extended in this contribution with the management of aliases of futures and with the dichotomy of present contract
and future contract in the inference rules of statements.

The proposals in the literature that statically analyse deadlocks 
are largely based on (behavioural) types. In~\cite{Boyapati2002,Flanagan03,Abadi2006,Vasconcelos2009}
a type system is defined that computes a partial order of 
the locks in a program and a subject reduction theorem demonstrates that 
tasks follow this order. Similarly  to these techniques, the tool 
{\tt Java PathFinder}~\cite{JavaPathFinder} computes a tree of lock orders for
 every method
and searches for mismatches between such orderings.
On the contrary, our technique does not
compute any ordering of locks during the inference of contracts, thus being more 
flexible: a computation may acquire two locks in different order at different 
stages, being correct in our case, but incorrect with the other techniques. 
The Extended Static Checking for Java~\cite{ESC} is an automatic tool for contract-based programming: annotation are used to specify loop invariants, pre and post conditions, and
to catch deadlocks. The tool warns the programmer if the annotations cannot be validated. 
This techniques requires that annotations are explicitly provided by the programmer, while
they are inferred in {\DFfABS}.

A well-known deadlock analyser is {\sc TyPiCal}, a tool that
has been developed for pi-calculus by Kobayashi~\cite{Typicaltool,Kobayashi1998,Kobayashi2004,Kobayashi06}. 
{\sc TyPiCal} uses a clever technique for deriving 
inter-channel dependency information and
is able to deal with several recursive behaviours and the creation of new 
channels without committing to any pre-defined order of channel names.
Nevertheless, since 
{\sc TyPiCal} is based on an inference system, there are recursive behaviours
that escape its accuracy.
For instance, it returns false positives when recursion create networks with arbitrary numbers of nodes. To illustrate the issue we consider the following deadlock-free 
program computing factorial 
\begin{absexamplen}
class Math implements Math {
   Int fact(Int n, Int r){
   	Math y ;
	Fut<Int> v ;
	if (n == 0) return r ; 
        else { y = new cog Math() ;
               v = y!fact(n-1, n*r) ;
               w = v.get ;
               return w ;
       }
   }
}
{
	Math x ; Fut<Int> fut ; Int r ;
	x = new cog Math();
	fut = x!fact(6,1);
	r = fut.get ;
}
\end{absexamplen}
that is a variation of the method {\tt Math.fact\_ng} in Figure~\ref{fig.Math}. This
code is deadlock free according to {\DFfABS}/ model-check, however, its implementation in
pi-calculus~\footnote{The pi-calculus factorial program is
\\
{\tt *factorial?(n,(r,s)).
\\
\hspace*{.5cm}  if n=0 then r?m. s!m else new t in 
\\
\hspace*{1cm}                        (r?m. t!(m*n)) | factorial!(n-1,(t,s)) 
}
\\
In this code, {\tt factorial} returns the value (on the channel {\tt s}) by 
\emph{delegating}
this task to the recursive invocation, if any. In particular,
the initial invocation of {\tt factorial}, which is {\tt r!1 | factorial!(n,(r,s))}, 
performs a synchronisation between {\tt r!1} and the input {\tt r?m} in 
the continuation of {\tt factorial?(n,(r,s))}. In turn, this may delegate 
the computation of the factorial to a subsequent synchronisation on a 
new channel {\tt t}.
{\sc TyPiCal} signals a deadlock on the two inputs {\tt r?m} because it fails in connecting
the output {\tt t!(m*n)} with them. 
} is not deadlock-free according to {\sc TyPiCal}. The extension of {\sc TyPiCal}
with a technique similar to the one in Section~\ref{sec.mutanalysis}, but covering
the whole range of lam programs, has been recently defined in~\cite{GKL2014}.

Type-based deadlock analysis has also been 
studied in~\cite{PuntigamP01}. In this contribution, types define objects' states
and  can express acceptability of messages.
The exchange of messages modifies the state of the objects.
In this context, a deadlock is avoided by setting an ordering on types.
With respect to our technique, \cite{PuntigamP01} uses a deadlock prevention
approach, rather than detection, and no inference system for types is
provided.

In~\cite{PunPhD}, the author proposes two approaches for a type and effect-based deadlock analysis for a concurrent extension of ML.
The first approach, like our ones, uses a type and effect inference algorithm, followed by an analysis to verify deadlock freedom.
However, their analysis approximates infinite behaviours with a chaotic behaviour that
non-deterministically acquires and releases locks, thus becoming imprecise.
For instance, the previous example should be considered a potential deadlock in their approach.
The second approach is an initial result on a technique for reducing deadlock analysis to data race analysis.

Model-theoretic techniques for deadlock analysis have also been investigated.
defined. In~\cite{CM97}, circular dependencies among processes are detected as
erroneous configurations, but dynamic creation of names is not
treated. Similarly in~\cite{deBoerBGLSZ12} (see the discussion
below). 

Works that specifically tackle the problem of deadlocks for languages with the
same concurrency model as that of {\coreABS} are the following:
\cite{WestNM10} defines an approach for deadlock prevention (as opposed to
our deadlock detection) in SCOOP, an Eiffel-based concurrent language.
Different from our approach, they annotate classes with the used
\emph{processors} (the analogue of cogs in {\coreABS}), while this information is
inferred by our technique. Moreover each method
exposes preconditions representing required lock ordering of processors (processors obeys an order in
which to take locks), and this information must be provided by the
programmer.
\cite{deBoerBGLSZ12} studies a Petri net based analysis, reducing deadlock
detection to a reachability problem in Petri nets.
This technique is more precise in that it is thread based and not just object
based.
Since the model is finite, this contribution does not address the feature of
object creation and it is not clear how to scale the technique.
We plan to extend our analysis in order to consider
finer-grained thread dependencies instead of just object dependencies.
\cite{KerfootMT09} offers a design pattern methodology for CoJava to
obtain deadlock-free programs. CoJava, a {Java} dialect where data-races and
data-based deadlocks are avoided by the type system, prevents threads from
sharing mutable data. Deadlocks are excluded by  a programming style based on
ownership types and \emph{promise} (i.e. future) objects. The main differences
with our technique are (\emph{i}) the needed information must be provided by the
programmer, (\emph{ii}) deadlock freedom is obtained through ordering and
timeouts, and (\emph{iii}) no guarantee of deadlock freedom is provided by the
system.

The relations with the work by Flores-Montoya 
\emph{et al.}~\cite{Antonio2013} has been largely discussed in 
Section~\ref{sec:sdatool}. Here we remark that,
as regards the design, {\tt DECO} is a monolithic code written in Prolog. 
On the contrary,
{\DFfABS} is a highly modular  Java code. Every module may 
be replaced by another; for instance one may rewrite the inference system for another language
and plug it easily in the tool, or one may use a different/refined contract analysis 
algorithm, in particular one used in {\tt DECO} (see Conclusions).

 \section{Conclusions}
\label{sec:conclusion}
We have developed a framework for  detecting deadlocks in {\coreABS} programs.
 The technique uses 
(i) an inference algorithm to extract abstract descriptions of methods, called
contracts, (ii) an evaluator of contracts, which computes an over-approximated 
fixpoint semantics, (iii) a model checking algorithm that evaluates contracts by
unfolding method invocations. 

This study can be extended in several directions. 
As regards the prototype, the next release will provide indications about \emph{how} 
deadlocks have been produced by pointing out the elements in the code that
generated the detected circular dependencies.
This way, the programmer will be able to check whether or not the detected
circularities are actual deadlocks, fix the problem in case it is a verified
deadlock, or be assured that the program is deadlock-free.

{\DFfABS}, being modular, may be integrated with other
analysis techniques. In fact, in collaboration with Kobayashi~\cite{GKL2014},
we have recently defined a variant of the model checking 
algorithm that has no linearity restriction. 
For the same reason, another direction of research is to analyse contracts with 
the point-to analysis
technique of {\tt DECO}~\cite{Antonio2013}. We expect that such analyser will
be simpler than {\tt DECO} because, after all, contracts are simpler than
{\coreABS} programs. 

Another direction of research is the application of our inference system to other languages featuring asynchronous method invocation, possibly after removing or 
adapting or adding rules. One such language that we are currently studying is
 {\tt ASP}~\cite{ASP}. 
While we think that our framework and its underlying theory are
robust enough to support these applications, we observe that a necessary condition for demonstrating 
the results of correctness of the framework is that the language has a formal 
semantics.

 % BibTeX users please use one of
 %\bibliographystyle{spbasic}      % basic style, author-year citations
 \bibliographystyle{spmpsci}      % mathematics and physical sciences
 \bibliography{main}   % name your BibTeX data base

\appendix

\section{Properties of Section~\ref{sec.FJg-contracts}}\label{sec:subjred}
%!TEX root = SoSyM.tex

\def\futures{F}
\def\rewritet{\Rightarrow}

The \emph{initial configuration} of a well-typed {\coreABS} program is
\[
{\it ob}({\it start}, \varepsilon, \{ [\text{destiny} \mapsto f_{\it start}, 
\bar{x} \mapsto \undefined] \, | \, s \}, \varnothing) \, \cog(\text{start},{\it start})
\]
where the activity $\{ [\text{destiny} \mapsto f_{\it start}, 
\bar{x} \mapsto \undefined] \, | \, s \}$ corresponds to the activation of the main function. 
A \emph{computation} is a sequence of reductions starting at the initial 
configuration according to the operational semantics. 
We show in this appendix that such computations keep configurations well-typed;
in particular, we show that the sequence of contracts corresponding to the 
configurations of the computations is in
 the {\em later-stage relationship} (see Figure~\ref{fig:sr:lr}).

\paragraph{Runtime contracts.}
In order to type the configurations we use a \emph{runtime type system}. To this aim
 we extend the syntax of contracts 
in Figure~\ref{fig:synCon} and define \emph{extended futures} $\futures$, \emph{extended 
contracts} that, with an abuse of notation, we still denote  $\cntc$ and
\emph{runtime contracts} $\concntc$ 
as follows:
\mysyntax{\mathmode}{
\simpleentry \futures [] f\bnfor \i_f [{}]
\\
\\
\simpleentry \cntc []
  {\mbox{{\em as in Figure~\ref{fig:synCon}}}}
   \bnfor f\bnfor f.(c,c') \bnfor f.(c,c')^{\aa} 
   \bnfor \mpair{c}{\cntc}{\cntc}
   [{}]
\\ \\
\simpleentry \concntc [] \pinull \bnfor \cproc{f}{c}{\cntc}{\cntc}\bnfor \cinvoc{f}{\ccall{\C!\m}{\frr}{\vect{\frr}}{\frr}} \bnfor \concntc \parallel \concntc
 [{}]
}
As regards $\futures$, they are introduced for distinguishing two kind of future names:
 i) $f$ that has been used in the contract inference system as a {\em static time} representation of a future,
  but is now used as its {\em runtime} representation;
 ii) $\i_f$ now replacing $f$ in its role of {\em static time} future
 (it's typically used to reference a future that isn't created yet).

As regards $\cntc$ and $ \concntc$, the extensions are motivated by the fact that, at runtime, the informations 
about contracts are scattered in all the configuration.
However, when we plug all the parts to type the whole configuration,
%we can merge the different informations to get back an extended contract $\cntc$
% (which turns out to be a parallel $\mpair{c_1}{\cntc_1}{\cntc_n'} \parallel \cdots \parallel \mpair{c_n}{\cntc_n}{\cntc_n'}$ where $c_i$ is the cog in which $\pair{\cntc_i}{\cntc_i'}$ executes).
we can merge the different informations to get a runtime contract $\rcntc'$ such that every contract $\cntc\in\rcntc'$ does not contain any reference to futures anymore.
This  merging is done using a set of rewriting rules $\rewritet$ defined in 
Figure~\ref{def.fsimpl} that let one replace the occurrences of runtime futures in runtime contracts $\concntc$ with the corresponding contract of the future.
We write $f \in \names{\concntc}$ whenever $f$ occurs in $\concntc$ not as an index.
The substitution $\concntc \subst{\cntc}{f}$ replaces the occurrences of $f$ in contracts $\cntc''$ 
of $\concntc$ (by definition of our configurations, in these cases 
$f$ can never occur as index in $\concntc$).
It is easy to demonstrate that the merging process always terminates and is confluent for non-recursive contracts and,
in the following, we let $\fsimpl{\concntc}$ be the {\em normal form} of 
$\concntc$ with respect to $\rewritet$:
\begin{definition}
A runtime contract $\rcntc$ is {\em non-recursive} if:
\begin{itemize}
\item all futures $f\in\names{\rcntc}$ are declared once in $\rcntc$
\item all futures $f\in\names{\rcntc}$ are not recursive, i.e. for all $\mf{f}{\mpair{c}{\cntc}{\cntc'}}\in\rcntc$, we have $f\not\in\names{\mf{f}{\mpair{c}{\cntc}{\cntc'}}}$
\end{itemize}
\end{definition}
\begin{figure*}
%\[
%\begin{array}{rcl@{\qquad \qquad}l}
%  \concntc \fpar \mf{f}{\mpair{c}{\cntc}{\cntc'}}
%  & \rewritet & 
%    \left\{ 
%    \begin{array}{l@{\qquad}l}
%    \concntc\subst{\mpair{c}{\cntc}{\cntc'}}{f} & {\rm if} \; f \in \names{\concntc}
% \\
% \\
%  \concntc \fpar  \mf{f}{\mpair{c}{\cntc}{\cntc'}}  & {\rm otherwise} 
%  \end{array} \right.
%\end{array}
%%
%\hspace{2.5em}
%%
%\begin{array}{rcl@{\qquad \qquad}l}
%  \concntc \fpar \cinvoc{f}{\ccall{\C!\m}{\frr}{\vect{\frr}}{\frr}}
%  &\rewritet& 
%  \left\{ 
%     \begin{array}{l@{\qquad}l} 
%     \concntc\subst{\ccall{\C!\m}{\frr}{\vect{\frr}}{\frr}}{f}
%  & {\rm if} \; f \in \names{\concntc}
% \\
% \\
%  \concntc \fpar \cinvoc{f}{\ccall{\C!\m}{\frr}{\vect{\frr}}{\frr}}
%  & {\rm otherwise}
%  \end{array} \right.   
%\end{array}
%\]
\myrules{\mathmode\scriptmode\simplemode}{
\entry{f \in \names{\concntc}}{\concntc \fpar \mf{f}{\mpair{c}{\cntc}{\cntc'}} \rewritet \concntc\subst{\mpair{c}{\cntc}{\cntc'}}{f}} \and
\entry{f \in \names{\concntc}}{\concntc \fpar \cinvoc{f}{\ccall{\C!\m}{\frr}{\vect{\frr}}{\frr}} \rewritet \concntc\subst{\ccall{\C!\m}{\frr}{\vect{\frr}}{\frr}}{f}}
}
\caption{\label{def.fsimpl} Definition of $\rewritet$}
\end{figure*}

\paragraph{Typing Runtime Configurations.}
The typing rules for the runtime configuration are given in Figures~\ref{fig:rntyp}, \ref{fig:inf:expR} and~\ref{fig:inf:stmtR}.
Except for few rules (in particular, those in Figure~\ref{fig:rntyp} which type the runtime objects of a configuration),
 all the typing rules have a corresponding one in the contract inference system
 defined in Section~\ref{sec.FJg-contracts}.
Additionally, the typing judgments are identical to the corresponding one in the
inference system, with three minor differences:
\begin{itemize}
\item[ i)] the typing environment, that now contains a reference to the contract 
class table and mappings object names to pairs $(\C,\frr)$, is called $\Delta$;
\item[ ii)] the typing rules do not collect constraints;
%\item[ iii)] the $\rtcontract{\cdot}$ function on environments $\Delta$ is similar to $\contract{\cdot}$ in Section~\ref{sec.FJg-contracts},
% except that it now grabs values all futures $\i_f$ and $f$ from $\Delta$. More precisely
%\[\begin{array}{rl}
%\rtcontract{\Delta} \eqdef & \cntc_1 \parallel \cdots \parallel \cntc_n\parallel f_1\parallel \cdots\parallel f_m
%\end{array}\]
%where $\{ \cntc_1, \cdots , \cntc_n\} = \{ \cntc' \sht \exists\i_f,\frr : \; \Delta(\i_f) = (\frr,\cntc') \}$
% and $\{f_1,\cdots ,f_m\} = \{f'\sht f'\in\dom(\Delta)\}$.
\item[ iii)] the $\rtcontract{\cdot}$ function on environments $\Delta$ is similar to $\contract{\cdot}$ in Section~\ref{sec.FJg-contracts},
 except that it now grabs all $\i_f$ and all futures $f'$ that was created by the current thread $f$. More precisely
\[\begin{array}{rl}
\rtcontract{\Delta,f} \eqdef & \cntc_1 \parallel \cdots \parallel \cntc_n\parallel f_1\parallel \cdots\parallel f_m
\end{array}\]
where $\{ \cntc_1, \cdots , \cntc_n\} = \{ \cntc' \sht \exists\i_f,\frr : \; \Delta(\i_f) = (\frr,\cntc') \}$
 and $\{f_1,\dots,f_m\}=\{f'\sht\Delta(f')=(\rec',f)\}$.
\end{itemize}

\medskip

\noindent
Finally, few remarks about the auxiliary functions:
\begin{itemize}
\item[--]
$\textit{init}(\C,o)$ is supposed to return the init activity of the class
 \C. However, we have assumed that these activity is always empty, see Footnote~\ref{footnote.new}. Therefore the corresponding contract will be $\pairl{\pinull}{\pinull}$.

\item[--]
 $\textit{atts}(\C,\many{v}, o, c)$ returns a substitution
provided that $\many{v}$ have records $\many{\frr}$ and $o$ and $c$ are object and cog identifiers, respectively.

\item[--] $\textit{bind}(o,f,\m,\many{v'},\C)$ returns the activity corresponding 
to the method $\name{C}.\m$
  with the parameters $\many{v'}$ provided that
   $f$ has type $\fRec{c}{\frr}$ and $\many{v'}$ have the types $\many{\frr'}$.
%
%\item[--]
%$\ff(s)$ (\emph{runtime future names}) returns the set of runtime future names occurring in the statement $s$.
%This function is obviously not necessary for typing programs at static time.
\end{itemize}

\begin{figure*}[t]
\myrules{\mathmode\compactmode}{
\entry[{\scriptsize (TR-Future-Tick)}]{\Delta(f)=(\fRec{c}{\frr},\cntc)^{\checkmark}\\
\Delta \vdash \nt{val}: \frr}
  {\Delta\vdash_R \textit{fut}(f,\nt{val})\typed\pinull}
\and
\entry[{\scriptsize (TR-Future)}]{\Delta(f)=(\fRec{c}{\frr},\cntc) }
  {\Delta\vdash_R \textit{fut}(f,\bot)\typed\pinull}
\and
\entry[{\scriptsize (TR-Invoc)}]{\Delta(f)=(\fRec{c}{\frr'},\cntc)\\\\%\Delta(f)=(\fRec{c}{\frr'},\,\C!\m~[\cog: c, \many{x{:}\frr}](\many{\frs})\rightarrow\frr') \\\\
    \Delta\vdash_R\many{v}=\many{\frr} \\  \Delta(o)=[\cog: c, \many{x{:}\frr}]}
  {\Delta\vdash_R \textit{invoc}(o,f,m,\many{v})\typed\cinvoc{f}{\ccall{\C!\m}{[\cog: c, \many{x{:}\frr}]}{\many{\frs}}{\frr'}}}
\and
\entry[{\scriptsize (TR-Object)}]{\Delta(o)=[\cog: c, \many{x{:}\frr}]\\\Delta\vdash_R^{c,o} \many{\nt{val}}\typed \many{\frr}\\\\
    \Delta \vdash^{c,o}_R p:\concntc\\
    \Delta \vdash^{c,o}_R \bar{p}\typed\bar{\concntc}}
  {\Delta\vdash_R ob(o, [\cog\mapsto c;\many{x \mapsto\nt{val}}], p, \bar{p}):\concntc\fpar\bar{\concntc}}
\and
\entry[{\scriptsize (TR-Process)}]{
	\Delta\vdash_R^{c,o} \many{\nt{val} : \frx} 
	\\
	\Delta(f)=(\fRec{c}{\frr'},\vect{f'})^{[\checkmark]}
	\\\\
	\Delta[\text{destiny} \mapsto f, \many{x\mapsto\frx}]\vdash^{c,o}_R \nt{s}\typed\cntc\;|\;	\Delta''}
  {\Delta\vdash^{c,o}_R \{\text{destiny} \mapsto f, \many{\ x \mapsto \nt{val}} \; |\,\nt{s}\}: \cproc{f}{c}{\cntc}{\rtcontract{\Delta'',f}}}
\and
%\textcolor{red}{\entry[{\scriptsize (TR-Process)}]{
%	\Delta\vdash_R^{c,o} o : \frr
%	\\
%	\Delta\vdash_R^{c,o} \many{\nt{val} : \frx} 
%	\\
%	\Delta(f)=(\fRec{c}{\frr'},\cntc')^{[\checkmark]}
%	\\\\
%	\Delta'=\Delta|_{\ff(\nt{s})}
%	\\
%	\Delta'[o \mapsto \frr, \text{destiny} \mapsto f, \many{x\mapsto\frx}]
%	\vdash^{c,o}_R \nt{s}\typed\cntc\;|\;	\Delta''}
%  {\Delta\vdash^{c,o}_R \{\text{destiny} \mapsto f, \many{\ x \mapsto \nt{val}} \; |\,\nt{s}\}: \cproc{f}{c}{\cntc}{\rtcontract{\Delta''}}}
%  }
\and
\simpleentry[{\scriptsize (TR-Idle)}]{\Delta\vdash^{c,o}_R \key{idle}\typed\pinull}
\and
\entry[{\scriptsize(TR-Parallel)}]{\Delta\vdash_R cn_1\typed\concntc_1\\\Delta\vdash_R cn_2\typed\concntc_2}{\Delta\vdash_R cn_1\ cn_2\typed\concntc_1\parallel\concntc_2}
}
\caption{The typing rules for runtime configurations.}
\label{fig:rntyp}
\end{figure*}

\begin{figure*}[t]
runtime expressions
\myrules{\mathmode\compactmode}{
\entry[{\scriptsize (TR-Obj)}]{\Delta(o) = (\C,\frr)}{\inferpeR{\Delta}{\obj,o}{o}{}{\frr}} \and
\entry[{\scriptsize (TR-Fut)}]{\Delta(\futures)=\frz}{\inferpeR{\Delta}{\obj,o}{\futures}{}{\frz}} \and
\entry[{\scriptsize (TR-Var)}]{\Delta(x)=\frx}{\inferpeR{\Delta}{\obj,o}{x}{x}{\frx}} \and
\entry[{\scriptsize (TR-Field)}]{x\not\in\dom(\Delta)\\\Delta(o.x)=\frr}{\inferpeR{\Delta}{\obj,o}{\x}{x}{\frr}} \and
\entry[{\scriptsize (TR-Value)}]{\inferpeR{\Delta}{\obj,o}{e}{x}{\futures}\\
    \inferpeR{\Delta}{\obj,o}{\futures}{x}{(\frr,\cntc)^{[\checkmark]}}}{\inferpeR{\Delta}{\obj,o}{e}{x}{\frr}} \and
\entry[{\scriptsize (TR-Val)}]{e \quad  \textit{primitive value or arithmetic-and-bool-exp}}
  {\inferpeR{\Delta}{\obj,o}{e}{}{\unit}} \and
\entry[{\scriptsize (TR-Pure)}]{\inferpeR{\Delta}{\obj,o}{e}{}{\frr}}{\infereeR{\Delta}{\obj,o}{e}{}{\frr}{\pinull}{\Delta}}
}
expressions with side-effects
\myrules{\mathmode\compactmode}{
\entry[{\scriptsize (TR-Get)}]{\inferpeR{\Delta}{\obj,o}{x}{}{\i_f}\\\inferpeR{\Delta}{\obj,o}{\i_f}{}{(\fRec{\objb}{\frr'}, \cntc)}\\\\
    \Delta[{\it destiny}]=f\\\Delta'=\Delta[\i_f \mapsto (\frr, \pinull)^\checkmark]}
  {\infereeR{\Delta}{\obj,o}{x.\nget}{}{\frr'}{\cntc\seqpoint(\obj,\objb) \rfloor\rtcontract{\Delta',f}}{\Delta'}} 
\and
\entry[{\scriptsize (TR-Get-Runtime)}]{\inferpeR{\Delta}{\obj,o}{x}{}{f}\\\inferpeR{\Delta}{\obj,o}{f}{}{(\fRec{\objb}{\frr'}, \cntc)}\\\\
    \Delta[{\it destiny}]=f'\\\Delta'=\Delta[f \mapsto (\frr, \pinull)^\checkmark]}
  {\infereeR{\Delta}{\obj,o}{x.\nget}{}{\frr'}{f\seqpoint(\obj,\objb) \rfloor\rtcontract{\Delta',f'}}{\Delta'}} 
\and
\entry[{\scriptsize (TR-Get-tick)}]{\inferpeR{\Delta}{\obj,o}{x}{}{\futures}\\
    \inferpeR{\Delta}{\obj,o}{\futures}{}{(\fRec{\objb}{\frr'}, \cntc)^\checkmark}}
  {\infereeR{\Delta}{\obj,o}{x.\nget}{}{\frr'}{\pinull}{\Delta}}
\and
\entry[{\scriptsize (TR-NewCog)}]
  {\inferpeR{\Delta}{\obj,o}{\bar{e}}{\bar{n}}{\bar{\frr}}\\\parameters{\C}= \bar{T\ x}\\\fields{\C}=\bar{T'\ x'}\\\obj' \fresh}
  {\infereeR{\Delta}{\obj,o}{\nnew\ \ncog\ \C(\bar{e})}{\_}{[\cog{:}\obj', \bar{x{:}\frr}, \bar{x'}{:}\bar{\frr'}]}{\pinull}{\Delta}}
\and
\entry[{\scriptsize (TR-New)}]
  {\inferpeR{\Delta}{\obj,o}{\bar{e}}{\bar{n}}{\bar{\frr}}\\\parameters{\C}= \bar{T\ x}\\\fields{\C}=\bar{T'\ x'}}
  {\infereeR{\Delta}{\obj,o}{\nnew\ \C(\bar{e})}{\_}{[\cog{:}\obj, \bar{x{:}\frr}, \bar{x'}{:}\bar{\frr'}]}{\pinull}{\Delta}}
\and
\entry[{\scriptsize (TR-AInvk)}]{
    \inferpeR{\Delta}{\obj,o}{e}{n}{ [\cog {:} \obj' , \bar{x{:}\frr}] } \\ \fclass{\types{e}}=\C \\
    \inferpeR{\Delta}{\obj,o}{\bar{e}}{\bar{n}}{\bar{\frs}}\\ \fields{\C},\parameters{\C} = \bar{T\ x}\\
    \Delta(\C.\m)=\frr' (\bar{\frs'})\{\pairl{\cntc}{\cntc'}\} \frr''\\
    \bar{c'} = \gr{ \frr''}\setminus \gr{\frr',\bar{\frs'}}
    \\ \bar{c},\i_f \fresh\\
    \frs''=\frr''\subst{\bar{c}}{\bar{c'}}\subst{\frr,\bar{\frs}}{\frr',\bar{\frs'}}}
  {\infereeR{\Delta}{\obj,o}{e!\m(\bar{e})}{}{\i_f}{\pinull}{}
    {\Delta[\i_f\mapsto(\fRec{\objb}{\frs''}, \,\C!\m~\frr(\bar{\frs}) \rightarrow \frs'')]}} 
\and
\entry[{\scriptsize (TR-SInvk)}]{
    \inferpeR{\Delta}{\obj,o}{e}{n}{ [\cog {:} \obj' , \bar{x{:}\frr}]} \\ \fclass{\types{e}}=\C \\
    \inferpeR{\Delta}{\obj,o}{\bar{e}}{\bar{n}}{\bar{\frs}}\\ \fields{\C},\parameters{\C} = \bar{T\ x}\\
    \Delta(\C.\m)=\frr' (\bar{\frs'})\{\pairl{\cntc}{\cntc'}\} \frr''\\
    \bar{c'} = \gr{ \frr''}\setminus \gr{\frr',\bar{\frs'}}
    \\ \bar{c} \fresh\\
    \frs''=\frr''\subst{\bar{c}}{\bar{c'}}\subst{\frr,\bar{\frs}}{\frr',\bar{\frs'}}}
  {\infereeR{\Delta}{\obj,o}{e.\m(\bar{e})}{}{\frs''}{\C.\m~\frr(\bar{\frs}) \rightarrow \frs''\rfloor\rtcontract{\Delta}}{}{\Delta}} 
}
\caption{Runtime typing rules for expressions}
\label{fig:inf:expR}
\end{figure*}

\begin{figure*}[t]
statements
\myrules{\mathmode\compactmode}{
\entry[{\scriptsize (TR-Var-Record)}]{\inferpeR{\Delta}{\obj,o}{x}{n}{\frx}\\\infereeR{\Delta}{\obj,o}{z}{n}{\frx'}{\cntc}{\Delta'}}
  {\inferstR{\Delta}{\obj,o}{x=z}{\_}{\cntc}{}{\Delta'[x \mapsto\frx']}} 
\and
\entry[{\scriptsize (TR-Field-Record)}]
  {x \not\in\dom(\Delta)\\\Delta(\ethis.x)=\frr\\\infereeR{\Delta}{\obj,o}{z}{n}{\frr}{\cntc}{\Delta'}}
  {\inferstR{\Delta}{\obj,o}{\x=z}{\_}{\cntc}{}{\Delta'}}
\and
\entry[{\scriptsize (TR-Var-Future)}]{\inferpeR{\Delta}{\obj,o}{x}{n}{\futures}}
  {\inferstR{\Delta}{\obj,o}{x=f}{}{\pinull}{}{\Delta[x \mapsto f]}} 
\and
\entry[{\scriptsize (TR-Await)}]
  {\inferpeR{\Delta}{\obj,o}{x}{n}{\i_f}\\\inferpeR{\Delta}{\obj,o}{\i_f}{}{(\fRec{\obj'}{\frr},\cntc)}\\\\
    \Delta[{\it destiny}]=f\\\Delta'=\Delta [\i_f \mapsto (\fRec{\objb}{\frr},\pinull)^\checkmark]}
  {\inferstR{\Delta}{\obj,o}{\nwait\ x?}{n}{\cntc\seqpoint(\obj,\obj')^{\aa}\rfloor\rtcontract{\Delta',f}}{}{\Delta'}}
\and
\entry[{\scriptsize (TR-Await-Runtime)}]
  {\inferpeR{\Delta}{\obj,o}{x}{n}{f}\\\inferpeR{\Delta}{\obj,o}{f}{}{(\fRec{\obj'}{\frr},\cntc)}\\\\
    \Delta[{\it destiny}]=f'\\\Delta'=\Delta [f \mapsto (\fRec{\objb}{\frr},\pinull)^\checkmark]}
  {\inferstR{\Delta}{\obj,o}{\nwait\ x?}{n}{f\seqpoint(\obj,\obj')^{\aa}\rfloor\rtcontract{\Delta',f'}}{}{\Delta'}}
\and
\entry[{\scriptsize (TR-Await-Tick)}]
  {\inferpeR{\Delta}{\obj,o}{x}{n}{\futures}\\\inferpeR{\Delta}{\obj,o}{\futures}{}{(\fRec{\objb}{\frr},\cntc)^\checkmark}}
  {\inferstR{\Delta}{\obj,o}{\nwait\ x?}{n}{\pinull}{}{\Delta }}
\and
\entry[{\scriptsize (TR-If)}]{
	\inferpeR{\Delta}{\obj,o}{e}{n}{\Bool}
	\\
	\inferstR{\Delta}{\obj,o}{s_1}{n_1}{\cntc_1}{\mathcal U_1}{\Delta_1}
	\\
	\inferstR{\Delta}{\obj,o}{s_2}{n_2}{\cntc_2}{\mathcal U_2}{\Delta_2}
	\\\\
        x\in\dom(\Delta) \Longrightarrow \Delta_1(x)=\Delta_2(x)
        \\\\
        x\in{\tt Fut}(\Delta) \Longrightarrow \Delta_1(\Delta_1(x)) = \Delta_2(\Delta_2(x))
        \\\\
                \Delta' = \Delta_1 + (\Delta_2 \setminus (\dom(\Delta) \cup \{ \Delta_2(x) \; | 
        \; x \in {\tt Fut}(\Delta_2)\}))
	}{
	\inferstR{\Delta}{\obj,o}{\eif{e}{s_1}{s_2}}{\_}{\cntc_1+\cntc_2}{
		\mathcal U_1\land \mathcal U_2}{\Delta'}
	}
\and
\simpleentry[{\scriptsize (TR-Skip)}]{\inferstR{\Delta}{\obj,o}{\eskip}{}{\pinull}{\etrue}{\Delta}} \and
\entry[{\scriptsize (TR-Seq)}]{
	\inferstR{\Delta}{\obj,o}{s_1}{n_1}{\cntc_1}{\mathcal U_1}{\Delta_1}
	\\\\
	\inferstR{\Delta_1}{\obj,o}{s_2}{n_2}{\cntc_2}{\mathcal U_2}{\Delta_2}
	}{
  	\inferstR{\Delta}{\obj,o}{s_1;s_2}{\_}{\cntc_1\fatsemi\cntc_2}{
		\mathcal U_1\land \mathcal U_2}{\Delta_2}
	}
\and
\entry[{\scriptsize (TR-Return)}]{
	\inferpeR{\Delta}{\obj,o}{e}{n}{\frr}
	\\\\
	\Delta(\text{destiny})=f \\ \Delta(f)=(\fRec{c}{\frr},\cntc)
	}{
	\inferstR{\Delta}{\obj,o}{\nreturn\ e}{\_}{\pinull}{}{\Delta}
	} \and
\entry[{\scriptsize (TR-Cont)}]{\Delta(f)=\frz}{\inferstR{\Delta}{\obj,o}{\key{cont}(f)}{\_}{\pinull}{}{\Delta}}
}
\caption{Runtime typing rules for statements}
\label{fig:inf:stmtR}
\end{figure*}

\medskip

\begin{theorem}\label{th:wt_init}
Let $P=\bar{I}\ \bar{C}\ \{\many{T\ x}; s\}$ be a \coreABS\  program and let
$\inferpp{\Gamma}{P}{\cct,\,\pair{\cntc}{\cntc'}}{\mathcal{U}}$.
Let also $\sigma$ be a substitution satisfying $\mathcal{U}$ and
\[
\Delta=\sigma(\Gamma+\cct)+start:[\cog: {\rm start}]+f_{\it start}:(\fRec{\rm start}{\unit},\pinull)
\]
Then
\[
\Delta\typep_R ob(start,\varepsilon,\{ l \mid s\},\emptyset) \; \cog({\rm start}, {\it start}) \typed\sigma(\pairl{\cntc}{\cntc'})_{f_{\it start}}^{\rm start}
\]
where $l = [\text{destiny} \mapsto f_{\it start}, \, \many{x \mapsto \undefined}]$.
\end{theorem}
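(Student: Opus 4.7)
The proof of Theorem~\ref{th:wt_init} will proceed by translating the static derivation produced by the rules in Figures~\ref{fig:inf:exp}, \ref{fig:inf:stmt} and~\ref{fig.meth-class} into a runtime derivation using the rules in Figures~\ref{fig:rntyp}, \ref{fig:inf:expR} and~\ref{fig:inf:stmtR}. The key observation is that the two systems differ only in (i) the runtime system does not collect constraints, (ii) it tracks additional bookkeeping (object names $o$, the destiny future $f$), and (iii) it distinguishes runtime futures $f$ from static ``place-holder'' futures $\i_f$. Since $\sigma$ satisfies $\mathcal U$, all unification and semi-unification constraints generated during the static inference become equalities after applying $\sigma$, and the static judgments can be re-read as runtime judgments under $\sigma(\Gamma + \cct)$.

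The plan is to establish three intermediate \emph{substitution lemmas}, proved by induction on the depth of the static derivation. The first states that if $\inferpe{\Gamma}{c}{e}{\frx}$ and $\sigma \models \mathcal U$ (for whatever constraints are in scope), then $\inferpeR{\sigma(\Gamma')}{c,o}{e}{n}{\sigma(\frx)}$ for a suitable $o$ and name $n$ extracted from $e$; the second treats expressions with side effects and yields a matching runtime environment; the third treats statements, giving
\[
\inferst{\Gamma}{c}{s}{\cntc}{\mathcal U}{\Gamma'} \;\Longrightarrow\; \inferstR{\sigma(\Gamma)}{c,o}{s}{\_}{\sigma(\cntc)}{}{\sigma(\Gamma')}.
\]
Each lemma is proved by a straightforward case analysis on the last rule applied, matching each static rule with its runtime counterpart: for instance, \rulename{T-AInvk} is matched by \rulename{TR-AInvk}, where the semi-unification constraint $\C.\m \preceq \frr(\bar{\frs}) \rightarrow X$ becomes, under $\sigma$, an instance of the method contract recorded in $\cct$, and thus of $\Delta(\C.\m)$. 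The new cog names introduced by \rulename{T-NewCog} and the free names in method contracts translate directly to the fresh names introduced by \rulename{TR-NewCog} and \rulename{TR-AInvk}.

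The theorem itself then follows by applying the statement-level substitution lemma to the subderivation of \rulename{T-Program} that types the main body in the environment $\Gamma_0 = \Gamma + \bar{x {:} X} + \bar{x' {:} f} + \bar{f {:} (X',\pinull)}$, producing a runtime judgment $\inferstR{\sigma(\Gamma_0)}{{\rm start}, start}{s}{\_}{\sigma(\cntc)}{}{\sigma(\Gamma')}$. Extending this with $\mathrm{destiny} \mapsto f_{\it start}$ as required by \rulename{TR-Process} yields the process typing $\cproc{f_{\it start}}{{\rm start}}{\sigma(\cntc)}{\rtcontract{\sigma(\Gamma'), f_{\it start}}}$. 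Then \rulename{TR-Object} types the object $ob(start,\varepsilon,\{l|s\},\emptyset)$, and \rulename{TR-Parallel} composes this with $\cog({\rm start},{\it start})$, whose contract is $\pinull$, giving exactly $\sigma(\pairl{\cntc}{\cntc'})_{f_{\it start}}^{\rm start}$.

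The principal technical obstacle is showing that the static ``residual contract'' $\contract{\Gamma'}$ that appears in the body of \rulename{T-Program} agrees, under $\sigma$, with the runtime residual $\rtcontract{\sigma(\Gamma'), f_{\it start}}$ used in \rulename{TR-Process}. The two definitions differ in that the runtime version additionally collects every future $f'$ with $\Delta(f') = (\rec', f_{\it start})$, i.e.\ futures spawned by the current task. In the initial configuration no such spawned futures yet exist — the only future name present is $f_{\it start}$ itself, which is the destiny of the activity and hence not included — so after applying $\sigma$ the two sets of contracts coincide. Formally this requires a small invariant, to be stated and checked alongside the substitution lemmas, recording that static record names of future-typed identifiers in $\Gamma'$ correspond one-to-one with the indexed futures $\i_f$ in $\sigma(\Gamma')$ and that none of them is yet mapped to the destiny $f_{\it start}$; this invariant is trivially established at the initial configuration and is what makes $\sigma \circ \contract{\cdot} = \rtcontract{\sigma(\cdot), f_{\it start}}$ hold here.
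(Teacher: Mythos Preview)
Your proposal is correct and follows essentially the same approach as the paper: reduce via \rulename{TR-Object} and \rulename{TR-Process} to a runtime typing of the statement $s$, and then replay the static derivation rule-by-rule against the runtime system after applying $\sigma$ and swapping the inference-time futures for the indexed placeholders $\i_f$. The paper compresses your three substitution lemmas into a single ``standard induction on $s$'' and does not spell out the $\contract{\Gamma'}$ versus $\rtcontract{\sigma(\Gamma'),f_{\it start}}$ correspondence that you isolate; your explicit treatment of that point is more careful than the paper's, which simply declares the induction straightforward.
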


\begin{proof}
By ({\sc TR-Configuration}) and ({\sc TR-Object}) we are reduced to prove:
\begin{eqnarray}
\label{eq.inituno}
\Delta\typep_R^{{\rm start}, {\it start}}\{\text{destiny} \mapsto f_{\it start}, \many{x \mapsto \undefined}|s\}\typed \sigma(\pairl{\cntc}{\cntc'})_{f_{\it start}}^{\rm start}
\end{eqnarray}
To this aim, let $\bar{X}$ be the variables used in the inference rule of \rulename{T-Program}.

To demonstrate (\ref{eq.inituno}) we use \rulename{TR-Process}. Therefore we need to prove:
$$\Delta[\text{destiny} \mapsto f_{\it start}, \bar{x\mapsto\sigma(X)}] 
\typep_R^{{\rm start}, {\it start}} s\typed \sigma(\cntc)\;|\;\Delta'$$
 with $\rtcontract{\Delta'}=\sigma(\cntc'')$.
This proof is done by a standard induction on $s$, using a derivation tree identical 
to the one used for the inference (with the minor exception of replacing the $f$s 
used in the inference with corresponding $\i_f$s). This is omitted because straightforward.
\qed
\end{proof}

\begin{definition}
A runtime contract $\rcntc$ is {\em well-formed} if it is non recursive and if futures and method calls in $\rcntc$ are placed as described by the typing rules: i.e. in a sequence $\cntc_1\cseq\dots\cseq\cntc_n$, they are present in all $\cntc_i$, $i_1\leq i\leq i_k$ with $\cntc_{i_1}$ being when the method is called, and $\cntc_{i_k}$ being when the method is synchronised with.
Formally, for all $\mf{f}{\mpair{c}{\cntc}{\cntc'}}\in\rcntc$, we can derive $\emptyset\typep\cntc\typed \cntc'$ with the following rules:
\myrules{\mathmode\scriptmode\simplemode}{
\simpleentry{\pinull\typep\pinull\typed \pinull} \and
\simpleentry{\pinull\typep\ccall{\C.\m}{\rec}{\vect{\rec}}{\rec'}\typed\pinull} \and
\entry{\cntc'=\pinull\lor\cntc'=f}{\cntc'\typep f\typed f}  \and
\entry{\cntc=\ccall{\C!\m}{\rec}{\vect{\rec}}{\rec'}\\\\\cntc'=\pinull\lor\cntc'=\cntc}{\cntc'\typep\cntc\typed\cntc} \and
\entry{\cntc'=\pinull\lor\cntc'=f}{\cntc'\typep\csync{f}{\cdepn{c}{c'}}\typep\pinull} \and
\entry{\cntc=\ccall{\C!\m}{\rec}{\vect{\rec}}{\rec'}\\\\\cntc'=\pinull\lor\cntc'=\cntc}{\cntc'\typep\csync{\cntc}{\cdepn{c}{c'}}\typed\pinull} \and
\entry{\cntc'\typep\cntc_1\typed \cntc''\\\cntc''\typep\cntc_2\typed \cntc'''}{\cntc'\typep\cntc_1\cseq\cntc_2\typed \cntc'''} \and
\entry{\cntc'\typep\cntc_1\typed \cntc''}{\cntc'\typep\cntc_1\cseq\pinull\typed \cntc''} \and
\entry{\cntc'\typep\cntc_1\typed \cntc''\\\cntc'\typep\cntc_2\typed \cntc''}{\cntc'\typep\cntc_1\cor\cntc_2\typed \cntc''} \and
\entry{\cntc_1'\typep\cntc\typed \cntc_1''\\\cntc_2'\typep\cntc'\typed \cntc_2''}{\cntc_1'\cpar\cntc_2'\typep\cntc\cpar\cntc'\typed \cntc_1''\cpar \cntc_2''}
}
\end{definition}
\begin{lemma}\label{lem:rtkwf}
If $\Delta\typep\nt{cn}\typed\rcntc$ is a valid statement, then $\rcntc$ is well-formed.
\end{lemma}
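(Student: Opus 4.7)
The plan is to proceed by structural induction on the derivation of $\Delta \typep_R \nt{cn} \typed \rcntc$, with a nested induction on the typing derivations of processes and statements (Figures~\ref{fig:rntyp},~\ref{fig:inf:expR},~\ref{fig:inf:stmtR}). The base cases are the leaves of a configuration typing: rules \rulename{TR-Future-Tick} and \rulename{TR-Future} yield $\pinull$; \rulename{TR-Invoc} yields $\cinvoc{f}{\ccall{\C!\m}{\rec}{\vect{\rec}}{\rec'}}$; \rulename{TR-Idle} yields $\pinull$. In each case the non-recursiveness clause is immediate (the future $f$, when present, occurs only as the index of the binder and not inside its body), and the well-formedness derivation for the associated contract body follows by the corresponding base clause of the well-formedness judgment $\emptyset\typep\cntc\typed\cntc'$.

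For \rulename{TR-Parallel} I would combine the two premises using the induction hypothesis; the subtle point is that both non-recursiveness and the single-declaration clause require that the future names declared by the two sub-configurations (through \rulename{TR-Future}, \rulename{TR-Future-Tick}, \rulename{TR-Invoc}, and the $\text{destiny}$ slot in \rulename{TR-Process}) be disjoint. This disjointness is a syntactic invariant of runtime configurations reachable from an initial one: every fresh future name generated by \rulename{Async-Call} (via $\text{fresh}(\,)$) appears exactly once as the binder of a $\fut(f,\nt{val})$ element, and once as the $\text{destiny}$ of a process or in an $\invoc$ element that later becomes such a process through \rulename{Bind-Mtd}. I would state this as an auxiliary freshness invariant on configurations and use it to ensure the two sides of any $\fpar$ do not redeclare the same $f$.

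The technically substantive case is \rulename{TR-Object}/\rulename{TR-Process}, which produces a binder $\cproc{f}{c}{\cntc}{\rtcontract{\Delta'',f}}$: I must show that the contract $\cntc$ extracted from typing the process statement $s$ satisfies $\rtcontract{\Delta'',f} \typep \cntc \typed \cntc'$ for some $\cntc'$. The natural route is a sub-lemma by induction on the derivation of $\Delta[\cdots]\typep_R^{c,o} s \typed \cntc \,|\, \Delta''$ stating that, if we set $\cntc_0 = \rtcontract{\Delta,f}$ as the "pending" pool before the statement, then $\cntc_0\typep\cntc\typed\rtcontract{\Delta'',f}$. The crux is that each rule preserves a tight correspondence between the pool of pending asynchronous invocations recorded in $\Delta$ and the occurrences of those invocations/futures in $\cntc$: \rulename{TR-AInvk} adds a fresh $\i_f\mapsto(\cdot,\C!\m\cdots)$ to $\Delta$ and emits $\pinull$ (so the call appears only in $\rtcontract{\Delta'',f}$, matching the base clauses of the well-formedness judgment); \rulename{TR-Get} / \rulename{TR-Await} emit $\cntc\seqpoint(\obj,\objb)^{[\aa]} \rfloor \rtcontract{\Delta',f}$ and simultaneously tick-mark the corresponding entry, which is exactly the shape $\cntc'\typep\csync{\cntc}{\cdepn{c}{c'}}\typep\pinull$ required by the third and fifth well-formedness clauses; \rulename{TR-Seq} composes by the transitivity of $\typep\cdot\typed\cdot$ on $\cseq$; and \rulename{TR-If} ensures, through its merging condition on $\Delta_1$ and $\Delta_2$, that both branches yield the same residual $\cntc''$, which is precisely the side-condition of the well-formedness rule for $+$.

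The main obstacle will be handling \rulename{TR-If} cleanly: the constraint that the two branches agree on $\Delta(x)$ for $x\in\dom(\Delta)$ and on $\Delta(\Delta(x))$ for $x\in{\tt Fut}(\Delta)$ guarantees that the set of still-live, not-yet-synchronised futures is the same in $\Delta_1$ and $\Delta_2$, but the two branches may produce different newly-created futures that end up in $\Delta'=\Delta_1+\Delta_2|_{\{f\mid f\notin\Delta_2({\tt Fut}(\Delta))\}}$. I would check carefully that these "branch-local" new futures appear only inside one summand of $\cntc_1+\cntc_2$ and occur nowhere outside, so that the overall well-formedness derivation can be split into a congruence on $+$ with equal conclusions on the shared futures and independent conclusions on the branch-local ones. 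Once this is verified, the remaining rules (skip, return, assignments, pure expressions, cont) are routine as they either leave $\Delta$ unchanged or perform simple updates that do not affect future declarations.
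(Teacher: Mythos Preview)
Your proposal is correct and considerably more detailed than the paper's own proof, which consists of a single sentence: ``The result is given by the way $\rtcontract{\cdot}$ is used in the typing rules.'' The paper treats the lemma as essentially immediate from the design of the runtime typing system, whereas you spell out the structural induction, isolate the key sub-lemma tracking the invariant $\cntc_0\typep\cntc\typed\rtcontract{\Delta'',f}$ through the statement-typing rules, and flag the two genuinely non-trivial points (disjointness of future names across $\fpar$, and the merging of environments in \rulename{TR-If}).

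Both approaches ultimately rest on the same observation: the typing rules thread pending asynchronous invocations through $\rtcontract{\cdot}$ in exactly the pattern required by the well-formedness judgment, so the shape of $\cntc$ is forced. The paper simply asserts this; your version makes the induction explicit and identifies where the argument would need care. What your approach buys is a proof that could actually be checked; what the paper's buys is brevity, at the cost of leaving the \rulename{TR-If} environment-merging and the freshness invariant to the reader. Your sub-lemma formulation is the natural way to make the paper's one-liner precise, and nothing in it is wrong.
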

\begin{proof}
The result is given by the way $\rtcontract{\cdot}$ is used in the typing rules.\qed
\end{proof}

In the following theorem we use the so-called \emph{later-stage relation} $\cls$ 
that has been defined in Figure~\ref{fig:sr:lr} on  runtime contracts.
\begin{figure*}
the substitution process
  \[
  \begin{array}{rcl}
    \subst{\unit}{\unit} & \eqdef & \varepsilon
    \\
    \subst{\frr}{\X} & \eqdef & \subst{\frr}{\X}
    \\
    \subst{[\cog {:} \obj', x_1 {:} \frr_1', \cdots , x_n {:} \frr_n']
    }{[\cog {:} \obj, x_1 {:} \frr_1, \cdots , x_n {:} \frr_n]}
    & \eqdef & \subst{\obj'}{\obj} \, \subst{\frr_1'}{\frr_1} \, \cdots \, 
    \subst{\frr_n'}{\frr_n}
    \\
    \subst{\fRec{\obj'}{\frr'}}{\fRec{\obj}{\frr}} 
    & \eqdef & \subst{\obj'}{\obj} \, \subst{\frr'}{\frr}
  \end{array}
  \]

the later-stage relation is the least congruence with respect to runtime contracts that contains the rules
\myrules{\mathmode\compactmode}{
\entry[LS-Global]{\clsstmtm{\concntc_1}{\concntc_1'}\\\clsstmtm{\concntc_2}{\concntc_2'}}{\clsstmtm{\concntc_1\parallel\concntc_2}{\concntc_1'\parallel\concntc_2'}} \and
\entry[LS-Bind]
  {\Delta(\C.\m) = \mcontract{\rec_\nthis}{(\vect{\rec_\nthis})}{\pair{\cntc}{\cntc'}}{\rec_\nthis'}
  \\    \vect{c} = \fn(\pair{\cntc}{\cntc'})\setminus\fn(\rec_\nthis,\vect{\rec_\nthis}, \rec_\nthis') 
  \\\\
	\rec_\nparam =  [\cog: c, \many{x{:}\frr}] \\
 \vect{c'} \cap \fn(\rec_\nparam,\vect{\rec_\nparam}, \rec_\nparam') = \emptyset}
  {\clsstmtm{\mf{f}{\mccall{\C!\m}{\rec_\nparam}{\vect{\rec_\nparam}}{\rec_\nparam'}}}%
    {\mf{f}{\mpair{c}{\cntc}{\cntc'}}\subst{\vect{c'}}{\vect{c}}\subst{\rec_\nparam,\vect{\rec_\nparam},\rec_\nparam'}{\rec_\nthis,\vect{\rec_\nthis},\rec_\nthis'}}}
\and
\entry[LS-AInvk]{f'\in\fn(\pair{\cntc}{\cntc'}) }
  {\clsstmtm{\mf{f}{\mpair{c}{\cntc}{\cntc'}}\subst{\ccall{\C!\m}{\rec_\nparam}{\vect{\rec_\nparam}}{\rec_\nparam'}}{f'}}%
    {\mf{f}{\mpair{c}{\cntc}{\cntc'}}\parallel\mf{f'}{\mccall{\C!\m}{\rec_\nparam}{\vect{\rec_\nparam}}{\rec_\nparam'}}}}
\and
%
%\entry[LS-SInvk]{f'\in\fn(\pair{\cntc}{\cntc'})\\
%\rec_\nparam =  [\cog: c, \many{x{:}\frr}] 
%\\
%\textcolor{red}{f'\fresh  ?? da togliere??}
%}
%  {\clsstmtm{\mf{f}{\mpair{c}{\cntc}{\cntc'}}\subst{\ccall{\C.\m}{\rec_\nparam}{\vect{\rec_\nparam}}{\rec_\nparam'}}{\csync{f'}{\cdepa{c}{c}}}}%
%    {\mf{f}{\mpair{c}{\cntc}{\cntc'}}\parallel\mf{f'}{\mccall{\C!\m}{\rec_\nparam}{\vect{\rec_\nparam}}{\rec_\nparam'}}}} \and
%
\entry[LS-SInvk]{f'\in\fn(\pair{\cntc}{\cntc'})\\\rec_\nparam =  [\cog: c, \many{x{:}\frr}] }
  {\clsstmtm{\mf{f}{\mpair{c}{
  ( \ccall{\C.\m}{\rec}{\vect{\frs}}{\rec'}
    \parallel \cntc) \cseq \cntc'}{\cntc''}}}%
    {\mf{f}{\mpair{c}{(f'.(c,c)^{\aa} \parallel \cntc) \cseq \cntc'}{\cntc''}}\parallel\mf{f'}{\mccall{\C!\m}{\rec_\nparam}{\vect{\rec_\nparam}}{\rec_\nparam'}}}}
\and
\entry[LS-RSInvk]{f'\in\fn(\pair{\cntc}{\cntc'})\\\rec_\nparam =  [\cog: c', \many{x{:}\frr}]  \\ c' \neq c}
  {\clsstmtm{\mf{f}{\mpair{c}{
  ( \ccall{\C.\m}{\rec}{\vect{\frs}}{\rec'}
    \parallel \cntc) \cseq \cntc'}{\cntc''}}}%
    {\mf{f}{\mpair{c}{(f'.(c,c') \parallel \cntc) \cseq \cntc'}{\cntc''}}\parallel\mf{f'}{\mccall{\C!\m}{\rec_\nparam}{\vect{\rec_\nparam}}{\rec_\nparam'}}}}
\and
%\entry[LS-Inner --]{\clsstmtc{c}{\cntc_1}{\cntc_1'}\\\clsstmtc{c}{\cntc_2}{\cntc_2'}}{\clsstmtm{\mf{f}{\mpair{c}{\cntc_1}{\cntc_2}}}{\mf{f}{\mpair{c}{\cntc_1'}{\cntc_2'}}}}
%\and
%\simpleentry[LS-DepNull \textcolor{red}{BRUTTA}]{\clsstmtm{\mf{f}{\mpair{c}{\cntc}{\cntc'}}\cpar\mf{f'}{\mpair{c'}{\pinull}{\pinull}}}{\mf{f}{\mpair{c}{\cntc\subst{\pinull}{\csync{f'}{\cdepn{c_1}{c_2}}}}{\cntc'\subst{\pinull}{\csync{f'}{\cdepn{c_1}{c_2}}}}}\cpar\mf{f'}{\mpair{c'}{\pinull}{\pinull}}}}
\simpleentry[LS-DepNull]{\clsstmtm{\mf{f}{\mpair{c}{\cntc}{\cntc'}}\cpar\mf{f'}{\mpair{c'}{\pinull}{\pinull}}}{\mf{f}{\mpair{c}{\cntc\subst{\pinull}{f'}}{\cntc'\subst{\pinull}{f'}}}\cpar\mf{f'}{\mpair{c'}{\pinull}{\pinull}}}}
\\\\
%
%\entry[LS-RSInvk -- ]{\rec=\rrec{c'}{\vect{x}:\vect{\rec'}} \\ c\neq c'}
%  {\clsstmtc{c}{\ccall{\C.\m}{\rec}{\vect{\rec}}{\rec'}}{\csync{\ccall{\C!\m}{\rec}{\vect{\rec}}{\rec'}}{\cdeps{c}{c'}}}} \and
%%
%\simpleentry[LS-Refl --]{\clsstmtc{c}{\cntc}{\cntc}}
\and
\simpleentry[LS-Fut]{
	\clsstmtc{}{f}{\pinull}
}
\and
\simpleentry[LS-Empty]{
	\clsstmtc{}{0.(c,c')^{[\aa]}}{\pinull}
}
\and
\simpleentry[LS-Delete]{\clsstmtc{}{\pinull\cseq\cntc}{\cntc}}\and
\simpleentry[LS-Plus]{\clsstmtc{}{\cntc_1 + \cntc_2}{\cntc_i}}\and
%
%\entry[LS-Trans--]{\clsstmtc{c_\nthis}{\cntc_1}{\cntc_2}\\\clsstmtc{c_\nthis}{\cntc_2}{\cntc_3}}
%  {\clsstmtc{c_\nthis}{\cntc_1}{\cntc_3}} \and
%\entry[LS-Seq--]{\clsstmtc{c_\nthis}{\cntc_1}{\cntc_3}\\\clsstmtc{c_\nthis}{\cntc_2}{\cntc_4}}
%  {\clsstmtc{c_\nthis}{\cntc_1\cseq\cntc_2}{\cntc_3\cseq\cntc_4}} \and
%\entry[LS-Par--]{\clsstmtc{c_\nthis}{\cntc_1}{\cntc_3}\\\clsstmtc{c_\nthis}{\cntc_2}{\cntc_4}}
%  {\clsstmtc{c_\nthis}{\cntc_1\parallel\cntc_2}{\cntc_3\parallel\cntc_4}}  \and
%
%
%\\
%\\
%\simpleentry[LS-Ext1]{\clsstmtm{\mf{f}{\mpair{c}{\cntc\parallel f'}{\cntc'\parallel f'}}}{\mf{f}{\mpair{c}{\cntc}{\cntc'}}} }\and
%\simpleentry[LS-Ext2]{\clsstmtc{}{(\cntc\parallel f)\cseq(\cntc'\parallel f)}{(\cntc\cseq\cntc')\parallel f}}\and
%\simpleentry[LS-Ext3]{\clsstmtc{}{(\cntc\parallel f)\cseq\pinull}{(\cntc\cseq\pinull)\parallel f}}\and
%\simpleentry[LS-Ext4]{\clsstmtc{}{(\cntc\parallel f)\cor(\cntc'\parallel f)}{(\cntc\cor\cntc')\parallel f}}
%\and
%\simpleentry[LS-GC]{\clsstmtm{\mf{f}{\mpair{c}{\pinull}{\pinull}}}{\pinull}
%\textcolor{red}{?? manca indice??}}
}
\caption{The later-stage relation}\label{fig:sr:lr}
\end{figure*}

We observe that the later-stage relation uses a substitution process that \emph{also performs a 
pattern matching operation} -- therefore it is partial because the pattern matching may fail. In particular, $\subst{\frs}{\frr}$ (i) extracts the cog names
and terms $\frs'$ in $\frs$ that corresponds to occurrences of cog names and record
variables in $\frr$ and (ii) returns the corresponding substitution.

\medskip

\begin{theorem}[Subject Reduction]\label{th_subjred1}
Let $\Delta\typep_R cn\typed \concntc$ and $cn\rightarrow cn'$.
Then there exist $\Delta'$,  $\concntc'$, and an injective renaming of 
cog names $\imath$ such that 
\begin{itemize}
\item[--] $\Delta' \typep_R cn'\typed \concntc'$ and
\item[--] $\imath(\concntc)\cls\concntc'$.
\end{itemize}
\end{theorem}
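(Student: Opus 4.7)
The plan is to proceed by induction on the derivation of $cn \to cn'$, which in effect amounts to a case analysis on the reduction rule applied (together with an inductive use of the \rulename{Context} rule to reduce subject reduction for a large configuration to subject reduction for the active sub-configuration, using the congruence properties of both $\typep_R$ via \rulename{TR-Parallel} and $\cls$ via \rulename{LS-Global}). For each rule of Figures~\ref{fig:sem1} and~\ref{fig:sem2}, I would first invert the typing derivation for $cn$ using the rules of Figures~\ref{fig:rntyp}, \ref{fig:inf:expR}, \ref{fig:inf:stmtR} to extract the typing judgments of the relevant components (objects, processes, futures, cogs, invocations), then construct the new environment $\Delta'$ and derive a typing $\Delta' \typep_R cn' \typed \concntc'$, and finally exhibit a renaming $\imath$ so that $\imath(\concntc) \cls \concntc'$.

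The cases naturally split into three groups. The first group consists of the purely local rules (\rulename{Skip}, \rulename{Assign-Local}, \rulename{Assign-Field}, \rulename{Cond-True/False}, \rulename{Read-Fut}, \rulename{Await-True}, \rulename{Release-Cog}, \rulename{Activate}, \rulename{Self-Sync-Return-Sched}, \rulename{Return}): here $\imath$ is the identity, $\Delta'$ differs from $\Delta$ only by updating a variable/field binding or flipping a future to a check-marked value, and $\concntc' $ is obtained from $\concntc$ using only the rules \rulename{LS-Delete}, \rulename{LS-Plus}, \rulename{LS-Empty}, or \rulename{LS-DepNull}. The second group consists of the rules that generate fresh names and create runtime entities: \rulename{New-Object}, \rulename{New-Cog-Object}, \rulename{Async-Call}, and the \Abs{await}/get/sync-call instantiations. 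Here the essential observation is that the typing rules \rulename{TR-New(Cog)}, \rulename{TR-AInvk}, \rulename{TR-Get}, \rulename{TR-Await} already pick fresh names $\i_f$, $\objb$ for the expected semantic effects, and these can be mapped via $\imath$ onto the actual runtime names $f$, $c'$ chosen by the semantics; the rule \rulename{LS-AInvk} (combined with \rulename{LS-Bind} for \rulename{Bind-Mtd}) then extrudes the freshly-introduced future from the caller's process contract into a separate $\mf{f}{\cdot}$ binding.

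The third and most delicate group is the method-binding/synchronous-call group: \rulename{Bind-Mtd}, \rulename{Cog-Sync-Call}, \rulename{Self-Sync-Call}, \rulename{Rem-Sync-Call}. For \rulename{Bind-Mtd}, inversion on the invocation part gives $\cinvoc{f}{\ccall{\C!\m}{\rec}{\vect{\rec}}{\rec'}}$; after reduction, the object has acquired a new process whose typing (via \rulename{TR-Process} applied to the bound body) yields exactly the instantiation of the method contract $\mf{f}{\mpair{c}{\cntc}{\cntc'}}\subst{\vect{c'}}{\vect{c}}\subst{\rec,\vect{\rec},\rec'}{\rec_\nthis,\vect{\rec_\nthis},\rec_\nthis'}$, which is the right-hand side of \rulename{LS-Bind}. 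The synchronous-call rules are essentially analogous, but the instantiated method contract must be accompanied by an additional $(c,c)^\aa$ or $(c,c')$ dependency (coming from the \Abs{await f?; x=f.get} continuation installed in the caller's queue), and this is exactly what \rulename{LS-SInvk} and \rulename{LS-RSInvk} provide.

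The main obstacle I expect is the bookkeeping in the method-binding case. To apply \rulename{LS-Bind} one must show that the instantiation produced by $\text{bind}$ at runtime coincides, up to $\imath$, with the one described syntactically by \rulename{LS-Bind}: namely, that (i) the substitution of the receiver/arguments records into the formal parameter records of the method contract is well-defined (which requires that the runtime records agree with the semi-unified contract signatures — this uses the correctness of the semi-unification solver that satisfies $\mathcal U$ in Theorem~\ref{th:wt_init}); (ii) the fresh cog names $\vect{c'}$ introduced by \rulename{LS-Bind} correspond to those produced by future \rulename{New-Cog-Object} reductions triggered inside the method body, which can be accommodated by extending $\imath$; and (iii) the unsynchronised part $\cntc'$ of the method contract ends up captured by $\rtcontract{\Delta',f}$ in the process typing, which requires a careful invariant on how $\Delta$ tracks the futures generated by the current thread. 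Establishing and propagating this invariant across the cases is the technical heart of the proof; once it is in place, each individual case reduces to a direct match between a semantic rule and one of the laws \rulename{LS-Bind}, \rulename{LS-AInvk}, \rulename{LS-SInvk}, or \rulename{LS-RSInvk}, together with the congruence and bookkeeping laws of $\cls$.
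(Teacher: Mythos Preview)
Your proposal is correct and follows essentially the same approach as the paper: a case analysis on the reduction rule applied, inverting the runtime typing derivation, rebuilding it for the reduced configuration, and matching the contract change to one of the \rulename{LS-*} laws.

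One small clarification worth noting: the renaming $\imath$ acts only on \emph{cog} names, and the paper observes that it is non-trivial \emph{only} in the \rulename{New-Cog-Object} case, where the fresh cog $c''$ chosen by \rulename{TR-NewCog} must be identified with the runtime-fresh $c'$. The passage from the inference-time placeholder $\i_f$ to the runtime future $f$ in \rulename{Async-Call} is not handled via $\imath$ but directly by the duplicated typing rules (\rulename{TR-Get} versus \rulename{TR-Get-Runtime}, \rulename{TR-Await} versus \rulename{TR-Await-Runtime}) together with the environment update $\Delta' = \Delta[f\mapsto(\frr,f')]$; one shows that the body's contract has the form $t[\cntc_{\i_f}/\i_f]$ and that after the step it becomes $t[f/\i_f]$, which is exactly the left-to-right move in \rulename{LS-AInvk}. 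Similarly, in \rulename{Bind-Mtd} the fresh cog names $\vect{c'}$ of \rulename{LS-Bind} do not correspond to future \rulename{New-Cog-Object} reductions but simply to the fresh names picked by the runtime typing of the bound body; since both sides pick fresh names, they can be chosen identical and $\imath$ is the identity there. Apart from this bookkeeping detail about where $\imath$ is actually needed, your decomposition into the three groups and your identification of the relevant \rulename{LS-*} rule for each case coincide with the paper's argument.
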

\begin{proof}
The proof is a case analysis on the reduction rule used in $cn\rightarrow cn'$ and
we assume that the evaluation of an expression $\feval{e}{\sigma}$
%of a well-typed expression gives a well-typed value
% and that the operator $\feval{\cdot}{\sigma}$ on pure expression finishes.
always terminates. We focus on the most interesting cases. We remark that the injective
renaming $\imath$ is used to identify fresh cog names that are created by the static
analysis with fresh cog names that are created by the operational semantics. In fact,
the renaming is not the identity only in the case of cog creation (second case below).

\begin{itemize}
\item {\em Skip Statement.}\vspace{-.8em}
$$\nredrule{Skip}{
	ob(o,a,\{l \mid \key{skip};s\},q) 
	\to {ob(o,a,\{l \mid s\},q)}}$$
By \rulename{TR-Object}, \rulename{TR-Process}, \rulename{TR-Seq} and \rulename{TR-Skip}, there exists $\Delta''$ and $\cntc$ such that
  $\inferstR{\Delta''}{c,o}{\key{skip};s}{}{\pinull\fatsemi\cntc}{}{\Delta''}$.
It is easy to see that $\inferstR{\Delta''}{c,o}{s}{}{\cntc}{}{\Delta''}$.
Moreover, by \rulename{LS-Delete}, we have $\clsstmtc{\ncog(o)}{\pinull\cseq\cntc}{\cntc}$ which proves that $\concntc\cls\concntc'$.

\item {\em Object creation.}\vspace{-.8em}
  $$\ntyperule{New-Object}{
   o' = \text{fresh}(\name{C}) \quad
   p=\text{init}(\name{C},o')
   \quad
   a'=\text{atts}(\name{C},\feval{\many{e}}{(a+l)}, c)
   }{
   ob(o,a,\{l \mid x=\key{new}\ \name{C}(\many{e});s\},q)\ cog(c, o)
   \\
   \to ob(o,a,\{l \mid x=o';s\},q)\ cog(c, o) \quad
   ob(o',a',\text{idle},\{p\})}$$
By \rulename{TR-Object} and \rulename{TR-Process},
there exists $\Delta''$ that extends $\Delta$ such that
  $\Delta''\typep_R^{c,o}\key{new}\ \name{C}(\many{e})\typed\frr,\,\pinull\mid\Delta''$.
Let  $\Delta'=\Delta[o'\mapsto\frr]$.
The theorem follows by the assumption that $p$ is empty (see Footnote~\ref{footnote.new}).

\item {\em Cog creation.}\vspace{-.8em}
  $$\ntyperule{New-Cog-Object}{
   c' = \text{fresh}(\,) \quad o' = \text{fresh}(\name{C}) \quad  
   p=\text{init}(\name{C},o') 
   \\  a'=\text{atts}(\name{C},\feval{\many{e}}{(a+l)},c')
   }{
   ob(o,a,\{l \mid x=\key{new}\ \key{cog}\ \name{C}(\many{e});s\},q)
   \\
   \to ob(o,a,\{l \mid x=o';s\},q)\quad ob(o',a',p,\emptyset)\quad cog(c', o')
   }$$
By \rulename{TR-Object} and \rulename{TR-Process},
there exists $\Delta''$ that extends $\Delta$ such that
  $\Delta''\typep_R^{c,o}\key{new}\ \name{C}(\many{e})\typed[cog:c'',\bar{x:\frr}],\,\pinull\mid\Delta''$
  for some $c''$ and records $\bar{\frr}$.
Let $\Delta'=\Delta[o'\mapsto[cog:c',\bar{x:\frr}],c'\mapsto \textit{cog}]$
 and $\imath(c'')=c'$, where $\imath$ is an injective renaming on cog names.
The theorem follows by the assumption that $p$ is empty (see Footnote~\ref{footnote.new}).

\item {\em Asynchronous calls.}\vspace{-.8em}
 $$\ntyperule{Async-Call}{
  o'=\feval{e}{(a+l)}
  \quad \many{v}=\feval{\many{e}}{(a+l)}\quad f = \text{fresh}(~)}
  {ob(o,a,\{l \mid x=e \async \m(\many{e});s\},q) \\\to  ob(o,a,\{l \mid x=f;s\},q) \
  \invoc(o',f,\m,\many{v})\ \fut(f,\bot)}$$
By \rulename{TR-Object} and \rulename{TR-Process}, 
there exist $\bar{\frr}$, $\Delta_1'$, $\cntc$ and $\concntc''$ such that (let $f'=l(\text{destiny})$)

\smallskip

\begin{itemize}
\item  $\concntc=\cproc{\ncog(o)}{f'}{\cntc}{\rtcontract{\Delta_1',f'}}\fpar\concntc''$ 
\item  $\Delta\vdash_{R}^{c,o} \bar{v} :\bar{\frx}$ (with $l=[\bar{y \mapsto v}]$)
\item  $\Delta\vdash_{R}^{c,o} q :\concntc''$ 
\item  $\Delta[\bar{y\mapsto\frr}]\vdash_R^{c,o} x=e!m(\many{e});s\typed\cntc\mid\Delta_1'$
\end{itemize}
Let $\Delta_1=\Delta[\bar{y\mapsto\frr}]$: by either \rulename{TR-Var-Record} or \rulename{TR-Field-Record} and {\sc (TR-AInvk)},
there exist $\frr=\fRec{c'}{\frr'}$ (where $c'$ is the cog of the record of $e$), $\i_f$ and $\cntc_{\i_f}$ such that
  $\Delta_1\typep_R^{c,o} e!m(\many{e})\typed \i_f,\pinull\mid \Delta_1[\i_f\mapsto(\frr,\cntc_{\i_f})]$.
%Let $\Delta(f')=(\frs,\cntc')$, $\Delta_k=[f'\mapsto(\frs,\cntc'\cpar f)][f\mapsto(\frr,\pinull)]$.
By construction of the type system (in particular, the rules \rulename{TR-Get$^{*}$} and \rulename{TR-Await$^{*}$}),
 there exists a term $t$ such that $\cntc=t\subst{\cntc_{\i_f}}{\i_f}$ and such that $\inferstR{\Delta_1[f\mapsto(\frr,f')]}{c,o}{x=f;s}{}{t\subst{f}{\i_f}}{}{\Delta_2'}$
 (with $\Delta_2'\definedas \Delta_1'\setminus\{\i_f\}[f\mapsto(\frr,f')^{[\checkmark]}]$ and $[\checkmark]=\checkmark$ iff $\Delta_1'(\i_f)$ is checked).
By construction of the {\it rt\_unsync} function, there exist a term $t'$ such that 
  $\rtcontract{\Delta_1'}=t'\subst{\cntc_{\i_f}}{\i_f}$ and $\rtcontract{\Delta_2'}=t'\subst{f}{\i_f}$.

Finally, if we note $\Delta'\definedas\Delta[f\mapsto(\frr,f')]$, we can type the invocation message with $\cinvoc{f}{\cntc_{\i_f}}$ (as $c'$ is the cog of the record of $\ethis$ in $\cntc'$), we have
\begin{itemize}
\item $\Delta'\typep_R cn'\typed\cproc{c}{f'}{t\subst{f}{\i_f}}{t'\subst{f}{\i_f}}\cpar\cinvoc{f}{\cntc_{\i_f}}\cpar\concntc''$
\item the rule \rulename{LS-AInvk} gives us that $$\concntc\cls\cproc{c}{f'}{t\subst{f}{\i_f}}{t'\subst{f}{\i_f}}\cpar\cinvoc{f}{\cntc_{\i_f}}\cpar\concntc''$$
\end{itemize}

 \item \emph{Method instantiations}.\vspace{-.8em}
\[
\ntyperule{Bind-Mtd}{
	\{ l \mid s \} =\text{bind}(o,f,\m,\many{v},\text{class}(o))
	}{
	ob(o,a,p,q)\  invoc(o,f,\m,\many{v})\to ob(o,a,p,q \cup \{ l \mid s \} )
	}
\]

 By assumption and rules \rulename{TR-Parallel} and \rulename{R-Invoc} we have  $\Delta(o)=(\C,\frr)$, $\Delta(f)=(\fRec{c}{\frr'},\pinull)$, $c=\ncog(\frr)$
 and $\concntc=\cinvoc{f}{C!\m\ \frr(\many{\frr})\rightarrow\frr'}\cpar \concntc'$ with 
   $\Delta\vdash_R \nt{invoc}(o,f,m,\many{v}):\cinvoc{f}{\C!\m~\frr(\many{\frr})\rightarrow\frr'}$ and
   $\Delta\vdash_R ob(o,a,p,q):\concntc'$.
 Let $\many{x}$ be the formal parameters of {\m} in $\C$.
The auxiliary function $\textit{bind}(o,f,m,\many{v},\C)$ returns a process $\{[\text{destiny}\mapsto f, \many{x}\mapsto \many{v}] \; | \; s\}$.
It is possible to demonstrate that  $\Delta\vdash_R^{c,o} \{l[\text{destiny}\mapsto f, \many{x}\mapsto \many{v}]|s\}:\cproc{c}{f}{\cntc_{\m}}{\cntc'_{\m}}$, where $\Delta(\C.\m)
 = \frs(\bar{\frs}) \{\pairl{\cntc_{0}}{\cntc'_{0}}\} \frs'$ and 
 $\cntc_{\m}=
 \cntc_0\subst{\bar{c}}{\bar{c'}}\subst{\frr,\bar{\frr}}{\frs,\bar{\frs}}$ and 
$\bar{c'}\in \frs'\setminus(\frs\cup\bar{\frs})$ with $\bar{c} \fresh$ and $\cntc'_{\m}=\cntc'_0\subst{\bar{c}}{\bar{c'}}\subst{\frr,\bar{\frr}}{\frs,\bar{\frs}}$.

By rules \rulename{TR-Process} and \rulename{TR-Object}, it follows
that $\Delta\vdash_R ob(o,a,p,q\cup\{\textit{bind}(o,f,m,\many{v},\C)\}): \concntc'\fpar\cproc{c}{f}{\cntc_{\m}}{\cntc'_{\m}}$.
Moreover, by applying the rule \rulename{LS-Bind}, we have that $\cinvoc{f}{\C!\m~\frr(\many{\frr})\rightarrow\frr'}\cls \cproc{c}{f}{\cntc_{\m}}{\cntc'_{\m}}$
 which implies with the rule \rulename{LS-Global} that $\concntc\cls\concntc'$.

 \item {\em Getting the value of a future}.
$$\ntyperule{Read-Fut}{
	f = \feval{e}{(a+l)}  \quad v \neq \bot
	}{ob(o,a,\{l \mid x=e.\key{get};s\},q) \ \textit{fut}(f,v)
\to\\ ob(o,a,\{l \mid x=v;s\},q)\ \textit{fut}(f,v)}
$$

By assumption and rules \rulename{TR-Parallel}, \rulename{TR-Object} and \rulename{TR-Future-Tick},
 there exists $\Delta''$, $\cntc$, $\concntc''$ such that (let $f'=$l[\text{destiny}])
\begin{itemize}
\item[--] $\Delta\vdash_R^{\ncog(o),o}\{l|x=e.\key{get};s\}\typed \cproc{\ncog(o)}{f'}{\cntc}{\rtcontract{\Delta'',f'}}$
\item[--] $\Delta\vdash_R^{\ncog(o),o}q\typed\concntc''$
\item[--] $\Delta\vdash_R \textit{fut}(f,v):\pinull$
\item[--] $\concntc=\cproc{\ncog(o)}{f'}{\cntc}{\rtcontract{\Delta'',f'}}\cpar \concntc''$, and 
\item[--] $\feval{e}{a\circ l}=f$. 
\end{itemize}
Moreover, as $\textit{fut}(f,v)$ is typable and contains a value, we know that $\cntc=\pinull\fatsemi\cntc'$ ($e.\key{get}$ has contract $\pinull$).
With the rule \rulename{TR-Pure}, have that
$\Delta\vdash_R^{\ncog(o),o}\{l|x=v;s\}\typed \cproc{\ncog(o)}{f'}{\cntc}{\rtcontract{\Delta'',f'}}$,
and with $\concntc'=\concntc$, we have the result.

\item \emph{Remote synchronous call}.  Similar to the cases of asynchronous
   call with a \Abs{get}-synchronisation. The result follows, in particular, from  rule \rulename{LS-RSInvk} of Figure~\ref{fig:sr:lr}.
\item \emph{Cog-local synchronous call}. 
  Similar to case of asynchronous call. The result follows, in particular, from rules \rulename{LS-SimpleNull} of Figure~\ref{fig:sr:lr} and from the Definition of $\C.\m\ \frr(\bar{\frr})\rightarrow\frs$.
\item \emph{Local Assignment}.
$$\ntyperule{Assign-Local}{
	x \in \dom(l)  \quad  v=\feval{e}{(a+ l)}
	}{
	ob(o,a,\{l \mid x=e;s\},q) \\
	\to {ob(o,a,\{l[x\mapsto v] \mid s\},q)}}
$$
By assumption and rules \rulename{TR-Object}, \rulename{TR-Process}, \rulename{TR-Seq}, \rulename{TR-Var-Record} and \rulename{TR-Pure},
 there exists $\Delta''$, $\cntc$, $\concntc''$ such that
 (we note $\Delta_1$ for $\Delta[\vect{y:\frx}]$ and $f$ for $l[\text{destiny}]$)
\begin{itemize}
\item[--] $\Delta\vdash_R^{\ncog(o),o}\{l|x=e;s\}\typed \cproc{\ncog(o)}{f}{\pinull\fatsemi\cntc}{\rtcontract{\Delta'',f}}$
\item[--] $\Delta\vdash_R^{\ncog(o),o}q\typed\concntc''$
\item[--] $\concntc=\cproc{\ncog(o)}{f}{\cntc}{\rtcontract{\Delta'',f}}\cpar \concntc''$, and 
\item[--] $\feval{e}{a\circ l}=v$. 
\end{itemize}
We have
$$\Delta\vdash_R^{\ncog(o),o}\{l[x\mapsto\feval{e}{(a+ l)}]|s\}\typed \cproc{\ncog(o)}{f}{\cntc}{\rtcontract{\Delta'',f}}$$
 which gives us the result with $\rcntc'=\mf{f}{\mpair{c}{\cntc}{\rtcontract{\Delta'',f}}}\cpar\rcntc'$.
%We have three cases.
%Either $l(x)$ is not a future, in which case we have 
%$$\Delta\vdash_R^{\ncog(o),o}\{l[x\mapsto\feval{e}{(a+ l)}]|s\}\typed \cproc{\ncog(o)}{f}{\cntc}{\rtcontract{\Delta''}}$$
% which gives us the result with $\rcntc'=\mf{f}{\mpair{c}{\cntc}{\rtcontract{\Delta''}}}\cpar\rcntc'$.
%Either $l(x)$ is a future $f'$ and $f'\in \Im(l[x\mapsto\feval{e}{(a+l)}])\cup\ff(s)$: $\nt{cn'}$ is then typed as in the previous case.
%Finally, let suppose that $l(x)$ is a future $f'$ and $f'\not\in \Im(l[x\mapsto\feval{e}{(a+l)}])\cup\ff(s)$.
%Note $\Delta_1'$ for $\Delta_{|(\vect{\frx}\setminus\{f'\})\cup\ff(s)}[\vect{y:\frx}]$:
% we have that $\rtcontract{\Delta_1'}=\rtcontract{\Delta_1}\setminus\{f'\}$: $s$ is now typed with a variant $\cntc'$ of $\cntc$,
% where all references to $f'$ have been deleted.
%By construction, $\cntc$ cannot contain a synchronisation with $f'$, and because it is well-formed, we can apply the rules \rulename{LS-Ext$^{*}$} to get $\cntc'$, which gives us the result.

\iffalse
\item \emph{Return Statement}.
$$\ntyperule{Return}{
	v = \feval{e}{(a+l)}  \quad f = l(\text{destiny}) }
	{ob(o,a,\{l \mid \key{return}\ e\},q) \ \textit{fut}(f,\bot)\\
	\to ob(o,a,\nidle,q) \ \textit{fut}(f,v)
	}
$$
\fi
\end{itemize}
\qed
\end{proof}

%%% Local Variables:
%%% mode: latex
%%% TeX-master: "SoSyM"
%%% End:

\section{Properties of Section~\ref{sec.contractanalysis}}\label{sec:analysis-proof}
%!TEX root = SoSyM.tex

% 1. Semantics of a contract (in normal form, i.e. the contract must be well-formed).
In this section, we will prove that the statements given in Section~\ref{sec.contractanalysis} are correct, i.e. that the fixpoint analysis does detect deadlocks.
To prove that statement, we first need to define the dependencies generated by the runtime contract of a running program.
Then, our proof works in three steps:
 i) first, we show that our analysis (performed at static time) contains all the dependencies of the runtime contract of the program;
 ii) second that the dependencies in a program at runtime are contained in the dependencies of its runtime contract;
 and finally iii) when $\nt{cn}$ (typed with $\rcntc$) reduces to $\nt{cn}'$ (typed with $\rcntc'$), we prove that the dependencies of $\rcntc'$ are contained in $\rcntc$.
Basically, we prove that the following diagram holds:\vspace{.3em}
\newcommand{\nodedistance}{1.8em}
\hspace*{2em}\begin{tikzpicture}[->,>=stealth',shorten >=1pt,auto,scale=0.8, every node/.style={scale=0.8},node distance=.9cm, semithick]
\node (P) {P};
\node (CN1) [right=of P,xshift=3em]   {$\nt{cn}_1$};
\node (CNd) [right=of CN1,xshift=2em] {$\dots$};
\node (CNN) [right=of CNd,xshift=2em] {$\nt{cn}_n$};

\node (CC)  [below=of P] {$\pair{\cntc}{\cntc'}$};
\node (K1)  [below=of CN1, xshift=-\nodedistance] {$\rcntc_1$};
\node (A1)  [below=of CN1, xshift=\nodedistance] {$A_1$};
\node (KAd) [below=of CNd,yshift=-.3em] {$\dots$};
\node (KN)  [below=of CNN, xshift=-\nodedistance] {$\rcntc_n$};
\node (AN)  [below=of CNN, xshift=\nodedistance] {$A_n$};

\node (LL0) [below=of CC,yshift=.2em]  {$\pairl{\LM}{\LM'}$};
\node (LL1) [below=of K1]  {$\pairl{\LM_1}{\LM_1'}$};
\node (LLd) [below=of KAd,yshift=-.7em] {$\dots$};
\node (LLN) [below=of KN]  {$\pairl{\LM_n}{\LM_n'}$};

\draw[->] (P) -- (CN1);
\draw[-] (CN1) -- (CNd);
\draw[->] (CNd) -- (CNN);

\draw[->,dashed] (P) -- (CC);
\draw[->] (CC) -- (LL0);

\draw[->,dashed] (CN1) -- (K1);
\draw[->,dashed] (CN1) -- (A1);
\draw[->] (K1) -- (LL1);

\draw[->,dashed] (CNN) -- (KN);
\draw[->,dashed] (CNN) -- (AN);
\draw[->] (KN) -- (LLN);

\node[scale=1.3] (INC1) at ($(LL0) !.5! (LL1)$) {$\Supset$};
\node[scale=1.3] (INCd) at ($(LL1) !.5! (LLd)$) {$\Supset$};
\node[scale=1.3] (INCN) at ($(LLd) !.5! (LLN)$) {$\Supset$};

\path[-,white] (LL1) edge node [sloped,anchor=center,black,scale=1.3] {$\Supset$} (A1)
               (LLN) edge node [sloped,anchor=center,black,scale=1.3] {$\Supset$} (AN);

\end{tikzpicture}

Hence, the analysis $\pair{\LM}{\LM'}$ contains all the dependencies $A_i$ that the program can have at runtime, and thus, if the program has a deadlock, the analysis would have a circularity.

\medskip

In the following, we introduce how we compute the dependencies of a runtime contract.
This computation is difficult in general, but in case the runtime contract is as we constructed it in the subject-reduction theorem, then the definition is very simple.
First, let say that a contract $\cntc$ that does not contain any future is {\em closed}.
It is clear that we can compute $\cntc(\act_{[n]})$ when $\cntc$ is closed.
\begin{proposition}
Let $\Delta\typep \nt{cn}\typed \rcntc$ be a typing derivation constructed as in the proof of Theorem~\ref{th_subjred1}.
Then $\rcntc$ is well formed and $\fsimpl{\rcntc}=\mf{f_{\it start}}{\mpair{{\rm start}}{\cntc}{\cntc'}}$ where $\cntc$ and $\cntc'$ are closed.
\end{proposition}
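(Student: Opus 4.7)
\bigskip

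\noindent\textbf{Proof plan.}
The plan is to proceed by induction on the length of the reduction sequence from the initial configuration to $\nt{cn}$, using Theorem~\ref{th:wt_init} for the base case and Theorem~\ref{th_subjred1} (together with the shape of the later-stage relation) for the inductive step. Well-formedness of $\rcntc$ is immediate from Lemma~\ref{lem:rtkwf}, so the real work is to show that $\fsimpl{\rcntc}$ collapses to a single outer binder for $f_{\it start}$.

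First I would isolate the structural invariant that each typing derivation produced as in Theorem~\ref{th_subjred1} satisfies, namely: (i) $\rcntc$ is a parallel composition of components of the form $\mf{f}{\mpair{c}{\cntc}{\cntc'}}$ and $\cinvoc{f}{\ccall{\C!\m}{\frr}{\vect{\frr}}{\frr'}}$; (ii) the futures declared in these components are pairwise distinct, with $f_{\it start}$ declared exactly once; (iii) every declared $f \neq f_{\it start}$ occurs in $\names{\rcntc}$ inside exactly one other component; (iv) $f_{\it start} \notin \names{\cntc}$ for every such inner contract. Property (iv) reflects the fact that nothing in the program can synchronize on the main function's own future. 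Properties (ii)--(iv) together say that the ``depends-on'' relation on futures is a forest whose only root is $f_{\it start}$, and no cycles exist (so the non-recursiveness half of well-formedness is strengthened to a tree shape).

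The base case is immediate from Theorem~\ref{th:wt_init}: the initial configuration is typed by $\mf{f_{\it start}}{\mpair{{\rm start}}{\sigma(\cntc)}{\sigma(\cntc')}}$, the two inner contracts contain only static future names $\i_f$ (not runtime $f$'s), and since $\rewritet$ rewrites only on runtime futures, $\fsimpl{\rcntc} = \rcntc$ already has the stated shape with closed $\sigma(\cntc), \sigma(\cntc')$. For the inductive step, I would go through the cases of Theorem~\ref{th_subjred1} and verify that each later-stage rule preserves the invariant: \rulename{LS-AInvk}, \rulename{LS-SInvk}, \rulename{LS-RSInvk}, and \rulename{LS-Bind} introduce fresh futures $f'$ that appear in exactly one producer component and exactly one consumer component; the synchronization/scheduling rules replace a future occurrence by $\pinull$ via \rulename{LS-DepNull} or \rulename{LS-Fut} without breaking unique-reference-ness; and no rule ever places an occurrence of $f_{\it start}$ in the body of another component.

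Once the invariant is established, the second half of the claim follows by a termination-and-confluence argument on $\rewritet$. Because the depends-on structure is a finite tree rooted at $f_{\it start}$, I would perform rewriting in leaves-first order: pick any declared $f \neq f_{\it start}$ that is a leaf of the tree (its inner contracts contain no declared future other than itself); since $f$ appears in exactly one other component by (iii), the appropriate $\rewritet$ rule fires and eliminates the binder of $f$, strictly decreasing the number of declared futures while preserving the tree invariant. Iterating collapses everything into the single component $\mf{f_{\it start}}{\mpair{{\rm start}}{\cntc}{\cntc'}}$, and by construction $\cntc, \cntc'$ no longer contain any $f$, i.e.\ are closed. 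Confluence of $\rewritet$ on non-recursive runtime contracts (already noted in the paper) ensures that any normalization order yields the same result.

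The step I expect to be the main obstacle is verifying invariant (iii)--(iv) through all the cases of the later-stage relation; in particular the rules that bind a fresh future and simultaneously rewrite an inner contract require care to show that the newly created future reference lands in exactly one component and that no previously-existing occurrence of some other future is duplicated. Once this bookkeeping is done, the tree structure and hence the closed-normal-form conclusion come out cleanly.
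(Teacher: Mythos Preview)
Your proposal is correct and follows essentially the same approach as the paper: well-formedness via Lemma~\ref{lem:rtkwf}, and the normal-form claim via the observation that the ``which future references which'' relation is a directed tree rooted at $f_{\it start}$, so $\rewritet$ collapses everything into the single $f_{\it start}$-component. The paper's proof is a two-sentence sketch that attributes the tree shape directly to the way new futures are recorded in $\Delta'$ and then picked up by $\rtcontract{\cdot}$ in the parent process; your version makes the same tree invariant explicit, phrases it as properties (ii)--(iv), and verifies it by induction on the reduction sequence through the later-stage rules, which is a more detailed rendering of the same argument rather than a different route.
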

\begin{proof}
The first property is already stated in Lemma~\ref{lem:rtkwf}.
The second property comes from the fact that when we create a new future $f$ (in the {\em Asynchronous calls} case for instance), we map it in $\Delta'$ to its father process,
  which will then reference $f$ because of the $\rtcontract{\cdot}$ function.
Hence, if we consider the relation of which future references which other future in $\rcntc$, we get a dependency graph in the shape of a directed tree, where the root is $f_{\it start}$.
So, $\fsimpl{\rcntc}$ reduces to a simple pair of contract of the form $\mf{f_{\it start}}{\mpair{{\rm start}}{\cntc}{\cntc'}}$ where $\cntc$ and $\cntc'$ are closed.\qed
\end{proof}

In the following, we will suppose that all runtime contracts $\rcntc$ come from a type derivation constructed as in Theorem~\ref{th_subjred1}.

\begin{definition}
The {\em semantics} of a closed runtime pair (unique up to remaning of cog names) for the saturation at $i$, noted $\csem{n}{\mf{f}{\mpair{c}{\cntc}{\cntc'}}}$, is defined as
  $\csem{n}{\mf{f}{\mpair{c}{\cntc}{\cntc'}}}=(\cntc(\act_{[n]})_{c})\cseq (\cntc'(\act_{[n]})_{c})$.
We extend that definition for any runtime contract with $\csem{n}{\rcntc}\definedas\csem{n}{\fsimpl{\rcntc}}$.
\end{definition}

% 2. Prove that the semantics at runtime is included in the semantics at static time
Now that we can compute the dependencies of a runtime contract, we can prove our first property:
 the analysis performed at static time contains all the dependencies of the initial runtime contract of the program
 (note that $\sigma(\cntc)(\actn{n})\cseq \sigma(\cntc')(\actn{n})$ is the analysis performed at static time,
  and $\csem{n}{\sigma(\mf{f_{\it start}}{\mpair{{\rm start}}{\cntc}{\cntc'}})}$ is the set of dependencies of the initial runtime contract of the program):
\begin{proposition}\label{prop:ap1}
Let $P=\bar{I}\ \bar{C}\ \{\many{T\ x}; s\}$ be a \coreABS\  program and let $\inferpp{\Gamma}{P}{\cct,\,\pair{\cntc}{\cntc'}}{\mathcal{U}}$.
Let also $\sigma$ be a substitution satisfying $\mathcal{U}$.
Then we have that $\csem{n}{\sigma(\mf{f_{\it start}}{\mpair{{\rm start}}{\cntc}{\cntc'}})}\Subset\sigma(\cntc)(\actn{n})\cseq \sigma(\cntc')(\actn{n})$.
\end{proposition}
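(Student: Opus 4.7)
The plan is to establish this inclusion by direct unfolding of definitions, observing that the two sides compute essentially the same lam — modulo the choice of fresh cog names during the fixpoint computation.

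First, I would show that on the initial runtime contract $\sigma(\mf{f_{\it start}}{\mpair{{\rm start}}{\cntc}{\cntc'}})$ the flattening process is trivial. By the proposition immediately preceding the statement to be proved, the main-function contracts $\sigma(\cntc)$ and $\sigma(\cntc')$ are closed (they contain no nested runtime future names apart from $f_{\it start}$ itself, since they arise from typing the main function via rule \rulename{T-Program}, where the only futures introduced via $\Gamma$ are static placeholders, not runtime names). Consequently no rewriting rule of $\rewritet$ from Figure~\ref{def.fsimpl} is applicable, and $\fsimpl{\sigma(\mf{f_{\it start}}{\mpair{{\rm start}}{\cntc}{\cntc'}})} = \sigma(\mf{f_{\it start}}{\mpair{{\rm start}}{\cntc}{\cntc'}})$.

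Next, I would apply the definition of $\csem{n}{\cdot}$ for closed runtime pairs, yielding
\[
\csem{n}{\sigma(\mf{f_{\it start}}{\mpair{{\rm start}}{\cntc}{\cntc'}})} \;=\; \sigma(\cntc)(\act_{[n]})_{\rm start} \,\cseq\, \sigma(\cntc')(\act_{[n]})_{\rm start},
\]
which is, notationally, the right-hand side $\sigma(\cntc)(\actn{n}) \cseq \sigma(\cntc')(\actn{n})$ of the desired inclusion (with ${\rm start}$ as the enclosing cog of the main function, as fixed by the rule \rulename{T-Program} in the inference of $\pairl{\cntc}{\cntc'}$). Hence both sides denote the same transformation applied to the same contract pair, starting from the same cog.

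The only delicate point is that the computation of $\cntc(\act_{[n]})_c$ introduces \emph{fresh} cog names at each expansion of an invocation (the substitutions $\subst{\overline{b'_{\D,\n}}}{\overline{b_{\D,\n}}}$ in Figure~\ref{fig:lamTrans}); two independent evaluations may therefore select different fresh names. However, since $\Subset$ is defined modulo an injective renaming $\kappa$ of cog names, any such discrepancy is absorbed. Concretely, I would either (i) fix a deterministic fresh-name generator so that both computations produce literally the same lam, in which case $\Subset$ follows by reflexivity with $\kappa$ the identity; or (ii) define $\kappa$ explicitly as the bijection between the two streams of fresh names, noting that this $\kappa$ is the identity on the names appearing in $\sigma(\cntc)$, $\sigma(\cntc')$ and ${\rm start}$, so that it meets the requirements of the definition of $\Subset$. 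The main obstacle is essentially a bookkeeping one, not a conceptual one — the content of the proposition is a sanity check stating that our semantics of runtime contracts agrees, on the initial configuration, with the static analysis.
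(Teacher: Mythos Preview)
Your proposal is correct and is essentially the same argument as the paper's, which states only that the result ``is direct with an induction on $\cntc$ and $\cntc'$, and with the fact that $+$, $\fatsemi$ and $\rfloor$ are monotone with respect to $\Subset$.'' You phrase it as a direct unfolding of the definition of $\csem{n}{\cdot}$ (noting that $\fsimpl{\cdot}$ is the identity on the initial runtime contract and that both sides then compute literally the same lam transformation, up to the choice of fresh cog names absorbed by $\Subset$), whereas the paper phrases the same observation as a structural induction showing the two computations agree at each contract constructor; the monotonicity remark in the paper is what justifies that agreement modulo renaming at each step, which you handle globally via the injective map $\kappa$.
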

\begin{proof}
The result is direct with an induction on $\cntc$ and $\cntc'$, and with the fact that $\cor$, $\fatsemi$ and $\cpar$ are monotone with respect to $\Subset$.
\qed
\end{proof}

% 3. Prove that the semantics at runtime includes the dependencies of the program
We now prove the second property: all the dependencies of a program at a given time is included in the dependencies generated from its contract.
\begin{proposition}\label{prop:ap2}
Let suppose $\Delta\typep_R \nt{cn}\typed \rcntc$ and let $A$ be the set of dependencies of $\nt{cn}$.
Then, with $\csem{n}{\rcntc}=\pair{\LM}{\LM'}$, we have $A\subset\LM$.
\end{proposition}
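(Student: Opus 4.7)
The plan is to proceed by structural induction on the typing derivation $\Delta \vdash_R \nt{cn} \typed \rcntc$, with the induction invariant strengthened to account for the fact that raw runtime contracts refer to future names and get ``merged'' by the rewriting relation $\rewritet$. Concretely, I would first prove a merging lemma showing that if $\nt{cn}$ contains a dependency $(c,c')^{[\aa]}$ witnessed by a thread whose destiny chain traverses futures $f_k, f_{k-1}, \ldots, f_0$ (with $f_0$ being the blocking \Abs{get}/\Abs{await} target), then after normalising $\rcntc$ via $\fsimpl{\cdot}$ the corresponding pair appears as an atomic $\pinull.(c,c')^{[\aa]}$ sub-term of one of the closed contracts indexed by $f_{\it start}$. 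This lemma is the real content of the statement, because the typing rules (in particular \rulename{TR-Get}, \rulename{TR-Get-Runtime}, \rulename{TR-Await} and \rulename{TR-Await-Runtime}) insert exactly the dependency pair in the contract that types the current process, but expressed as $\cntc \seqpoint (\obj, \objb)^{[\aa]}$ where $\cntc$ may still contain the unresolved future name. The merging lemma reduces this to a closed contract by repeatedly applying the $\rewritet$ rules until all future names are eliminated.

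Next, I would inspect the cases of Definition~\ref{def.obj-circularity}. Fix $(c,c')^{[\aa]} \in A$. By definition it arises from an object $\nt{ob}(o,a,\{l|x = e.\get; s\}, q)$ (resp.~$\{l|\await\ e?;s\}$) on cog $c$ together with a witnessing element on cog $c'$ that is either $\nt{fut}(f,\bot)$ and an activity with $\text{destiny} = f$, or an $\nt{invoc}(o',f,\m,\vect v)$. In the first sub-case, the rule \rulename{TR-Process} types the blocked process with a contract of the form $\cntc_0 \seqpoint (c,c')^{[\aa]} \rfloor \rtcontract{\Delta', \cdot}$ (after applying \rulename{TR-Get-Runtime} or \rulename{TR-Await-Runtime}), where the cog $c'$ of the awaited record is exactly the root of $\Delta(f)$; by \rulename{TR-Future} the typing of $\nt{fut}(f,\bot)$ is $\pinull$, and the target thread (with $l'(\text{destiny})=f$) is typed as $\cproc{f}{c'}{\cntc_f}{\cntc'_f}$. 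In the second sub-case, \rulename{TR-Invoc} produces $\cinvoc{f}{\C!\m\,\frr(\bar\frs)\rightarrow\frr'}$, which $\rewritet$ will substitute into any contract referring to $f$. Either way, after normalisation the atomic pair $(c,c')^{[\aa]}$ appears inside the closed contract attached to $f_{\it start}$.

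The second step of the induction is to track this atomic pair through the semantic function $\csem{n}{\cdot}$. Since $\csem{n}{\rcntc} = \pair{\cntc_s}{\cntc_u}(\actn n)$ for the normalised form, I would argue by structural induction on closed contracts that every atomic term $\pinull.(c,c')^{[\aa]}$ occurring in the synchronised component of a pair, or within a subterm of the shape $\C?.\m\,\frr(\vec\frs)\rightarrow\frr'\seqpoint(c,c')^{[\aa]}$, contributes the pair $(c,c')^{[\aa]}$ to at least one relation in $\LM$. For the basic constructors $+$, $\fatsemi$, $\rfloor$ and $\addition$ this is immediate from the definitions in Figure~\ref{fig:lamOp} (all operations preserve or enlarge the set of pairs in each relation). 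For method invocations, by definition of $\cntc(\actn n)_c$ (rules \rulename{L-SInvk}, \rulename{L-RSInvk}, \rulename{L-AInvk}, \rulename{L-GAInvk}) the dependency $(c,c')^{[\aa]}$ attached to the invocation is extended to the resulting pair of lams via $\addition$, and the body lam is obtained from a fixpoint approximation that is, by construction, monotone. Hence the atomic pair is preserved.

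The principal obstacle is the careful bookkeeping in the merging lemma. The contract typing the blocked process records the dependency against a future $f$ that may itself be bound in an enclosing process' contract; unfolding $\rewritet$ requires knowing that the topology induced by the $\rtcontract{\cdot}$ function is a tree rooted at $f_{\it start}$, as noted in the paragraph preceding the statement. I would therefore explicitly exploit well-formedness of $\rcntc$ (Lemma~\ref{lem:rtkwf}) and the tree-shape of the destiny relation to justify that the normal form $\fsimpl{\rcntc}$ is unique and embeds every dependency produced by the runtime typing rules as an atomic dependency pair in a closed contract. Once this is established, the induction closes and yields $A \subset \LM$.
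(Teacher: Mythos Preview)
Your plan is correct and follows the same line as the paper's proof, but you are doing considerably more work than the paper does. The paper's argument is a direct one-paragraph sketch: take a dependency $(c,c')$ of $\nt{cn}$, identify the witnessing object $\nt{ob}(o,a,\{l\mid x=e.\get;s\},q)$ and the accompanying $\nt{fut}(f,\bot)$ plus target process, apply the runtime typing rules \rulename{TR-Object}, \rulename{TR-Process}, \rulename{TR-Seq}, \rulename{TR-Get-Runtime} to read off that the contract of this object is $\mf{l(\text{destiny})}{\mpair{a(\cog)}{f.(c,c')\fatsemi\cntc_s}{\cntc_s'}}\parallel\rcntc_q$, and then invoke the lam transformation rule \rulename{L-GAinvk} to conclude $(c,c')\in\LM$. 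That is the entire proof.

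Two remarks on the comparison. First, your outer ``structural induction on the typing derivation'' is not the right scaffolding: a dependency in the sense of Definition~\ref{def.obj-circularity} is witnessed by two (or three) distinct configuration elements that may live in different branches of a \rulename{TR-Parallel} node, so induction on the derivation does not decompose the problem usefully. What you actually need, and what you in fact do in your detailed case analysis, is to pick the witnessing elements directly and read off their contracts---this is exactly the paper's approach. Second, your merging lemma and the careful tracking through $\fsimpl{\cdot}$ and the semantic function are genuine content that the paper glosses over; the paper simply asserts that the dependency pair survives into $\LM$ via \rulename{L-GAinvk} without spelling out the $\rewritet$-normalisation or the preservation through the lam operators. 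Your version is more rigorous on this point, and you also cover the $\await$ and $\nt{invoc}$ sub-cases of Definition~\ref{def.obj-circularity} that the paper's sketch handles only implicitly.
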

\begin{proof}
By Definition~\ref{def.obj-circularity}, if $\nt{cn}$  has a dependency $(c,c')$,
 then there exist $\nt{cn}_1=\nt{ob}(o,a,\{l | x=e.\get;s\},q)\in\nt{cn}$, $\nt{cn}_2=\nt{fut}(f,\bot)\in\nt{cn}$
 and $\nt{cn}_3=\nt{ob}(o',a',p',q') \in \nt{cn}$ such that $\feval{e}{(a+l)} = l'(\text{destiny}) = f$, $\{l'\mid s'\}\in p'\cup q'$ 
 and $a(\nt{cog})=c$ and $a'(\nt{cog})=c'$.
By runtime typing rules \rulename{TR-Object}, \rulename{TR-Process}, \rulename{TR-Seq} and \rulename{TR-Get-Runtime},
 the contract of $\nt{cn}_1$ is $$\mf{l(\nt{destiny})}{\mpair{a(\ncog)}{\csync{f}{\cdeps{c}{c'}}\fatsemi\cntc_{s}}{\cntc_s'}}\cpar\rcntc_q$$
 we indeed know that the dependency in the contract is toward $c'$ because of \rulename{TR-Invoc} or \rulename{TR-Process}.
Hence $\rcntc=\mf{l(\nt{destiny})}{\mpair{a(\ncog)}{\csync{f}{\cdeps{c}{c'}}\fatsemi\cntc_{s}}{\cntc_s'}}\cpar\rcntc'$.
It follows, with the lam transformation rule \rulename{L-GAinvk}, that $(c,c')$ is in $\LM$.
\qed
\end{proof}

% 4. Prove that the semantics of \rcntc includes the semantics of \rcntc' when \rcntc\cls\rcntc'

\begin{proposition}\label{prop:ap3}
Given two runtime contracts $\rcntc$ and $\rcntc'$ with $\rcntc\cls\rcntc'$, we have that $\csem{n}{\rcntc'}\Subset\csem{n}{\rcntc}$.
\end{proposition}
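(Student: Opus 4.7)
}
I would proceed by structural induction on the derivation of $\rcntc \cls \rcntc'$. The congruence cases (rule \rulename{LS-Global} together with the implicit congruence through $\fatsemi$, $+$, $\rfloor$ and the extension $\addition$) reduce immediately to the inductive hypothesis once we invoke the monotonicity of the lam operators with respect to $\Subset$, which is already established at the end of Section~\ref{sec:lam-lamop}. Thus the real content of the proof lies in the base cases of Figure~\ref{fig:sr:lr}.

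The routine simplification cases \rulename{LS-Fut}, \rulename{LS-Empty}, \rulename{LS-Delete}, \rulename{LS-Plus} and \rulename{LS-DepNull} can be discharged directly from the definition of $\csem{n}{\cdot}$: in each of them the right-hand side is obtained from the left by either erasing occurrences of $\pinull$, forgetting one branch of a $+$, or substituting a dead future by $\pinull$, all of which can only shrink the set of relations produced when the pair is flattened by $\fsimpl{\cdot}$ and then interpreted via the lam transformation rules of Figure~\ref{fig:lamTrans}. In particular, the identity-renaming witness $\kappa$ is enough.

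The genuinely hard cases are the method-unfolding rules \rulename{LS-Bind}, \rulename{LS-AInvk}, \rulename{LS-SInvk}, and \rulename{LS-RSInvk}, because they introduce fresh cog names on the right-hand side that do not occur on the left. For these I plan to prove a key auxiliary lemma: if $\cct(\C.\m) = \mcontract{}{\frr_{\C.\m}(\vect{\frs_{\C.\m}})}{\pairl{\cntc_{\C.\m}}{\cntc_{\C.\m}'}}{\frr'_{\C.\m}}$ and $\act_{[n]}(\C.\m) = \lambda\vect{c_{\C.\m}}.\pairl{\lafsa_{\C.\m}}{\lafsa_{\C.\m}'}$, then for any instantiation of the formal parameters by $\frr, \vect{\frr}$ with cog of $\frr$ equal to $c$, there is a renaming of the free cog names in the method body to some fresh $\vect{b'}$ such that
\[
(\cntc_{\C.\m}\fatsemi\cntc_{\C.\m}')(\act_{[n]})_{c}\subst{\vect{b'}}{\vect{b}}\repl{\frr,\vect{\frr}}{\frr_{\C.\m},\vect{\frs_{\C.\m}}} \;\Subset\; \bigl(\lambda\vect{c_{\C.\m}}.\pairl{\lafsa_{\C.\m}}{\lafsa_{\C.\m}'}\bigr)\repl{\frr,\vect{\frr}}{\frr_{\C.\m},\vect{\frs_{\C.\m}}}.
\]
This holds precisely because $\act_{[n]}$ is the fixpoint obtained after saturation at $n$, so one additional application of the transformation step of Figure~\ref{fig:lamTrans} to the body produces a pair that is still $\Subset$-below the fixpoint; the renaming absorbs the fresh names generated on the right-hand side of \rulename{LS-Bind}. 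The remaining invocation rules follow similarly, with the added observation that \rulename{LS-AInvk} merely breaks a future out into a parallel binder, which is invisible to $\fsimpl{\cdot}$ because the reduct normalises back to the same flattened form, while \rulename{LS-SInvk} and \rulename{LS-RSInvk} additionally record a $(c,c)^{\aa}$ or $(c,c')$ dependency which is exactly the dependency added by rules \rulename{L-SInvk} and \rulename{L-RSInvk} in the lam transformation.

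The main obstacle will be the bookkeeping around fresh-name creation during saturation. Because saturation forbids renaming of the free cog names $\vect{b}$ once approximant $n$ is reached, one must carefully verify that the fresh names produced by the contract reduction on the right-hand side of \rulename{LS-Bind} can always be matched to names already present in the fixpoint by a suitable injective $\kappa$, possibly by collapsing several fresh names onto the saturated ones. Once this bookkeeping lemma is in place, the rest of the induction goes through by straightforward application of the monotonicity of the lam operations and of substitution with respect to $\Subset$.
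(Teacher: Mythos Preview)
Your plan is correct and follows the same route as the paper: a case analysis on the \rulename{LS-*} rules, with the simplification rules handled directly, \rulename{LS-AInvk} and \rulename{LS-DepNull} reduced to the definition of $\rewritet$ (your observation that they are ``invisible to $\fsimpl{\cdot}$''), and \rulename{LS-Bind}, \rulename{LS-SInvk}, \rulename{LS-RSInvk} mapped to the corresponding lam-transformation rules \rulename{L-AInvk}, \rulename{L-SInvk}, \rulename{L-RSInvk}. Your auxiliary lemma is exactly the fixpoint property of $\act_{[n]}$ that the paper invokes implicitly; your write-up is in fact more explicit than the paper's one-line-per-case proof. One small remark: your worry about ``collapsing several fresh names onto the saturated ones'' with an \emph{injective} $\kappa$ is phrased inconsistently, but in practice a single \rulename{LS-Bind} step creates exactly one tuple $\vect{c'}$ of fresh names, which the fixpoint property lets you map injectively to the free names $\vect{b_{\D.\n}}$ already present in $\act_{[n]}(\D.\n)$, so no actual collapsing is needed.
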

\begin{proof}
We refer to the rules \rulename{LS-*} of the later-stage relation defined in
 Figure~\ref{fig:sr:lr} and to the lam transformation rules  \rulename{L-*} defined in Figure~\ref{fig:lamTrans} . 
The result is clear for the rules \rulename{LS-Global}, \rulename{LS-Fut}, \rulename{LS-Empty}, \rulename{LS-Delete} and \rulename{LS-Plus}.
The result for the rule \rulename{LS-Bind} is a consequence of \rulename{L-AInvk}.
The result for the rule \rulename{LS-AInvk} is a consequence of the definition of $\rewritet$.
The result for the rule \rulename{LS-SInvk} is a consequence of the definition of $\rewritet$ and \rulename{L-SInvk}.
The result for the rule \rulename{LS-RSInvk} is a consequence of the definition of $\rewritet$ and \rulename{L-RSInvk}.
Finally, the result for the rule \rulename{LS-DepNull} is a consequence of the definition of $\rewritet$.
\qed
\end{proof}

% 5. Concludes that \Gamma\typep P\typed \rcntc: if there is a deadlock in P, then there is a loop in the semantics of \rcntc
We can finally conclude by putting all these results together:
\begin{theorem}~\label{thm:analysis-correct}
If a program $P$ has a deadlock at runtime, then its {\em abstract semantics saturated at n} contains a circle.
\end{theorem}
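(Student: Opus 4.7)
The plan is to chain the three propositions~\ref{prop:ap1},~\ref{prop:ap2},~\ref{prop:ap3} together with the iterated subject-reduction theorem~\ref{th_subjred1}, using the fact, stated right after Definition~\ref{def.locks}, that any deadlocked configuration has a circularity in the \get-closure of its dependency set. More precisely, I will unfold the hypothesis ``$P$ has a deadlock at runtime'' into the existence of a reduction
$$
\nt{cn}_1 \lred{} \nt{cn}_2 \lred{} \cdots \lred{} \nt{cn}_n
$$
starting from the initial configuration of $P$ and ending in a deadlocked configuration $\nt{cn}_n$, and follow the diagram sketched informally at the beginning of the appendix.

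First I would apply Theorem~\ref{th:wt_init} to obtain a typing $\Delta_1 \typep_R \nt{cn}_1 \typed \rcntc_1$ where (modulo the substitution $\sigma$ that solves the generated constraints) the runtime contract $\rcntc_1$ is exactly $\sigma(\mf{f_{\it start}}{\mpair{{\rm start}}{\cntc}{\cntc'}})$. Then, iterating Theorem~\ref{th_subjred1} along the reduction, I obtain runtime typings $\Delta_i \typep_R \nt{cn}_i \typed \rcntc_i$ together with injective cog renamings $\imath_i$ such that $\imath_i(\rcntc_i) \cls \rcntc_{i+1}$. Applying Proposition~\ref{prop:ap3} to each later-stage step and composing gives a single injective cog renaming $\imath = \imath_{n-1}\circ\cdots\circ\imath_1$ with $\csem{n}{\rcntc_n} \Subset_{\imath^{-1}} \csem{n}{\rcntc_1}$; by Proposition~\ref{prop:ap1} this is in turn $\Subset$-below the abstract semantics $\sigma(\cntc)(\actn{n})\cseq\sigma(\cntc')(\actn{n})$ saturated at $n$.

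Next I transport the deadlock into a circularity in the abstract semantics. Since $\nt{cn}_n$ is deadlocked, it has a circularity in the \get-closure of its dependency set $A_n$, by the proposition just after Definition~\ref{def.obj-circularity}; that is, $A_n^{\get}$ contains a pair $(c,c)$. Write $\csem{n}{\rcntc_n} = \pairl{\LM_n}{\LM_n'}$ and apply Proposition~\ref{prop:ap2} (extended, if needed, to the \await~case by the same argument based on rules \rulename{TR-Await-Runtime} and \rulename{L-GAinvk}) to conclude that every \get- and \await-dependency of $\nt{cn}_n$ appears in some relation $L \in \LM_n\cup\LM_n'$. Hence $L$ contains a circularity in the sense of Definition~\ref{def.lam-circularity}. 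The chained $\Subset$ inclusions (together with the injective renaming $\imath$) preserve every such relation up to injective renaming of cog names, and injective renamings trivially preserve circularities; therefore the abstract semantics of $P$ saturated at $n$ contains a relation with a circularity, which is exactly what Definition~\ref{def.lam-circularity} calls ``contains a circle''.

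The main obstacle I anticipate is the careful bookkeeping of the cog renamings $\imath_i$ along the reduction, because the $\Subset$-relation is defined only \emph{up to} an injective renaming, while circularities are a property of a concrete relation. The key observation that makes this harmless is that an injective renaming carries a pair $(c,c)$ to a pair $(\imath(c),\imath(c))$, so a circularity in a smaller lam is witnessed by a (possibly renamed) circularity in the larger one; this is routine but must be spelled out. A secondary point that needs attention is that Proposition~\ref{prop:ap2} as stated covers only the \get-dependencies, so one has to extend its proof to \await-dependencies in order to cover clauses (2) of Definition~\ref{def.locks} involving \await-synchronisations; this extension follows the very same pattern, using \rulename{TR-Await-Runtime} and \rulename{L-GAinvk} in place of \rulename{TR-Get-Runtime}.
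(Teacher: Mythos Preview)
Your proposal is correct and follows exactly the approach the paper intends: the paper's own proof is the single sentence ``This property is a direct consequence of Propositions~\ref{prop:ap1},~\ref{prop:ap2} and~\ref{prop:ap3}'', and the diagram at the beginning of the appendix is precisely the chain you spell out (initial typing via Theorem~\ref{th:wt_init}, iterated subject reduction via Theorem~\ref{th_subjred1}, then the three propositions). Your two caveats --- the bookkeeping of the injective renamings $\imath_i$ across the $\Subset$ chain, and the need to cover the \await-dependencies in Proposition~\ref{prop:ap2} --- are genuine details that the paper glosses over, and your sketched resolutions (injective renamings preserve circularities; repeat the argument with \rulename{TR-Await-Runtime} and \rulename{L-GAinvk}) are the right ones.
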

\begin{proof}
This property is a direct consequence of Propositions~\ref{prop:ap1},~\ref{prop:ap2} and~\ref{prop:ap3}.\qed
\end{proof}

\section{Properties of Section~\ref{sec.mutanalysis}}\label{sec:mutanalysis-proof}
%!TEX root = SoSyM.tex

The next theorem states the correctness of our model-checking
technique.

Below we write  $\abstractsemantics{\nt{cn}}=(\sem{\pairl{\cntcp_{n}}{\cntcp_{n}'}_{\rm start}})^\flat$, if $\Delta\vdash_R \nt{cn}:\pairl{\cntcp_1}{\cntcp_1'}$ and $n$ is the order of 
$\pairl{\cntcp_1}{\cntcp_1'}_{\rm start}$.

\begin{theorem}\label{thm:mutanalysis-correct}
 Let  $(\ct , \{ \vect{T \ x \sseq} s \}, \cct)$ be a \coreABS{} program
    and $\nt{cn}$ be a configuration of its operational semantics.
    \begin{enumerate}
    \item If $\nt{cn}$ has a circularity, then a circularity occurs in
      $\abstractsemantics{\nt{cn}}$;
    \item if $\nt{cn}\to \nt{cn}'$ and $\abstractsemantics{\nt{cn}'}$
      has a circularity, then a circularity is already present in
      $\abstractsemantics{\nt{cn}}$;
    \item let $\imath$ be an injective renaming of cog names;
      $\abstractsemantics{\nt{cn}}$ has a circularity if and only if
      $\abstractsemantics{\imath(\nt{cn})}$ has a circularity.
    \end{enumerate}
\end{theorem}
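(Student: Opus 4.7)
The plan is to mirror the three-step structure used for Theorem~\ref{thm:analysis-correct} in Appendix~\ref{sec:analysis-proof}, but replacing the lam-transformation semantics with the operational contract semantics of Figure~\ref{fig:contred} together with the invariance Theorem~\ref{thm.invariance}. Concretely, I first introduce the dependency set generated by a runtime contract under the reduction $\lred{}$ on contract pairs, then prove three bridging propositions analogous to Propositions~\ref{prop:ap1}--\ref{prop:ap3}, and finally combine them.

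For item 1, I would first show that whenever $\Delta\vdash_R \nt{cn}:\concntc$ and $\nt{cn}$ exhibits a dependency $(c,c')$ in the sense of Definition~\ref{def.obj-circularity}, the simplified runtime contract $\fsimpl{\concntc}$ contains either a term $f\seqpoint(c,c')$ or a term $f\seqpoint(c,c')^\aa$ where $f$ is the future pointing at the relevant process. This follows by inspection of rules \rulename{TR-Get}, \rulename{TR-Get-Runtime}, \rulename{TR-Await}, \rulename{TR-Await-Runtime} together with the $\rewritet$-normalisation of Figure~\ref{def.fsimpl}, which replaces the future by the full contract pair it denotes. Then a straightforward structural induction on the operational semantics of contract pairs shows that after sufficiently many reductions every method-invocation token $\C!\m\,\frr(\bar{\frr})\rightarrow\frr'\seqpoint (c,c')^{[\aa]}$ is expanded and contributes $(c,c')^{[\aa]}$ into the corresponding lam pair via $\sem{\cdot}$; the bound $k$ of Theorem~\ref{thm.invariance} is precisely what guarantees that ``sufficiently many'' is finite. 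Hence the dependency $(c,c')$ sits in $(\sem{\pairl{\cntcp_k}{\cntcp_k'}_{\rm start}})^\flat=\abstractsemantics{\nt{cn}}$, and a circularity in $\nt{cn}$ translates into a circularity in $\abstractsemantics{\nt{cn}}$ by closing under composition with $(c'',c)^{[\aa]}$ pairs similarly generated.

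For item 2, I would use the subject-reduction Theorem~\ref{th_subjred1}: from $\nt{cn}\to\nt{cn}'$ I obtain $\Delta\vdash_R \nt{cn}:\concntc$, $\Delta'\vdash_R \nt{cn}':\concntc'$ and a renaming $\imath$ with $\imath(\concntc)\cls\concntc'$. The heart of the argument is then a lemma stating that $\cls$ is a ``simulation'' with respect to the operational semantics of contract pairs up to $\sem{\cdot}^\flat$: if $\rcntc\cls\rcntc'$ and $\rcntc'$ reduces in $n$ steps to a state whose $\sem{\cdot}^\flat$ has a circularity, then $\rcntc$ reduces in some number of steps to a state whose $\sem{\cdot}^\flat$ has the same circularity (modulo renaming). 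This is proved by a case analysis on the rules \rulename{LS-Bind}, \rulename{LS-AInvk}, \rulename{LS-SInvk}, \rulename{LS-RSInvk}, \rulename{LS-DepNull}, \rulename{LS-Fut}, \rulename{LS-Empty}, \rulename{LS-Delete} and \rulename{LS-Plus}, observing that each rule either (a)~matches exactly a reduction of Figure~\ref{fig:contred}, (b)~removes dead code ($\pinull$, resolved futures, a branch of $+$), or (c)~in-lines the body of a method contract in the same way \rulename{Red-SInvk}/\rulename{Red-AInvk}/\rulename{Red-GAInvk} do. Once the lemma is in place, one applies Theorem~\ref{thm.invariance} to conclude that a circularity inside the order-bounded prefix of the evaluation of $\concntc'$ is already present inside the order-bounded prefix of the evaluation of $\imath(\concntc)$, and then item 3 is used to transport it back. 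Item 3 itself is immediate, since $\sem{\cdot}^\flat$ is equivariant under renaming of cog names and circularities are stable under any injective renaming.

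I expect the main obstacle to be the simulation lemma underlying item 2: in particular, handling \rulename{LS-Bind} correctly requires matching the fresh cog names produced by the semi-unification-driven substitutions $\subst{\overline{b_{\D,\n}'}}{\overline{b_{\D,\n}}}$ of \rulename{Red-*Invk} with those introduced by the later-stage instantiation, and ensuring that the resulting mutation ordering $k$ is still large enough so that the flashback argument of Theorem~\ref{thm.mainthm} applies uniformly on both sides of $\cls$. A subsidiary difficulty is that $\cls$ is a congruence that may reshuffle parallel and sequential compositions in ways that do not literally correspond to single reduction steps, so the simulation has to be stated up to a weak-step variant ``$\cls$ followed by some number of $\lred{}$ steps'' and closure under those is proved by nesting inductions on the structure of the lam operators of Figure~\ref{fig:lamOp}. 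Once this is settled, combining items 1--3 yields the theorem in the same way items of Theorem~\ref{thm:analysis-correct} are combined in Appendix~\ref{sec:analysis-proof}.
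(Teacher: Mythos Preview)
Your proposal is correct and follows essentially the same three-part structure and argument as the paper's own proof: item~1 by tracing dependencies through the runtime typing rules into the contract pair, item~2 by subject reduction followed by a case analysis on the later-stage rules \rulename{LS-*} matching them against the contract reduction rules \rulename{Red-*} and the rewriting $\rewritet$, and item~3 by equivariance under injective renaming.

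Where you differ from the paper is mainly in rigour rather than substance. You package item~2 as an explicit simulation lemma and you flag two technical concerns---the alignment of fresh cog names across \rulename{LS-Bind} and \rulename{Red-*Invk}, and the possibility that the evaluation order $k$ of $\concntc$ and of $\concntc'$ may differ---that the paper's proof sketch leaves implicit. The paper handles item~2 by a bare list of correspondences (e.g.\ ``\rulename{LS-Bind} is a consequence of \rulename{Red-AInvk}'') without spelling out why the bound from Theorem~\ref{thm.invariance} is uniform across $\cls$, so your more careful framing is an improvement rather than a deviation. Your worry about congruence reshuffling under $\cls$ is slightly overstated: the later-stage relation is generated by local rules plus \rulename{LS-Global}, so a single $\cls$ step touches one subterm and the weak-simulation you sketch collapses to the one-step correspondence the paper asserts.
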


\begin{proof}
   To demonstrate item 1, let 
   \[
   \abstractsemantics{\nt{cn}} =(\sem{\pairl{\cntcp_{n}}{\cntcp_{n}'}_{\rm start}})^\flat\; .
   \]
   We prove that every dependencies occurring in $\nt{cn}$ is also contained in one state of $(\sem{\pairl{\cntcp_{n}}{\cntcp_{n}'}_{\rm start}})^\flat$.
By Definition~\ref{def.obj-circularity}, if  $\nt{cn}$  has a dependency $(\obj,\obj')$ 
then it contains  $\nt{cn}'' = \nt{ob}(o,a,\{l | x=e.\get;s\},q) \quad \nt{fut}(f,\bot)$,
where $f=\feval{e}{(a+l)}$, $a(\nt{cog})=c$ and there is $\nt{ob}(o',a',\{l' | s'\},q')\in \nt{cn}$ such that
$a'(\nt{cog})=c'$ and $l'(\text{destiny})=f$.
By the typing rules, the contract of  
$\nt{cn}'$ is $f\seqpoint(\obj,\obj')\fatsemi\cntc_{s}$, where, by typing rule \rulename{T-Configurations},
 $f$ is actually replaced by a $\C!\m\,\frr(\bar{\frs})\rightarrow\frs$ produced by a 
concurrent $\nt{invoc}$ configuration, or by the contract pair $\pairl{\cntc_\m}{\cntc'_\m}$ corresponding to the method body.

As a consequence 
$\abstractsemantics{\nt{cn}''} = (\sem{\CP[\pairl{\cntc''\addition (c,c')}{\cntc'''}_{c}]_{c''}})^\flat$.

Let $\abstractsemantics{\nt{ob}(o',a',\{l' | s'\},q') }=(\sem{\CP'[\pairl{\cntc_\m}{\cntc'_\m}_{c}]_{c''}})^\flat$, with $\feval{e}{(a+l)}=l'(\text{destiny})$, then 
\[
\begin{array}{l}
\abstractsemantics{
\nt{ob}(o',a',\{l' | s'\},q')\quad \nt{cn}''} = 
\\
\qquad \qquad \qquad (\sem{\CP[\pairl{\cntc''\addition (c,c')}{\cntc'''}_{c}]_{c''}\rfloor\CP'[\pairl{\cntc_\m}{\cntc'_\m}_{c'}]_{c'''}})^\flat \; .
\end{array}
\]
In general, if $k$ dependencies occur in a state $\nt{cn}$, then
there is $\nt{cn}''\subseteq\nt{cn}$ that collects all the tasks manifesting
the dependencies. % such that
\[
\begin{array}{l}
\abstractsemantics{\nt{cn}''} \; =
\\
\quad 
(\sem{\CP_1[\pairl{\cntc''_1\addition (c_1,c'_1)}{\cntc'''_1}_{c_1}]_{c''_1}\rfloor\CP'_1[\pairl{\cntc_{\m_1}}{\cntc'_{\m_1}}_{c'_1}]_{c'''_1}})^\flat
 \\
\qquad \rfloor  \,\cdots
\rfloor\,(\sem{\CP_k[\pairl{\cntc''_k\addition (c_k,c'_k)}{\cntc'''_k}_{c_k}]_{c''_k}\rfloor\CP'_k[\pairl{\cntc_{\m_k}}{\cntc'_{\m_k}}_{c'_k}]_{c'''_k}})^\flat
\end{array}
\]

By definition of $\rfloor$ composition in Section~\ref{sec.contractanalysis}, the initial state contains all the above pairs $(\obj_i,\objb_i)$.
\bigskip

 Let us prove the item 2. We show that the transition $\nt{cn} \lred{} \nt{cn}'$ does not 
 produce new dependencies. That is, the set of dependencies in the states of $\abstractsemantics{\nt{cn}'}$ is equal or smaller than the set of dependencies in the states of $\abstractsemantics{\nt{cn}}$.

 By Theorem~\ref{th_subjred1}, if $\Delta\vdash_R\nt{cn}:\concntc$
 then   $\Delta'\vdash_R\nt{cn}':\concntc'$, with
 $\concntc\cls\concntc'$. 
%The argument is by induction on the proof tree of $\concntc\cls\concntc'$.
 We refer to the rules \rulename{LS-*} of the later-stage relation defined in
 Figure~\ref{fig:sr:lr} and to the contract reduction rules
 \rulename{Red-*} defined in Figure~\ref{fig:contred} . 
The result is clear for the rules \rulename{LS-Global}, \rulename{LS-Fut}, \rulename{LS-Empty}, \rulename{LS-Delete} and \rulename{LS-Plus}.
The result for the rule \rulename{LS-Bind} is a consequence of \rulename{Red-AInvk}.
The result for the rule \rulename{LS-AInvk} is a consequence of the definition of $\rewritet$.
The result for the rule \rulename{LS-SInvk} is a consequence of the definition of $\rewritet$ and \rulename{Red-SInvk}.
The result for the rule \rulename{LS-RSInvk} is a consequence of the definition of $\rewritet$ and \rulename{Red-RSInvk}.
Finally, the result for the rule \rulename{LS-DepNull} is a consequence of the definition of $\rewritet$.

\bigskip

 Item 3 is obvious because circularities are preserved by injective renamings of cog 
names.
\qed
\end{proof}

%%% Local Variables:
%%% mode: latex
%%% TeX-master: "SoSyM"
%%% End:

\end{document}